\tikzset{place/.style={circle,thick,minimum size=4mm,draw,fill=mygrey!15!white},
transitionh/.style={rectangle,thick,fill=black,minimum width=6mm,inner ysep=1pt},
transition/.style={rectangle,thick,fill=black,minimum height=6mm,inner xsep=1pt},
arc/.style={->,>=stealth'}}
\def\expandafter\tikz@node@finish\expandafter{\expandafter\endgroup\expandafter\endpgfonlayer\tikz@node@finish}%
\tikzset{%
glow/.style={%
preaction={#1, draw, line cap=round, line join=round, line width=0.5pt, opacity=0.04, on layer=back,
preaction={#1, draw, line cap=round, line join=round, line width=1.0pt, opacity=0.04, on layer=back,
preaction={#1, draw, line cap=round, line join=round, line width=1.5pt, opacity=0.04, on layer=back,
preaction={#1, draw, line cap=round, line join=round, line width=2.0pt, opacity=0.04, on layer=back,
preaction={#1, draw, line cap=round, line join=round, line width=2.5pt, opacity=0.04, on layer=back,
preaction={#1, draw, line cap=round, line join=round, line width=3.0pt, opacity=0.04, on layer=back,
preaction={#1, draw, line cap=round, line join=round, line width=3.5pt, opacity=0.04, on layer=back,
preaction={#1, draw, line cap=round, line join=round, line width=4.0pt, opacity=0.04, on layer=back,
preaction={#1, draw, line cap=round, line join=round, line width=4.5pt, opacity=0.04, on layer=back,
preaction={#1, draw, line cap=round, line join=round, line width=5.0pt, opacity=0.04, on layer=back,
preaction={#1, draw, line cap=round, line join=round, line width=5.5pt, opacity=0.04, on layer=back,
preaction={#1, draw, line cap=round, line join=round, line width=6.0pt, opacity=0.04, on layer=back,
}}}}}}}}}}}}}}
\pgfplotsset{compat=newest}
\pgfplotsset{plot coordinates/math parser=false}
\newlength\figureheight
\newlength\figurewidth 
\patchcmd{\algocf@Vline}{\vrule}{\vrule\vspace{-.32em}}{}{}
\DeclareMathAlphabet{\mathcal}{OMS}{zplm}{m}{n}
\definecolor{myblue}{RGB}{0, 101, 202}
\definecolor{mygreen}{RGB}{130, 180, 0}
\definecolor{myred}{RGB}{197, 14, 31}
\definecolor{mypurple}{RGB}{128, 0, 128}
\definecolor{myyellow}{RGB}{204, 204, 0}
\definecolor{mygrey}{RGB}{105, 105, 105}
\colorlet{LightGray}{mygrey!15!white}
\newcommand{\myblue}[1]{\textbf{\textcolor{myblue}{#1}}}
\newcommand{\dint}[1]{\left\llbracket#1\right\rrbracket} %
\newcommand{\D}{\mathcal{D}}
\renewcommand{\S}{\mathcal{S}}
\newcommand{\X}{\mathcal{X}}
\newcommand{\N}{\mathbb{N}}
\newcommand{\No}{\mathbb{N}_0}
\renewcommand{\Z}{\mathbb{Z}}
\renewcommand{\R}{\mathbb{R}}
\renewcommand{\Q}{\mathbb{Q}}
\newcommand{\Rmax}{{\R}_{\normalfont\fontsize{7pt}{11pt}\selectfont\mbox{max}}}
\newcommand{\Rmin}{{\R}_{\normalfont\fontsize{7pt}{11pt}\selectfont\mbox{min}}}
\newcommand{\Rbar}{\overline{\R}}
\newcommand{\zor}[1]{{#1}}
\newcommand{\graph}{\mathcal{G}}
\newcommand{\nonegset}{\Gamma} %
\newcommand{\places}{\mathcal{P}}
\newcommand{\transitions}{\mathcal{T}}
\newcommand{\marking}{m}
\newcommand{\arcs}{\mathcal{A}}
\newcommand{\nodes}{\mathcal{N}}
\newcommand{\myshift}{\textup{shift}}
\newcommand{\myweight}{\textup{weight}}
\newcommand{\mylshift}{\textup{Lshift}} %
\newcommand{\myrshift}{\textup{Rshift}} %
\newcommand{\myup}{\mbox{source}}
\newcommand{\mydown}{\mbox{target}}
\newcommand{\myheight}{\mbox{base}}
\newcommand{\mylen}{\mbox{len}}
\newcommand{\splus}{%
  \DOTSB\mathop{\mathpalette\mattos@splus\relax}\slimits@
}
\newcommand\mattos@splus[2]{%
  \vcenter{\hbox{%
    \sbox\z@{$#1\oplus$}%
    \resizebox{!}{0.9\dimexpr\ht\z@+\dp\z@}{\raisebox{\depth}{$\m@th#1\boxplus$}}%
  }}%
  \vphantom{\oplus}%
}
\newcommand{\stimes}{%
  \DOTSB\mathop{\mathpalette\mattos@stimes\relax}\slimits@
}
\newcommand\mattos@stimes[2]{%
  \vcenter{\hbox{%
    \sbox\z@{$#1\otimes$}%
    \resizebox{!}{0.9\dimexpr\ht\z@+\dp\z@}{\raisebox{\depth}{$\m@th#1\boxtimes$}}%
  }}%
  \vphantom{\otimes}%
}
\newcommand*{\bigsplus}{\DOTSB\mathop{\mathpalette\big@boxplus\relax}\slimits@}
\newcommand{\big@boxplus}[2]{%
  \vcenter{%
    \m@th\bigbox@thickness{#1}%
    \sbox\z@{$#1\bigoplus$}%
    \dimen@=\ht\z@ \advance\dimen@\dp\z@
    \hbox{%
      \setlength{\unitlength}{\dimen@}%
      \begin{picture}(1,1)
      \polyline(0.1,0.1)(0.9,0.1)(0.9,0.9)(0.1,0.9)(0.1,0.1)(0.5,0.1)
      \polyline(0.5,0.1)(0.5,0.9)
      \polyline(0.1,0.5)(0.9,0.5)
      \end{picture}%
    }%
  }%
}
\newcommand{\bigbox@thickness}[1]{%
  \ifx#1\displaystyle
    \linethickness{0.2ex}%
  \else
    \ifx#1\textstyle
      \linethickness{0.16ex}%
    \else
      \ifx#1\scriptstyle
        \linethickness{0.12ex}%
      \else
        \linethickness{0.1ex}%
      \fi
    \fi
  \fi
}
\newcommand{\wP}{\mathsf{p}}
\newcommand{\wT}{\mathsf{t}}
\newcommand{\wN}{\mathsf{n}}
\renewcommand{\epsilon}{\varepsilon}
\newsavebox{\rightdiv}
\sbox\rightdiv{\tikz[anchor=south,baseline]{\footnotesize\node[inner sep=0pt,minimum height=.75em] at (0,0) (A){$\circ$};\node[inner sep=0pt,minimum height=.75em] at (0,0) {$/$};}}
\newsavebox{\rightmindiv}
\sbox\rightmindiv{\tikz[anchor=south,baseline]{\footnotesize\node[inner sep=0pt,minimum height=.75em] at (0,0) (A){$\bullet$};\node[inner sep=0pt,minimum height=.75em] at (0,0) {$/$};}}
\newsavebox{\leftdiv}
\sbox\leftdiv{\tikz[anchor=south,baseline]{\footnotesize\node[inner sep=0pt,minimum height=.75em] at (0,0) (A){$\circ$};\node[inner sep=0pt,minimum height=.75em] at (0,0) {$\setminus$};}}
\newsavebox{\leftmindiv}
\sbox\leftmindiv{\tikz[anchor=south,baseline]{\footnotesize\node[inner sep=0pt,minimum height=.75em] at (0,0) (A){$\bullet$};\node[inner sep=0pt,minimum height=.75em] at (0,0) {$\setminus$};}}
\newcommand{\eqtop}[1]{\mathrel{\overset{{\mbox{\normalfont\scriptsize #1}}}{=}}}
\newcommand{\ifftop}[1]{\mathrel{\overset{{\mbox{\normalfont\scriptsize #1}}}{\Leftrightarrow}}}
\newcommand{\ie}{i.e.,~}
\newcommand{\eg}{e.g.,~}
\newcommand{\qedlineend}{\makebox[0pt][l]{\makebox[\linegoal][r]{\qedhere}}}
\declaretheoremstyle[
  bodyfont=\normalfont,
  qed=\blacksquare,
]{theoremlike}
\declaretheoremstyle[
  bodyfont=\normalfont,
  qed=\lozenge,
]{definitionlike}
\declaretheorem[
  style=theoremlike
  ]{theorem,lemma,corollary,proposition}
\declaretheorem[%
  style=definitionlike%
  ]{definition,remark,example,assumption}
\begin{document}

\title[Infinite precedence graphs for consistency verification in P-TEGs]{Infinite precedence graphs for consistency verification in P-time event graphs}

\author*[1]{\fnm{Davide} \sur{Zorzenon}}\email{zorzenon@control.tu-berlin.de}

\author[1,2]{\fnm{J\"{o}rg} \sur{Raisch}}\email{raisch@control.tu-berlin.de}

\affil[1]{\orgdiv{Control Systems Group}, \orgname{Technische Universit\"{a}t Berlin}, \orgaddress{\country{Germany}}}

\affil[2]{\orgdiv{Science of Intelligence}, \orgname{Research Cluster of Excellence}, \orgaddress{\city{Berlin}, \country{Germany}}}

\abstract{
Precedence constraints are inequalities used to model time dependencies.
In 1958, Gallai proved that a finite system of precedence constraints admits solutions if and only if the corresponding precedence graph does not contain positive-weight circuits.
We show that this result extends naturally to the case of infinitely many constraints.
We then analyze two specific classes of infinite precedence graphs -- $\N$-periodic and ultimately periodic graphs -- and prove that the existence of solutions of their related constraints can be verified in strongly polynomial time.
The obtained algorithms find applications in P-time event graphs, which are a subclass of P-time Petri nets able to model production systems \zor{under cyclic schedules} where tasks need to be performed within given time windows.
}

\keywords{Precedence graphs, ultimately periodic graphs, P-time event graphs, max-plus algebra}

\maketitle

\section{Introduction}

In many production systems, ranging from the food industry to printed circuit boards manufacturing, the violation of temporal specifications can result in irreparable damage of the final product. 
When the logical sequence of operations to be performed is cyclically repeating, such systems can be modeled by P-time event graphs (P-TEGs).
P-TEGs are ordinary Petri nets where time intervals are associated to places, and each place has exactly one upstream and one downstream transition.

Failure to meet a temporal specification in the real system corresponds to a token remaining in a place of the P-TEG for longer or shorter than prescribed by the associated time-window constraint.
The primary goal of this paper is to study the consistency property in P-TEGs.
We say that a P-TEG is consistent if it admits an infinite sequence of transition firings that do not violate any constraint.
In a manufacturing system modeled by a consistent P-TEG, infinitely many products can, in principle, be processed without time-window constraint violations.
This is an interesting property in its own right, but the main motivation for studying consistency is that other interesting questions related to P-TEGs (and more complex systems) cannot be addressed without a thorough understanding of this property.
\zor{To make an analogy, just as} stability in standard dynamical systems can be analyzed only after confirming the existence of trajectories, the optimal throughput of a production system modeled by a P-TEG can be determined only once its consistency has been verified.

\subsection{Literature review}\label{su:literature_review}

The consistency verification problem has been considered by several authors.

Given a P-TEG with $n$ transitions and at most one initial token in each place,\footnote{Any P-TEG can be transformed into another, behaviorally equivalent one where this condition is met, at the cost of increasing the number of transitions. See \Cref{su:marking_transformation} for details.} the problem of checking the existence of acceptable trajectories of a given \emph{finite} length $h$ was proven to be solvable in time $O(hn^3)$ \zor{by Declerck} in~\cite{5628259}.
\zor{The same paper also showed that consistency can sometimes be verified or falsified in finite time by studying a certain sequence of matrices (presented in \Cref{le:formula_Pi}); if the sequence converges after finitely many iterations, indeed, then the corresponding P-TEG is consistent, whereas if some entries of the matrices in sequence diverge to $+\infty$, then the P-TEG is not consistent.
}

\zor{The paper} \cite{ZORZENON202219} \zor{introduced a property called} weak consistency.
A P-TEG is called weakly consistent if it admits trajectories of any finite length.
Interestingly, this property does not imply consistency, as some P-TEGs can admit finite trajectories of any length, but no infinite trajectory.\footnote{An example of weakly consistent but not consistent P-TEG is given in Figure~\ref{fi:P-TEG_example} for parameters $\alpha=-5$, $\beta=4$.}
\zor{For these P-TEGs, the entries from the sequence of matrices defined in \cite{5628259} neither converge nor diverge to $+\infty$ in finitely many iterations.
Therefore, the important result from \cite{5628259} is not strong enough to decide consistency in finite time.}
In~\cite{ZORZENON202219}, it was shown that weak consistency can be verified in strongly polynomial time $O(n^9)$.\footnote{\zor{The worst-case time complexity for checking weak consistency can be lowered to $O(n^7)$. Faster algorithms are not known by the authors of this paper.}}

The consistency verification problem was essentially solved for the case where upper bound constraints appear only in places with no initial tokens in~\cite[Corollary 2.3]{iteb2006control}.
Building on the same approach based on formal power series, in the PhD thesis~\cite{brunsch2014modeling}, the consistency problem was declared solved in its entirety.
However, the solution proposed is invalid due the presence of a technical error.\footnote{Using the notation of~\cite{brunsch2014modeling}, the dual product $\odot$ does \textit{not} distribute over the infimum $\wedge$ in the dioid $\mathcal{M}_{in}^{ax}\dint{\gamma,\delta}$. For a counterexample, one can check that, for $a=\gamma^3\delta^1$, $b=\gamma^5\delta^3$, $c=\gamma^1\delta^2\oplus \gamma^3\delta^5$, $(a\wedge b)\odot c = \gamma^6\delta^3\oplus \gamma^8\delta^6 \neq \gamma^6\delta^5\oplus \gamma^8\delta^6 =  (a\odot c) \wedge (b \odot c)$. This, among other results, invalidates the method given in Section 5.3 of~\cite{brunsch2014modeling} to detect unfeasible constraints.}
Other only necessary and only sufficient conditions for consistency were given, \eg in~\cite{komenda2011application,zorzenon2020bounded,vspavcek2021analysis}.

In \cite{lee2005extended}, Lee and Park introduced negative event graphs (NEGs), which are a slightly more general class of systems compared to P-TEGs, and formulated necessary and sufficient conditions for consistency in strongly connected NEGs.
These conditions were extended to arbitrary NEGs by Munier Kordon in \cite{munier2011graph}, where a weakly polynomial-time algorithm to check the consistency property was proposed.
\zor{The approach by Lee, Park, and Munier Kordon is different from the one by Declerck.
Indeed, \cite{lee2005extended,munier2011graph} proved that consistency is equivalent to the existence of consistent \emph{periodic} trajectories -- in which each transition $t_i$ fires every $\lambda_i$ time units -- and then focused on finding efficient algorithms to compute them.}
Since P-TEGs form a subclass of NEGs, the algorithm found in \cite{munier2011graph} solves the consistency verification problem in P-TEGs as well.
The latter result came to our attention only after the publication of \cite{zorzenon2024consistency}, \zor{which proposes a different algorithm to check consistency}.

\zor{The results from \cite{lee2005extended,munier2011graph} are however only applicable to the case of \emph{loose initial conditions}, according to which initial tokens are allowed to contribute to the firing of transitions at any time, independently from the time window associated to their initial places.
Although loose initial conditions are suitable for some applications, such as for the analysis of manufacturing systems in periodic regimes, in which the influence of the initial conditions is negligible, they may be overly permissive in others.
For example, they are inadequate for modeling the fact that some machines in a manufacturing system have already been processing a part for a time $\tau\geq 0$ before the initial time $t_0$.
In this case, \emph{strict initial conditions} must be used, which allow to specify the arrival time of initial tokens in places.
In P-TEGs under strict initial conditions, the characterization given in \cite{lee2005extended,munier2011graph} does not hold.
In fact, not all consistent P-TEGs under strict initial conditions admit a consistent periodic trajectory (this is formally shown in \Cref{ex:pteg_nonperiodic}).}

The consistency verification problem in P-TEGs can be reformulated as the problem of verifying the existence of solutions in particular systems of infinitely many precedence (or potential) constraints.
Precedence constraints are inequalities of the form 
\begin{equation}\label{eq:single_potential}
    y \geq c + x,
\end{equation}
where $x$ and $y$ are real variables and $c$ is a real constant.
It is convenient to visualize them using precedence graphs, which are weighted directed graphs with one node for each variable in the system and an arc from node $x$ to node $y$ with weight $c$ for each inequality of the form \eqref{eq:single_potential}. 

In \cite[Hilfssatz (2.2.1)]{gallai1958maximum}, Gallai observed the following useful characterization: a system of finitely many precedence constraints admits real solutions if and only if the corresponding precedence graph does not contain a \emph{positive-weight circuit} \zor{(see \cite[Section 6]{SCHRIJVER20051} for a historical excursus on this and other results related to the shorted path algorithm)}.
This property was extended to specific systems of infinitely many precedence constraints in \cite{zorzenon2024consistency}.

\subsection{Our contributions}

The present paper, which represents an extension of \cite{zorzenon2024consistency}, generalizes \zor{the characterization discovered by Gallai} to arbitrary systems of precedence constraints.
We prove in particular that \emph{any} (possibly infinite) system of precedence constraints admits solutions if and only if the corresponding precedence graph contains no \emph{$\infty$-weight paths}, \ie the supremal weight of all paths connecting any two nodes is less than $+\infty$ (see \Cref{th:gallai} in \Cref{se:algebra}). %
\zor{We would like to emphasize the generality of this result, which holds even for non-locally finite graphs -- in which some nodes have infinitely many incoming or outgoing arcs -- as well as for graphs consisting of infinitely many (strongly) connected subgraphs.}

We then proceed, in \Cref{se:N_periodic_graphs,se:ultimately_periodic_graphs}, to analyze two specific classes of infinite precedence graphs that find applications in P-TEGs: $\N$-periodic graphs and ultimately periodic graphs.
$\N$-periodic graphs are obtained, roughly speaking, by placing a finite graph with $n$ nodes at each point of the lattice of natural numbers $\N$ (see, for instance, \Cref{fi:infinite_precedence_graph}).
The dynamics of P-TEGs with loose initial conditions evolves according to precedence constraints described by $\N$-periodic graphs.
P-TEGs with strict initial conditions are instead modeled by ultimately periodic graphs, which are an extension of $\N$-periodic graphs consisting of three parts: a negative-periodic part (analogous to an $\N$-periodic part but for the lattice of negative integers), a transient part, and a positive-periodic part (an $\N$-periodic graph).
See \Cref{fi:ultimately_periodic_graph} for an example of ultimately periodic graph.

In \Cref{se:N_periodic_graphs}, we prove that the presence of $\infty$-weight paths in $\N$-periodic graphs can be detected in strongly polynomial time.
\zor{Central for our proof is the sequence of matrices introduced by Declerck in \cite{5628259}; indeed, the presence of $\infty$-weight paths is equivalent to the divergence of the sequence, and we show that it is possible to decide whether the sequence converges or not in time $O(n^5)$, in the worst case.}
In \Cref{se:ultimately_periodic_graphs}, we extend the latter result to ultimately periodic graphs.
The consequence, illustrated in \Cref{se:P_TEGs}, is that consistency of P-TEGs, \zor{under either loose or strict} initial conditions, consistency can be verified in time $O(n^5)$.

To conclude, in \Cref{se:conclusions} we comment on interesting connections between the models discussed in this paper and others, such as vector addition systems with states.

\subsubsection*{Notation}
We denote sets $\R\cup\{-\infty\}$, $\R\cup\{+\infty\}$, $\R\cup\{-\infty,+\infty\}$ respectively by $\Rmax$, $\Rmin$, and $\Rbar$.
The sets of negative, nonnegative, and positive integers are denoted, respectively, by $\Z_{<0}$, $\No$, and $\N$.
Given $a,b\in\Z$, with $b\geq a$, $\dint{a,b}$ indicates the discrete interval $\{a,a+1,a+2,\ldots,b\}$.

\section{Infinite precedence graphs and constraints}\label{se:algebra}

The max-plus algebra is a mathematical framework that allows to conveniently translate graph-theoretical algorithms for the longest-path problem, such as the Bellman-Ford and Floyd-Warshall algorithms, into algebraic expressions.
Typically, the considered graphs are finite, but in this section we show that extending some results to infinite graphs is possible.
By exploiting the max-plus framework, we then prove the connection between the existence of solutions in (infinite) precedence constraints and the presence of $\infty$-weight paths in precedence graphs.

\subsection{Basic algebraic tools}

Before introducing the max-plus algebra, we recall the definition of an idempotent semiring.
An \textit{idempotent semiring} $(\D,\oplus,\otimes)$ consists of a set $\D$ endowed with an operation $\oplus$ (called addition), which is commutative, associative, idempotent (i.e., $a\oplus a = a$), and has neutral element $\varepsilon$, and an operation $\otimes$ (called multiplication), which is associative, distributive over $\oplus$, has neutral element $e$, and $\forall a\in\D$, $a\otimes \varepsilon=\varepsilon\otimes a = \varepsilon$.
The partial order relation $\succeq$ is defined by: $a\succeq b\ \Leftrightarrow \ a\oplus b = a$.
Any idempotent semiring is closed under finite additions and multiplications; if it is also closed under infinite additions and if $\otimes$ distributes over infinite additions, then we say that it is \textit{complete}.
In this case, given any $a\in\D$, the operator $ ^+$ applied to $a$ is defined by $a^+ = \bigoplus_{i\in\N} a^i$, where $a^1=a$ and $a^{i+1} = a\otimes a^i$.
The \textit{Kleene star} of $a$, $a^* = a^+ \oplus e$, has the following property (see, \eg \cite{baccelli1992synchronization}):
\begin{align}
\label{eq:a_star_a_star}a^*\otimes a^* = (a^*)^* =  a^*.
\end{align}

\begin{remark}\label{re:inequality_oplus}
The following equivalence holds in any idempotent semiring: 
\[a\succeq b \mbox{ and } a \succeq c\ \Leftrightarrow\ a \succeq b \oplus c.\]
In complete idempotent semirings, this property extends to the case of infinitely many inequalities, \ie for all $\X\subseteq \D$,
\[
    \forall x\in \X,\quad a\succeq x\quad \Leftrightarrow\quad a\succeq \bigoplus_{x\in \X} x,
\]
where $\bigoplus_{x\in\X} x$ indicates the supremum of set $\X$ according to relation $\succeq$.
For a proof of this fact, we refer to~\cite[Remark 2.2 (a)]{singer2003some}.
\end{remark}

The max-plus algebra $(\Rbar,\oplus,\otimes)$, where the operations $\oplus$ and $\otimes$ are defined for all $a,b\in\Rbar$ by 
\[
    a\oplus b=\max\{a,b\},\quad a\otimes b=
    \begin{dcases}
    a+b & \mbox{if } a\neq -\infty \mbox{ and } b\neq -\infty, \\
    -\infty & \mbox{otherwise,}
    \end{dcases}
\]
is a complete idempotent semiring.
On the other hand, $(\Rmax,\oplus,\otimes)$ is an example of an idempotent semiring that is not complete.
The operations of the max-plus algebra can be extended to matrices of finite and infinite dimensions.
Let $I,J$ be arbitrary countable sets; a matrix $A$ is a function $A:I\times J\rightarrow \Rbar$, where $A(i,j)$, denoted by $A_{ij}$, is an entry of $A$.
The collection of such matrices is denoted by $\Rbar^{I\times J}$ or, when $I=\dint{1,m}$ and $J=\dint{1,n}$, with $m,n\in\N$, by $\Rbar^{m\times n}$.
The set $\Rbar^{I\times \{1\}}$ of column vectors is simply indicated by $\Rbar^I$ (or $\Rbar^n$ when $I=\dint{1,n}$).
Given $A,B\in\Rbar^{I_1\times I_2}$, $C\in\Rbar^{I_2\times I_3}$, for all $i\in I_1$, $j\in I_2$, $h\in I_3$, we set
\[
    (A\oplus B)_{ij} = A_{ij} \oplus B_{ij},\quad (A\otimes C)_{ih} = \bigoplus_{k\in I_2} A_{ik}\otimes C_{kh}.
\]
With these definitions, $(\Rbar^{n\times n},\oplus,\otimes)$ and $(\Rbar^{\N\times \N},\oplus,\otimes)$ form two complete idempotent semirings, see \cite[Section 1.4]{droste2009handbook}.
In such semirings, the neutral element for $\oplus$ is the matrix $\mathcal{E}$ whose entries are all $-\infty$, and the neutral element for $\otimes$ is the matrix $E$ such that $E_{ii}=0$ for all $i$ and $E_{ij} = -\infty$ for all $i\neq j$.
The scalar-matrix product in the max-plus algebra is defined for all scalars $\lambda\in\Rbar$ and matrices $A\in\Rbar^{I\times J}$ by: for all $i\in I$, $j\in J$,
\[
    (\lambda\otimes A)_{ij} = \lambda \otimes A_{ij}.
\]
According to the definition of $\succeq$ in idempotent semirings, given two matrices $A$ and $B$ of the same size, we have $A\succeq B$ if and only if, for all $i,j$, $A_{ij}\geq B_{ij}$; to simplify notation we will always write "$A\geq B$" in place of "$A \succeq B$".
When clear from the context, we will also omit the multiplication sign "$\otimes$". %

\subsection{Precedence constraints}

\emph{Precedence} (or \emph{potential}) \emph{constraints} are systems of inequalities of the form
\begin{equation}\label{eq:potential_inequalities_linear}
    \forall i,j\in I\quad x_i \geq A_{ij} + x_j,
\end{equation}
where $A\in\Rmax^{I\times I}$ is called \emph{difference bound matrix} and $x\in\R^I$ is a vector of real variables.
Precedence constraints can be written in the max-plus algebra as
\begin{equation}\label{eq:potential_inequalities}
    x \geq A\otimes x.
\end{equation}
The equivalence between~\eqref{eq:potential_inequalities_linear} and~\eqref{eq:potential_inequalities} is easily proven:
\[
        \eqref{eq:potential_inequalities_linear}  \Leftrightarrow \forall i,j\in I, \ x_i \geq A_{ij}\otimes x_j 
                                                 \ifftop{Remark~\ref{re:inequality_oplus}}  \forall i\in I,\ x_i\geq \bigoplus_{j\in I} A_{ij}\otimes x_j \ \Leftrightarrow \ \eqref{eq:potential_inequalities}.
\]

\begin{example}\label{ex:potential_inequalities_1}
Consider the following system of infinitely many inequalities in infinitely many variables $x_1,x_2,\ldots\in\R$ (written using standard notation):
    \begin{equation}\label{eq:example_potential_inequalities}
        \forall k\in\N,\quad
        \left\{
            \begin{array}{rcl}
                x_{2k} &\geq& 0 + x_{2k-1}\\
                x_{2k+1}& \geq& -4 k + x_{1}\\
                x_{2k+2}& \geq& 2 k + x_{2}\\
                x_{2k-1}& \geq& 3 + x_{2k+1}.
            \end{array} 
        \right.
    \end{equation}
The same inequalities can be expressed in the form~\eqref{eq:potential_inequalities}, by defining the difference bound matrix
    \[
\setlength{\arraycolsep}{0pt}
        A = 
        \begin{bNiceMatrix}[columns-width=auto]
            \cdot & \cdot &     3 & \cdot & \cdot & \cdot & \cdot & \cdot & \cdots \\
            0 & \cdot & \cdot & \cdot & \cdot & \cdot & \cdot & \cdot  & \cdots\\
            -4 & \cdot & \cdot & \cdot &     3 & \cdot & \cdot & \cdot  & \cdots\\
            \cdot &     2 &     0 & \cdot & \cdot & \cdot & \cdot & \cdot  & \cdots\\
            -8 & \cdot & \cdot & \cdot & \cdot & \cdot &     3 & \cdot  & \cdots\\
            \cdot &     4 & \cdot & \cdot &     0 & \cdot & \cdot & \cdot  & \cdots\\
            -12 & \cdot & \cdot & \cdot & \cdot & \cdot & \cdot & \cdot  & \cdots\\
            \cdot &     6 & \cdot & \cdot & \cdot & \cdot &     0 & \cdot & \cdots\\ 
            \vdots & \vdots & \vdots & \vdots & \vdots & \vdots & \vdots & \vdots & \ddots 
        \end{bNiceMatrix},
    \]
where each "$\cdot$" in the matrix stands for $-\infty$.
\end{example}

The following result will be useful later.

\begin{theorem}{\cite[Theorem 4.70]{baccelli1992synchronization}}\label{th:Kleene_expressions}
For all $A\in\Rbar^{I\times I}$ and $x\in\Rbar^{I}$, 
\[
    x \geq A\otimes x \ \Leftrightarrow\ x = A^* \otimes x.\qedlineend
\]
\end{theorem}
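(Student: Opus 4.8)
The plan is to prove the two implications separately within the complete idempotent semiring $(\Rbar^{I\times I},\oplus,\otimes)$, using only \Cref{re:inequality_oplus}, the identity \eqref{eq:a_star_a_star}, and the elementary \emph{monotonicity of left multiplication}: if $B\geq C$ then $B\otimes x\geq C\otimes x$. The latter is immediate from distributivity, since $B=B\oplus C$ gives $B\otimes x = B\otimes x\oplus C\otimes x\geq C\otimes x$. I also record that $A^{*}=E\oplus A^{+}\geq A$ and $A^{*}\geq E$, because $\oplus$ is the supremum and both $A$ and $E$ occur among the terms defining $A^{*}$.

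For the direction ``$\Leftarrow$'', suppose $x=A^{*}\otimes x$. Then monotonicity together with $A^{*}\geq A$ yields $A\otimes x\leq A^{*}\otimes x=x$, which is exactly \eqref{eq:potential_inequalities}.

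For ``$\Rightarrow$'', suppose $x\geq A\otimes x$. First I would prove by induction on $i\in\No$ that $x\geq A^{i}\otimes x$: the case $i=0$ is $x\geq E\otimes x=x$, and if $x\geq A^{i}\otimes x$ then monotonicity gives $A\otimes x\geq A\otimes(A^{i}\otimes x)=A^{i+1}\otimes x$, so chaining with the hypothesis $x\geq A\otimes x$ produces $x\geq A^{i+1}\otimes x$. With $x\geq A^{i}\otimes x$ for every $i\in\No$ in hand, the infinite-inequality form of \Cref{re:inequality_oplus} collapses this family into $x\geq\bigoplus_{i\in\No}\bigl(A^{i}\otimes x\bigr)$; completeness then lets $\otimes$ distribute over this countable sum, so $\bigoplus_{i\in\No}(A^{i}\otimes x)=\bigl(\bigoplus_{i\in\No}A^{i}\bigr)\otimes x=A^{*}\otimes x$, giving $x\geq A^{*}\otimes x$. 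Combined with $A^{*}\otimes x\geq E\otimes x=x$, this yields $x=A^{*}\otimes x$.

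The only genuinely delicate point is that $I$ may be infinite, so $A^{*}$ and the products $A^{i}\otimes x$ involve countable suprema; every step above — the very existence of $A^{*}$, the distributivity of $\otimes$ over a countable sum, and the passage from infinitely many inequalities to their supremum — rests on the completeness of $(\Rbar^{I\times I},\oplus,\otimes)$ and of $\Rbar^{I}$, which is precisely why the statement is formulated over $\Rbar$ rather than over $\Rmax$. Notably, no local finiteness of the graph (finite support of the rows of $A$) is needed, so the argument applies verbatim to the non-locally-finite graphs emphasized in the introduction.
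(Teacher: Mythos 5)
Your proof is correct. The paper does not prove \Cref{th:Kleene_expressions} itself---it is quoted from \cite[Theorem 4.70]{baccelli1992synchronization}---and your argument (iterate $x\geq A\otimes x$ to obtain $x\geq A^{i}\otimes x$ for all $i\in\No$, collapse the family via \Cref{re:inequality_oplus} and distributivity of $\otimes$ over infinite $\oplus$, then sandwich with $A^{*}\geq E$) is essentially the standard one behind that reference; the only cosmetic remark is that your induction step uses monotonicity of $\otimes$ in the \emph{right} factor, which follows from distributivity exactly as the left-factor version you stated.
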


\zor{In the following sections, we will study the existence of real solutions of precedence constraints.
We remark that we are not interested in solutions with entries equal to $-\infty$ or $+\infty$, since they are not useful for our purposes.}

\subsection{Precedence graphs}

A (finite or infinite) \emph{directed graph} (or \emph{digraph}) is a pair $(\nodes,\arcs)$, where $\nodes$ is a countable set called the set of nodes and $\arcs\subseteq \nodes\times\nodes$ is the set of arcs.
A path $\rho$ of length $|\rho|=\ell\in\N$ in $(\nodes,\arcs)$ is a finite sequence of nodes $(\rho_1,\rho_2,\ldots,\rho_{\ell+1})$ such that, $\forall i\in\dint{1,\ell}$, $(\rho_i,\rho_{i+1})\in \arcs$.
An \emph{elementary} path is one in which no node appears more than once, \ie $\forall i,j\in\dint{1,|\rho|+1}$, $\rho_i=\rho_j \Rightarrow i=j$.
A path $\rho$ is called a \emph{circuit} if its initial and final nodes coincide, \ie if $\rho_1=\rho_{|\rho|+1}$.
A circuit $\rho$ is elementary if the path $(\rho_1,\rho_2,\ldots,\rho_{|\rho|})$ is elementary.

A digraph $(\nodes',\arcs')$ is a (\emph{proper}) \emph{subgraph} of digraph $(\nodes,\arcs)$ if $\nodes'\subseteq \nodes$ ($\nodes'\subset \nodes$) and $\arcs' = \{(i,j)\in\arcs\mid i,j\in\nodes'\}$.
A digraph is said to be \emph{strongly} (resp., \emph{fully}) \emph{connected} if there exists a path (resp., arc) from any node to any other node of the digraph.
A digraph $(\nodes,\arcs)$ is \emph{connected} if the associated undirected graph, obtained by adding to $(\nodes,\arcs)$ an arc $(i,j)$ for each arc $(j,i)\in\arcs$ such that $(i,j)\not\in\arcs$, is strongly connected.
A strongly connected (resp., fully connected, connected) subgraph of digraph $(\nodes,\arcs)$ is \emph{maximal} if it is not a proper subgraph of another strongly connected (resp., fully connected, connected) subgraph of $(\nodes,\arcs)$.

A \emph{weighted directed graph} is a 3-tuple $(\nodes,\arcs,w)$, where $(\nodes,\arcs)$ is a directed graph, and $w:\arcs\rightarrow \R$ is a function that associates a weight $w((j,i))$ to every arc $(j,i)$ of the graph.
The weight $|\rho|_W\in\R$ of a path $\rho$ in a weighted directed graph is the (standard) sum of the weight of its edges, \ie
\[
    |\rho|_W = \sum_{i=1}^{|\rho|} w((\rho_i,\rho_{i+1})).
\]
If there exists an infinite sequence of paths $\rho_1,\rho_2,\dots$ in $(\nodes,\arcs,w)$ such that $\lim_{h\rightarrow+\infty} |\rho_h|_{W} = +\infty$, then we say that $(\nodes,\arcs,w)$ contains an \emph{$\infty$-weight path}.\footnote{Note the slight abuse of terminology: $\infty$-weight paths are not paths. Note also that any "path" with infinite weight would have infinite length, going against our definition of path.} 

Given a weighted directed graph $G = (\nodes,\arcs,w)$, we say that $A\in\Rmax^{I\times I}$ is the \emph{adjacency matrix} of $G$, and that $G$ is the \emph{precedence graph} of $A$, if $I = \nodes$ and there is an arc in $G$ from node $i$ to node $j$ of weight $w((i,j)) = A_{ji}$ if and only if $A_{ji} \neq -\infty$.
In this case, we write $G = \graph(A)$.
With these definitions, element $(j,i)$ of matrix $A^{\ell}$, respectively, $A^+$, corresponds to the supremal weight of all paths in $\graph(A)$ from node $i$ to node $j$ of length $\ell$, respectively, of any length.
Note that $\graph(A)$ contains an $\infty$-weight path from node $i$ to node $j$ if and only if $(A^+)_{ji} = (A^*)_{ji} = +\infty$; if no $\infty$-weight paths are present in $\graph(A)$, then all elements of $A^+$ and $A^* = A^+\oplus E$ belong to $\Rmax$.
We indicate by $\nonegset$ the set of all precedence graphs without $\infty$-weight paths:
\[
    \nonegset = \{\graph(A)\mid A\in\Rmax^{I\times I},\ I\mbox{ is a countable set},\ A^+\in\Rmax^{I\times I}\}.
\]

\begin{remark}\label{re:elementary_paths_circuits}
    Recall that, in finite graphs, there exists an $\infty$-weight path if and only if there exists an elementary circuit with positive weight.
    Therefore, for all $A\in\Rmax^{n\times n}$, $\graph(A)\in\nonegset$ if and only if $(A^+)_{ii}\leq 0$ and $(A^*)_{ii}=0$ for all $i\in\dint{1,n}$.

\zor{In infinite graphs, the existence of positive-weight circuits is not necessary to have $\infty$-weight paths. 
The presence of an $\infty$-weight path is in fact a necessary and sufficient condition for the existence of at least one of the following objects:
    \begin{itemize}
        \item an elementary circuit with positive weight,
        \item an infinite sequence of \emph{elementary} paths with infinite limit weight.
    \end{itemize}
    To show this, consider an infinite graph with an $\infty$-weight path but no elementary circuit with positive weight.
        By definition of $\infty$-weight paths, there is a sequence of paths $\rho_1,\rho_2,\dots$ with infinite limit weight.
    If the sequence contains a non-elementary path $\rho_k$, we can decompose it into a concatenation of elementary paths and elementary circuits.
    Then, because all circuits have non-positive weight, they can be eliminated from $\rho_k$ obtaining an elementary path $\rho_k'$ with larger or equal weight.}
\end{remark}

\begin{example}
    \zor{An example of infinite precedence graph with $\infty$-weight paths but without circuits (and, thus, without elementary circuits with positive weight) is shown in \Cref{fi:infinite_precedence_graph}.
    The sequence of elementary paths $\rho_1,\rho_2,\dots$ from node $1$ to node $2$ where $\rho_k=(1,\, 3,\, 5,\, \dots ,\, 2k-1,\, 2k, \, 2k-2, \, 2k-4,\, \dots,\, 2)$ has infinite limit weight, since $\lim_{k\rightarrow\infty}|\rho_k|_W =\lim_{k\rightarrow\infty} k-1 = +\infty$.
    Let $A\in\Rmax^{\N\times\N}$ be the incidence matrix of the graph in \Cref{fi:infinite_precedence_graph}.
    Since no circuit exists, we have $(A^+)_{ii} = -\infty$ for all $i\in\N$.
    On the other hand, $(A^+)_{21} = +\infty$.
    }
\begin{figure}[t]
    \centering
    \begin{tikzpicture}[node distance=2cm and 2cm,>=stealth',bend angle=45,double distance=.5mm,arc/.style={->,>=stealth'},place/.append style={minimum size=.3cm}]

\foreach \z in {1,2,3,4,5}
{
\pgfmathtruncatemacro{\za}{\z*2-1}
\pgfmathtruncatemacro{\zb}{\z*2}
\node [place] (ntop\z) at (1.5*\z,0) {$\za$};
\node [place] (nbot\z) at (1.5*\z,-1*1.5) {$\zb$};
\draw [arc] (ntop\z) to node[auto] {$0$} (nbot\z);
}
\draw [arc] (ntop1) to node[auto] {$2$} (ntop2);
\draw [arc] (ntop2) to node[auto] {$2$} (ntop3);
\draw [arc] (ntop3) to node[auto] {$2$} (ntop4);
\draw [arc] (ntop4) to node[auto] {} (ntop5);

\draw [arc] (nbot2) to node[auto] {$-1$} (nbot1);
\draw [arc] (nbot3) to node[auto] {$-1$} (nbot2);
\draw [arc] (nbot4) to node[auto] {$-1$} (nbot3);
\draw [arc] (nbot5) to node[auto] {} (nbot4);

\fill [white] (1.5*4.5,1.5*.5) rectangle (1.5*5.5,1.5*-1.5);
\node (dots2) at (1.5*5,1.5*-.5) {\textbf{\dots}};

\end{tikzpicture}
    \caption{Infinite precedence graph with an $\infty$-weight path but without circuits with positive weight.}\label{fi:infinite_precedence_graph}
\end{figure}
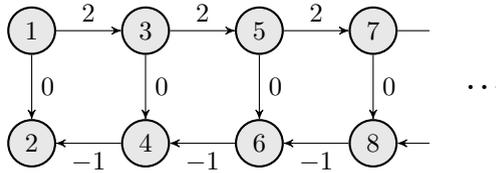
\end{example}

\subsection{Extension of Gallai's observation to infinite precedence graphs}

The following core result provides a necessary and sufficient condition for the existence of real solutions of precedence constraints, based on the precedence graph of the related difference bound matrix.
It was discovered by Gallai for finite difference bound matrices~\cite[Hilfssatz (2.2.1)]{gallai1958maximum}.
To the best of our knowledge, its extension to the case of generic infinite matrices is presented here for the first time. %
Its proof also provides a procedure for obtaining a solution when one exists.

\begin{theorem}\label{th:gallai}
    For a given $A\in \Rmax^{I\times I}$, inequality $x\geq A\otimes x$ admits a solution $x\in\R^I$ if and only if $\graph(A)\in\nonegset$.\footnote{This theorem remains valid if $\R$ is substituted by $\Z$, but not by $\Q$, because $\Q\cup\{\pm\infty\}$ is not closed for infinitely many $\oplus$.}
\end{theorem}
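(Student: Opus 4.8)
The plan is to prove both implications by computing directly with the Kleene star $A^*$; the construction used for the converse will additionally produce an explicit solution, as announced just before the statement.

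For the "only if" part, suppose $x\in\R^I$ satisfies $x\geq A\otimes x$. By \Cref{th:Kleene_expressions}, $x=A^*\otimes x$, hence $x_j\geq (A^*)_{ji}\otimes x_i$ for all $i,j\in I$. As $x_i\in\R$, the value $(A^*)_{ji}=+\infty$ would give $(A^*)_{ji}\otimes x_i=+\infty>x_j$, which is impossible since $x_j\in\R$. Thus no entry of $A^*$ equals $+\infty$; since $A^*=A^+\oplus E\geq A^+$ entrywise, the same holds for $A^+$, i.e.\ $A^+\in\Rmax^{I\times I}$, which is precisely $\graph(A)\in\nonegset$.

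For the "if" part, assume $\graph(A)\in\nonegset$ and take as candidate the vector $x\in\Rbar^{I}$ with $x_j:=\bigoplus_{i\in I}(A^*)_{ji}$; equivalently, $x=A^*\otimes u$ with $u\in\R^I$ the all-zeros vector. That $x$ solves the inequality is immediate from $A\otimes x=(A\otimes A^*)\otimes u=A^+\otimes u\leq A^*\otimes u=x$, using $A^+\leq A^*$ and monotonicity of $\otimes$. What must be checked is that $x$ is \emph{real}: each $x_j\geq (A^*)_{jj}\geq 0>-\infty$, so only $x_j=+\infty$ needs to be excluded. I note that replacing $x$ by a single column $j\mapsto(A^*)_{ji_0}$ of $A^*$ would not work, since nodes not reachable from $i_0$ would be assigned $-\infty$: aggregating over all source nodes $i$ is what makes the construction succeed.

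Excluding $x_j=+\infty$ is the one substantive step, and it is here that the hypothesis $\graph(A)\in\nonegset$ must be used in its full "sequential" meaning. If $x_j=\sup_{i\in I}(A^*)_{ji}=+\infty$, then for every $N\in\N$ there is $i_N\in I$ with $(A^*)_{ji_N}>N$; since $(A^*)_{ji_N}$ is the supremal weight of the paths from $i_N$ to $j$ in $\graph(A)$, there is an actual path $\rho_N$ from $i_N$ to $j$ with $|\rho_N|_W>N$. Then $\lim_{N\to\infty}|\rho_N|_W=+\infty$, so $\graph(A)$ contains an $\infty$-weight path, contradicting $\graph(A)\in\nonegset$; hence $x\in\R^I$. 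This is exactly the point at which the infinite case is more delicate than Gallai's finite one: finiteness of $x_j$ does \emph{not} follow from the absence of positive-weight circuits (cf.\ \Cref{fi:infinite_precedence_graph}), but only from the definition of $\infty$-weight path via a sequence of ordinary paths of diverging weight. The whole argument carries over verbatim with $\R$ replaced by $\Z$, and breaks for $\Q$ only because the supremum defining $x_j$ need not be rational; everything else is routine manipulation of suprema in the complete idempotent semiring $(\Rbar^{I\times I},\oplus,\otimes)$.
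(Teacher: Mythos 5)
Your ``only if'' direction is correct and coincides with the paper's argument. The gap is in the ``if'' direction, at exactly the step you call the one substantive point. The hypothesis $\graph(A)\in\nonegset$ means, by the paper's definition of $\nonegset$, that $A^+\in\Rmax^{I\times I}$, i.e.\ for each \emph{fixed} pair $(i,j)$ the supremal weight of paths from $i$ to $j$ is finite. This entrywise finiteness does not bound the supremum of a whole row of $A^*$, which is what your candidate $x_j=\bigoplus_{i\in I}(A^*)_{ji}$ requires. Your contradiction argument silently strengthens the hypothesis: the paths $\rho_N$ you extract have \emph{varying} source nodes $i_N$, so their divergent weights do not force any single entry $(A^+)_{ji}$ to equal $+\infty$, and hence do not contradict $\graph(A)\in\nonegset$. (If one instead reads ``no $\infty$-weight path'' as ``no sequence of paths with divergent weights, endpoints free'', the equivalence in the theorem becomes false -- a real solution can coexist with such a sequence -- and your ``only if'' half, which only establishes entrywise finiteness of $A^*$, would no longer prove the statement; so neither reading rescues the proof as written.)

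The failure is concrete and already visible in the paper's own \Cref{ex:potential_inequalities_1}: there $\graph(A)\in\nonegset$ and real solutions exist, yet the second row of $A^*$ contains the unbounded values $0,3,6,9,\dots$, so your candidate gives $x_2=+\infty$ (likewise $x_4,x_6,\dots$). An even simpler instance is $I=\N$ with $A_{k,k+1}=1$ and all other entries $-\infty$: then $A^+\in\Rmax^{\N\times\N}$ and $x_k=-k$ is a real solution, but $\bigoplus_i (A^*)_{1i}=+\infty$. This is precisely the obstruction the paper's proof is built to circumvent: taking all coefficients $\alpha_j$ real (your choice is $\alpha_j=0$ for every $j$) can make the combination of columns of $A^*$ diverge, so the paper instead supports $\alpha$ on a single node per maximal strongly connected subgraph, and for graphs that are connected but not strongly connected it first replaces $A$ by the augmented matrix $A^\circledast$ (built via $\Phi$, \Cref{le:gallaiaux3} and \Cref{le:gallaiaux2}), which is real-valued, dominates $A$, and still lies in $\nonegset$. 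Some such dampening or augmentation step is genuinely needed; the row-supremum shortcut cannot replace it.
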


\zor{In order to prove the theorem, we need some technical definitions and lemmas.

\begin{definition}
Let $\preceq$ be a partial order relation on a set $I$.
We say that $\preceq$ is a \emph{good order} if the following conditions are satisfied:
\begin{itemize}
    \item $\preceq$ is a \emph{well-order} on $I$, i.e., a total order relation\footnote{A \emph{total order relation} $\preceq$ is a binary relation that is reflexive ($a\preceq a$), antisymmetric $(a\preceq b \wedge b\preceq a\Rightarrow a=b$), and transitive ($a\preceq b\wedge b\preceq c\Rightarrow a\preceq c$) such that, for every element $a,b$, either $a\preceq b$ or $b\preceq a$.} for which every nonempty subset of $I$ has a least element,
    \item for any element $a\in I$, there are finitely many elements less than or equal to $a$, i.e., $|\{x\in I\mid x\preceq a\}|\in\N$.\qedhere
\end{itemize}
\end{definition}
An example of well-order on $\N$ that is not good is the relation $\preceq$, defined such that $1\preceq 3\preceq 5\preceq 7\preceq \ldots \preceq 2\preceq 4\preceq 6\preceq 8\preceq \ldots$
\begin{lemma}\label{le:gallaiaux1}
    For any countable set $I$, there exists a good order on $I$.
\end{lemma}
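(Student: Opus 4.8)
The plan is to exploit countability of $I$ directly: a set is countable precisely when it admits a bijection onto an initial segment of $\N$, and the restriction of the usual order on $\N$ to such an initial segment is the prototypical good order. Transporting this order back along the bijection then yields a good order on $I$.

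Concretely, I would first recall that, since $I$ is countable, there exists a bijection $\phi\colon I\to J$, where $J=\dint{1,n}$ for some $n\in\N$ if $I$ is finite with $|I|=n$, and $J=\N$ if $I$ is countably infinite; in either case $J\subseteq\N$ and $J$ is an initial segment of $\N$ under the usual order $\leq$. I then define the relation $\preceq$ on $I$ by
\[
    a\preceq b \quad :\Longleftrightarrow\quad \phi(a)\leq \phi(b).
\]

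Next I would verify the two defining properties of a good order. For the well-order property, reflexivity, antisymmetry (here using injectivity of $\phi$), transitivity and totality are all pulled back from the corresponding properties of $\leq$ on $J$ through $\phi$; and given a nonempty $S\subseteq I$, the set $\phi(S)\subseteq J\subseteq\N$ has a $\leq$-least element $k_0$ by the well-ordering principle for $\N$, so $\phi^{-1}(k_0)$ is the $\preceq$-least element of $S$ (using surjectivity of $\phi$ and the definition of $\preceq$). For the finiteness property, for any $a\in I$ we have
\[
    \{x\in I\mid x\preceq a\} \;=\; \phi^{-1}\bigl(\{k\in J\mid k\leq \phi(a)\}\bigr) \;=\; \phi^{-1}\bigl(\dint{1,\phi(a)}\bigr),
\]
which, $\phi$ being a bijection, has exactly $\phi(a)\in\N$ elements. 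Hence $\preceq$ is a good order on $I$, and the lemma follows.

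This argument is essentially a transport-of-structure, so there is no real obstacle; the only points requiring a little care are organizing the finite and countably infinite cases uniformly under the single phrase "$J$ is an initial segment of $\N$", and observing that no choice principle beyond the one already implicit in the word "countable" is invoked.
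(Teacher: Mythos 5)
Your proof is correct and follows essentially the same route as the paper's: the paper also transports the usual order on $\N$ back to $I$ (via an injection $f:I\to\N$ rather than a bijection onto an initial segment) and notes that the good-order properties are inherited. Your version merely spells out the verification of the well-order and finiteness conditions in more detail, which is fine but not a different argument.
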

\begin{proof}
By definition of countable set, there exists an injective function $f:I\rightarrow\N$.
Define the order $\preceq$ in $I$ such that, for all $a,b\in I$, $a\preceq b$ if and only if $f(a)\leq f(b)$.
Since $\leq$ is a good order on $\N$, then clearly the same can be said for the relation $\preceq$ on $I$.
\end{proof}
}

\zor{An example of good order on $\N\times\N$ is the one derived by the well-known Cantor pairing function $f((m,n)) = \frac{1}{2}(m+n-2)(m+n-1)+m$, which is a bijection from $\N\times\N$ to $\N$ \cite[page 169]{hopcroft1979introduction}.}

\begin{definition}
\zor{Let $A\in\Rbar^{I\times I}$, and let $\preceq$ be a good order on the set of pairs $I\times I$ (whose existence is guaranteed by \Cref{le:gallaiaux1}, since $I\times I$ is a countable set).
We define $\Phi:\Rbar^{I\times I}\rightarrow \Rbar^{I\times I}$ to be the function that, when applied on matrix $A$, returns $A^*$ if there is no pair $(i,j)$ such that $(A^*)_{ij}=-\infty$ and $(A^*)_{ji} \neq -\infty$,\footnote{In this case, if $A\in\Rmax^{I\times I}$, then $\graph(A)$ consists only of disjoint (i.e., not connected by arcs) maximal strongly connected subgraphs.}}
and otherwise returns
\[
    \Phi(A) = A^* \oplus (-(A^*)_{ji}) \otimes U_{(i,j)},
\]
\zor{where $(i,j)$ is the least pair, according to $\preceq$}, such that $(A^*)_{ij} = -\infty$ and $(A^*)_{ji}\neq-\infty$, $-(A^*)_{ji}$ simply indicates the opposite of $(A^*)_{ji}$ in the standard sense (\ie if $(A^*)_{ji}\in\R$, then $(A^*)_{ji}\otimes (-(A^*)_{ji})=0$, and if $(A^*)_{ji}=+\infty$, then $-(A^*)_{ji} = -\infty$), and $U_{(i,j)}\in\{-\infty,0\}^{I\times I}$ is $0$ in position $(i,j)$ and $-\infty$ everywhere else.
\end{definition}
Note that the exact definition of $\Phi$ depends on the \zor{chosen good order}, but the following discussion holds true for any admissible choice.

\zor{
\begin{figure}[t]
\centering
\subfloat[$\graph(A)$.]{\label{fi:finite_precedence_graph}\resizebox{.7\textwidth}{!}{\begin{tikzpicture}[node distance=1cm and 2cm,>=stealth',bend angle=45,double distance=.5mm,arc/.style={->,>=stealth'},place/.append style={minimum size=.3cm}]

\node [place] (1) at (0,0) {$1$};
\node [place,right=of 1] (2) {$2$};
\node [place,right=of 2] (3) {$3$};
\node [place,right=of 3] (4) {$4$};
\node [place,right=of 4] (5) {$5$};

\draw [arc] (1) to (2);
\draw [arc] (2) to (3);
\draw [arc] (3) to [bend left=30] node [auto] {$-3$} (1);
\draw [arc] (3) to node [auto] {$-1$} (4);
\draw [arc] (5) to node [auto] {$2$} (4);

\end{tikzpicture}}}
\hfill
\subfloat[$\graph(A^*)$.]{\label{fi:finite_precedence_graph_star}\resizebox{.7\textwidth}{!}{\begin{tikzpicture}[node distance=1cm and 2cm,>=stealth',bend angle=45,double distance=.5mm,arc/.style={->,>=stealth'},place/.append style={minimum size=.3cm}]

\node [place] (1) at (0,0) {$1$};
\node [place,right=of 1] (2) {$2$};
\node [place,right=of 2] (3) {$3$};
\node [place,right=of 3] (4) {$4$};
\node [place,right=of 4] (5) {$5$};

\draw [arc] (1) to (2);
\draw [arc] (2) to (3);
\draw [arc] (3) to [bend left=30] node [auto] {$-3$} (1);
\draw [arc] (3) to node [auto] {$-1$} (4);
\draw [arc] (5) to node [auto] {$2$} (4);

\begin{scope}[very thick,myblue]
\draw [arc] (1) to [bend left=40] (3);
\draw [arc] (2) to [bend right=20] node [above] {$-3$} (1);
\draw [arc] (3) to [bend right=20] node [above] {$-3$} (2);
\draw [arc] (1) to [bend left=50] node [below] {$-1$} (4);
\draw [arc] (2) to [bend left=40] node [auto] {$-1$} (4);

\foreach \x in {1,...,5}{
\draw [arc] (\x) to [out=-90+15,in=-90-15,loop] (\x);
}
\end{scope}

\end{tikzpicture}}}
\hfill
\subfloat[$\graph(\Phi(A))$.]{\label{fi:finite_precedence_graph_phi}
    \resizebox{.7\textwidth}{!}{\begin{tikzpicture}[node distance=1cm and 2cm,>=stealth',bend angle=45,double distance=.5mm,arc/.style={->,>=stealth'},place/.append style={minimum size=.3cm}]

\node [place] (1) at (0,0) {$1$};
\node [place,right=of 1] (2) {$2$};
\node [place,right=of 2] (3) {$3$};
\node [place,right=of 3] (4) {$4$};
\node [place,right=of 4] (5) {$5$};

\draw [arc] (1) to (2);
\draw [arc] (2) to (3);
\draw [arc] (3) to [bend left=30] node [auto] {$-3$} (1);
\draw [arc] (3) to node [auto] {$-1$} (4);
\draw [arc] (5) to node [auto] {$2$} (4);

\draw [arc] (1) to [bend left=40] (3);
\draw [arc] (2) to [bend right=20] node [above] {$-3$} (1);
\draw [arc] (3) to [bend right=20] node [above] {$-3$} (2);
\draw [arc] (1) to [bend left=50] node [below] {$-1$} (4);
\draw [arc] (2) to [bend left=40] node [auto] {$-1$} (4);

\foreach \x in {1,...,5}{
\draw [arc] (\x) to [out=-90+15,in=-90-15,loop] (\x);
}

\begin{scope}[very thick,myblue]
\draw [arc] (4) to [bend left=50] node [above] {$1$} (1);
\end{scope}

\end{tikzpicture}}
}
\caption{Application of function $\Phi$ illustrated on a finite precedence graph. Whenever not indicated, the weight of arcs is $0$. \myblue{Blue} thick arcs highlight the difference between $\graph(A)$ and $\graph(A^*)$, and between $\graph(A^*)$ and $\graph(\Phi(A))$.}
\label{fi:phi}
\end{figure}
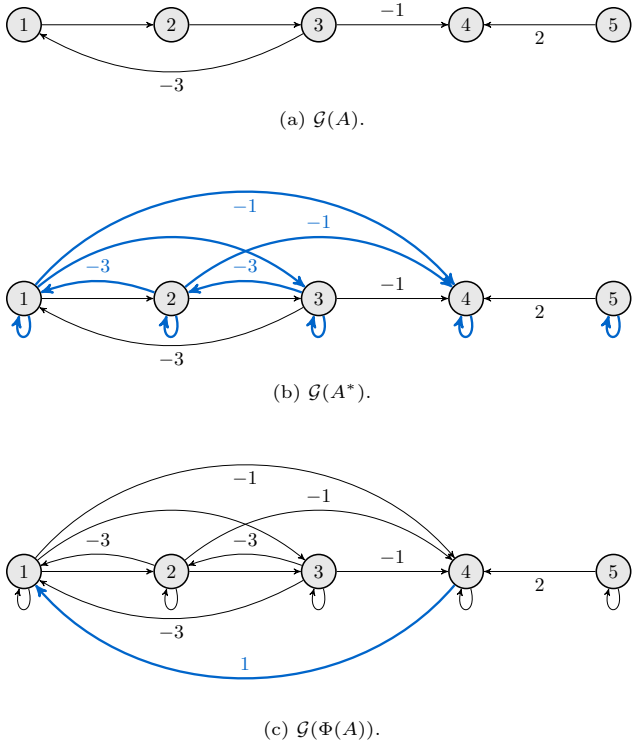
\begin{example}\label{ex:Phi}
    Consider matrix
    \[
        A = 
        \begin{bmatrix}
            -\infty&  -\infty&    -3&  -\infty&  -\infty\\
            0&  -\infty&  -\infty&  -\infty&  -\infty\\
            -\infty&     0&  -\infty&  -\infty&  -\infty\\
            -\infty&  -\infty&    -1&  -\infty&     2\\
            -\infty&  -\infty&  -\infty&  -\infty&  -\infty\\
        \end{bmatrix}.
    \]
    The precedence graphs $\graph(A)$ and $\graph(A^*)$ are illustrated in \Cref{fi:finite_precedence_graph} and \Cref{fi:finite_precedence_graph_star}, respectively.
    The set of pairs of nodes $(i,j)$ such that $(A^*)_{ij}=-\infty$ and $(A^*)_{ji}\neq -\infty$ is $S = \{(1,4),\,(2,4),\,(3,4),\,(5,4)\}$.
    Let the good order $\preceq$ on $\dint{1,5}\times\dint{1,5}$ be defined such that $(i,j)\preceq (h,k)$ if and only if $f((i,j))\leq f((h,k))$, where $f$ is the Cantor pairing function restricted to domain $\dint{1,5}\times\dint{1,5}$.
    By direct computation, we can observe that the minimum element of $S$ according to $\preceq$ is $(1,4)$.
    Then, if $\Phi$ is defined based on $\preceq$, $\Phi(A)$ is identical to $A^*$ except for element $(1,4)$, where $(A^*)_{14} = -\infty$ and $(\Phi(A))_{14} = 1$.
    The precedence graph $\graph(\Phi(A))$ is shown in \Cref{fi:finite_precedence_graph_phi}.
\end{example}
}

\zor{Observe that,} for all connected but not strongly connected $\graph(A)\in\nonegset$, compared to $\graph(A^*)$, $\graph(\Phi(A))$ has an additional arc $(j,i)$, where $i$ and $j$ belong to different maximal fully connected subgraphs of $\graph(A^*)$, as illustrated in \Cref{fi:cloud_graph}.
Because of the presence of circuit $(i,j,i)$ (with zero weight), $\graph(\Phi(A))$ has in this case at least one maximal strongly connected subgraph less than $\graph(A^*)$.
\begin{figure}[h]
    \centering
    \begin{tikzpicture}
\node (A) [cloud,draw,fill=LightGray,cloud puffs=10,cloud puff arc=120, aspect=2] {$\graph_1$}; 
\node (B) [cloud,draw,fill=LightGray,cloud puffs=15,cloud puff arc=120, aspect=2,right=of A] {$\graph_2$}; 
\node (C) [cloud,draw,fill=LightGray,cloud puffs=7,cloud puff arc=120, aspect=2,right=of B] {$\graph_3$}; 
\node (g) [left=of A] {$\graph(A^*):$}; 
{
\small
\draw [arc] ($(A)+(.5,0)$) to [bend left=10] ($(B)+(-.5,0)$); 
\draw [arc] ($(A)+(.5,.3)$) to [bend left]  ($(B)+(-.5,.2)$); 
\draw [arc] ($(A)+(.5,-.3)$) to [bend right]  ($(B)+(-.5,-.2)$); 
\draw [arc] ($(B)+(.5,0)$) to [bend left=40] node [auto] {$(A^*)_{ji}$} ($(C)+(-.5,0)$); 
}

\node (A) [cloud,draw,fill=LightGray,cloud puffs=10,cloud puff arc=120, aspect=2,below=of A] {$\graph_1$}; 
\node (B) [cloud,draw,fill=LightGray,cloud puffs=15,cloud puff arc=120, aspect=2,right=of A] {$\graph_2$}; 
\node (C) [cloud,draw,fill=LightGray,cloud puffs=7,cloud puff arc=120, aspect=2,right=of B] {$\graph_3$}; 
\node (g) [left=of A] {$\graph(\Phi(A)):$}; 
{
\small
\draw [arc] ($(A)+(.5,0)$) to [bend left=10] ($(B)+(-.5,0)$); 
\draw [arc] ($(A)+(.5,.3)$) to [bend left]  ($(B)+(-.5,.2)$); 
\draw [arc] ($(A)+(.5,-.3)$) to [bend right]  ($(B)+(-.5,-.2)$); 
\draw [arc] ($(B)+(.5,0)$) to [bend left=40] node [auto] {$(A^*)_{ji}$} ($(C)+(-.5,0)$); 
\draw [arc] ($(C)+(-.5,0)$) to [bend left=40] node [auto] {$-(A^*)_{ji}$} ($(B)+(.5,0)$); 
}

\end{tikzpicture}
    \caption{Schematic representation of $\graph(A^*)$ and $\graph(\Phi(A))$ for an example of graph $\graph(A^*)$ consisting of three maximal strongly connected subgraphs $\graph_1,\graph_2,\graph_3$. Compared to $\graph(A^*)$, $\graph(\Phi(A))$ has an additional arc $(j,i)$ of weight $-(A^*)_{ji}$, where $i$ and $j$ belong to $\graph_2$ and $\graph_3$, respectively. Observe that $\graph(\Phi(A))$ has only two maximal strongly connected subgraphs.}\label{fi:cloud_graph}
\end{figure}
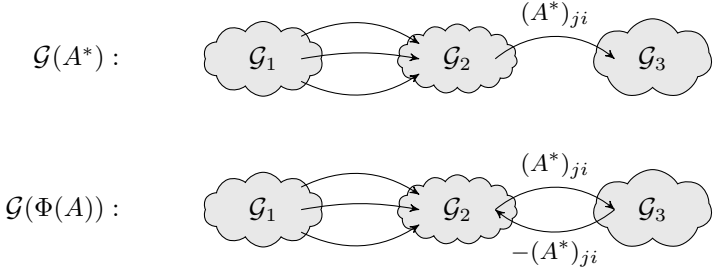

\begin{lemma}\label{le:gallaiaux3}
\zor{For all $A\in\Rmax^{I\times I}$ such that $\graph(A)\in\nonegset$, $\graph(\Phi(A))\in\nonegset$.}
\end{lemma}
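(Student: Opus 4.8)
The plan is to show that applying $\Phi$ to a matrix $A$ with $\graph(A)\in\nonegset$ produces no $\infty$-weight path, i.e.\ $(\Phi(A))^+\in\Rmax^{I\times I}$. If $\Phi(A)=A^*$ (the case where $\graph(A)$ decomposes into disjoint maximal strongly connected subgraphs), the conclusion follows immediately since $(A^*)^+ = A^*\in\Rmax^{I\times I}$ by assumption and \eqref{eq:a_star_a_star}. So I may assume $\Phi(A) = A^* \oplus (-(A^*)_{ji})\otimes U_{(i,j)}$, where $(i,j)$ is the least pair with $(A^*)_{ij}=-\infty$ and $(A^*)_{ji}\neq-\infty$; write $c = -(A^*)_{ji}\in\R$ (it is a genuine real number, since $(A^*)_{ji}\neq-\infty$ and, being an entry of $A^+\in\Rmax^{I\times I}$, it is also $\neq+\infty$). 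The key structural fact, illustrated in \Cref{fi:cloud_graph}, is that the new arc $(j,i)$ of weight $c$ joins node $j$ to node $i$, and since $(A^*)_{ij}=-\infty$ there is \emph{no} path in $\graph(A^*)$ from $i$ to $j$ — so no circuit through the new arc can be created using only old arcs, and hence (as in the three-subgraph picture) the new arc merely merges the maximal strongly connected subgraph containing $i$ with that containing $j$.

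The main step is to compute $(\Phi(A))^*$ explicitly and show its entries stay in $\Rmax$. I would use the classical identity for the star of a sum: in a complete idempotent semiring, $(P\oplus Q)^* = P^*\otimes(Q\otimes P^*)^*$. Taking $P = A^*$ and $Q = c\otimes U_{(i,j)}$, note first that $P^* = (A^*)^* = A^*$ by \eqref{eq:a_star_a_star}. Next, $Q\otimes A^*$ is the matrix whose only possibly-finite rows come from row $i$: its $(i,k)$ entry is $c\otimes (A^*)_{jk}$, and all other rows are $\mathcal{E}$. Then $(Q\otimes A^*)^2$ has $(i,k)$ entry $c\otimes(A^*)_{ji}\otimes c\otimes(A^*)_{jk} = c\otimes 0 \otimes c\otimes(A^*)_{jk}$; but wait — the middle factor is $(Q\otimes A^*)_{?i}$, which is finite only in row $i$, giving $(Q\otimes A^*)_{ii} = c\otimes(A^*)_{ji}$. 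Since $c = -(A^*)_{ji}$ and $(A^*)_{ji}\in\R$, we get $c\otimes(A^*)_{ji} = 0$, so $(Q\otimes A^*)^2 = Q\otimes A^*$, hence $Q\otimes A^*$ is idempotent and $(Q\otimes A^*)^* = E\oplus (Q\otimes A^*)$. Therefore
\[
(\Phi(A))^* = A^*\otimes\bigl(E\oplus (c\otimes U_{(i,j)})\otimes A^*\bigr) = A^* \oplus c\otimes (A^*\otimes U_{(i,j)}\otimes A^*),
\]
and $A^*\otimes U_{(i,j)}\otimes A^*$ is the (rank-one-like) matrix with $(h,k)$ entry $(A^*)_{hi}\otimes(A^*)_{jk}$. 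Every entry of this product lies in $\Rmax$ because $A^*\in\Rmax^{I\times I}$ by hypothesis, and adding the scalar $c\in\R$ keeps it in $\Rmax$; taking $\oplus$ with $A^*\in\Rmax^{I\times I}$ preserves this. Hence $(\Phi(A))^*\in\Rmax^{I\times I}$, and since $(\Phi(A))^+\leq(\Phi(A))^*$ we conclude $\graph(\Phi(A))\in\nonegset$.

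The step I expect to be the main obstacle is rigorously justifying the computation of $(Q\otimes A^*)^*$ and the sum-of-stars identity in the infinite-dimensional setting: one must be careful that all the infima/suprema involved are over countable index sets and that $\otimes$ distributes over them, which is exactly the completeness of $(\Rbar^{I\times I},\oplus,\otimes)$ recalled in the excerpt — so the manipulations are legitimate, but the write-up should point to this explicitly rather than treat the matrices as if finite. A secondary subtlety worth a sentence is confirming that $(A^*)_{ji}\neq+\infty$ so that $c=-(A^*)_{ji}$ is a real number and the "opposite in the standard sense" behaves as an actual additive inverse under $\otimes$; this is immediate from $A^+\in\Rmax^{I\times I}$. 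An alternative, more combinatorial route would argue directly on paths: any path in $\graph(\Phi(A))$ using the new arc $(j,i)$ at most once has weight at most $(A^*)_{\bullet i} + c + (A^*)_{j\bullet}$, and using it $r\geq 2$ times forces a sub-path from $i$ to $j$ in $\graph(A^*)$, which is impossible since $(A^*)_{ij}=-\infty$; thus no new circuits and no new $\infty$-weight behavior arise. I would present the algebraic proof as the main argument and mention the combinatorial intuition as the reason the identity "does the right thing."
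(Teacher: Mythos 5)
Your main algebraic argument is correct, but it takes a genuinely different route from the paper. The paper proves the lemma combinatorially, by contradiction: since $\graph(\Phi(A))$ is $\graph(A^*)$ plus the single arc $(j,i)$ of weight $c=-(A^*)_{ji}$, any $\infty$-weight path must use that arc, and by \Cref{re:elementary_paths_circuits} it suffices to rule out (i) an elementary positive-weight circuit through it, whose weight is at most $(A^*)_{ji}\otimes(-(A^*)_{ji})=0$, and (ii) a sequence of elementary paths through it, whose weights are uniformly bounded by $(A^*)_{mi}\otimes(-(A^*)_{ji})\otimes(A^*)_{j\ell}\in\R$. You instead compute $(\Phi(A))^*$ in closed form: writing $\Phi(A)=A^*\oplus Q$ with $Q=c\otimes U_{(i,j)}$, the denesting identity $(P\oplus Q)^*=P^*(QP^*)^*$ together with the idempotency of $QA^*$ (which follows from $c\otimes(A^*)_{ji}=0$, using that $(A^*)_{ji}\in\R$) gives $(\Phi(A))^*=A^*\oplus c\otimes A^*U_{(i,j)}A^*$, whose entries are manifestly in $\Rmax$. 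This buys an explicit formula for $(\Phi(A))^*$ (matching the graph interpretation) and avoids the case analysis on elementary circuits and path sequences; the price is that the denesting identity and the computation must be justified in the complete idempotent semiring $\Rbar^{I\times I}$, which you rightly flag and which does hold.

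One caveat: your structural aside misreads the paper's adjacency convention. Since $(A^*)_{ij}$ is the supremal weight of paths from $j$ to $i$, the hypothesis $(A^*)_{ij}=-\infty$ means there is no path from $j$ to $i$, while $(A^*)_{ji}\neq-\infty$ means there \emph{is} a path from $i$ to $j$; hence the new arc $(j,i)$ does close (zero-weight) circuits through $i$ and $j$ --- which is precisely how it merges two maximal strongly connected subgraphs (cf.\ \Cref{fi:cloud_graph}). As a consequence, the ``alternative combinatorial route'' you sketch is wrong as stated: a path may well traverse the new arc several times; the correct observation is that each additional traversal contributes at most $c\otimes(A^*)_{ji}=0$, which is essentially the paper's argument. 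None of this affects your main derivation, which never uses $(A^*)_{ij}=-\infty$ and goes through as written.
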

\begin{proof}
\zor{We will prove the lemma by assuming that $\graph(A)$ is connected; the reasoning can easily be extended to the case of nonconnected graph $\graph(A)$.}
Suppose that $\graph(A)\in\nonegset$ and assume, by means of contradiction, that $\graph(\Phi(A))\not\in \nonegset$.
If $\graph(A)$ is strongly connected, we immediately have a contradiction, since $\Phi(A) = A^*\in\Rmax^{I\times I}$.
If $\graph(A)$ is not strongly connected, then $\graph(\Phi(A))$ is identical to $\graph(A^*)$ except for having an additional arc $(j,i)$.
Since, from the hypothesis and~\eqref{eq:a_star_a_star}, $\graph(A^*)$ does not contain $\infty$-weight paths, the $\infty$-weight paths in $\graph(\Phi(A))$ must use the arc $(j,i)$, \ie all sequences of paths $\rho_1,\rho_2,\dots$ with $\lim_{k\rightarrow \infty} |\rho_k|_W = +\infty$ must satisfy: $\exists h\in\N$ such that $\forall k\geq h$, $\rho_k$ contains the arc $(j,i)$.
Recalling \Cref{re:elementary_paths_circuits}, we need to examine the existence of two types of objects: i. elementary circuits with positive weight, and ii. sequences of elementary paths with infinite limit weight.

i. Suppose that there is an elementary circuit with positive weight formed by concatenating arc $(j,i)$ with an elementary path $\rho$ from $i$ to $j$ consisting of arcs from $\graph(A^*)$.
Then, the weight of this path satisfies
\[
    |\rho|_{W} \otimes |(j,i)|_W \leq (A^*)_{ji} \otimes (-(A^*)_{ji}) = 0.
\]

ii. Assume that there is an infinite sequence of elementary paths $\rho_1,\rho_2,\dots$ from $\ell$ to $m$ passing through arc $(j,i)$ (exactly once).
Then, the weight of each path $\rho_k$ is bounded from above by
\[
    (A^*)_{mi} \otimes |(j,i)|_{W} \otimes (A^*)_{j\ell} = (A^*)_{mi} \otimes (-(A^*)_{ji}) \otimes (A^*)_{j\ell} \in \R.
\]

Thus, neither elementary circuits with positive weight nor sequences of paths with infinite limit weight exist, which implies a contradiction and, in turn, that $\graph(\Phi(A))\in\nonegset$.
\end{proof}

\begin{definition}
For all $A\in\Rbar^{I\times I}$, we define $A^\circledast\in\Rbar^{I\times I}$ as
\[
    A^\circledast = \lim_{k\rightarrow \infty} \Phi^k(A),
\]
where $\Phi^0(A) = A$ and $\Phi^{k+1}(A) = \Phi(\Phi^k(A))$.
\end{definition}

\zor{
\begin{example}\label{ex:Phicont}
    The repeated application of function $\Phi$ on matrix $A$ from \Cref{ex:Phi} results in
    \[
        \Phi^1(A) = 
\begin{bmatrix}
    0    &-3    &-3     &1  &-\infty\\
    0     &0    &-3  &-\infty  &-\infty\\
    0     &0     &0  &-\infty  &-\infty\\
    -1    &-1    &-1     &0     &2\\
    -\infty  &-\infty  &-\infty  &-\infty&     0
\end{bmatrix},\quad
        \Phi^2(A) = 
\begin{bmatrix}
    0     &0     &0     &1     &3\\
    0     &0     &0     &1     &3\\
    0     &0     &0     &1     &3\\
    -1    &-1    &-1     &0     &2\\
    -3  &-\infty  &-\infty  &-\infty     &0
\end{bmatrix},
    \]
    and, for all $k\geq 3$,
    \[
        \Phi^k(A) = 
        (\Phi^2(A))^* = 
\begin{bmatrix}
    0     &0     &0     &1     &3\\
    0     &0     &0     &1     &3\\
    0     &0     &0     &1     &3\\
    -1    &-1    &-1     &0     &2\\
    -3    &-3    &-3    &-2     &0
\end{bmatrix}.
    \]
    Thus, $A^\circledast = \Phi^3(A)$.
\end{example}
}

\zor{
    \begin{lemma}\label{le:gallaiaux2}
    For all $A\in\Rmax^{I\times I}$ such that $\graph(A)\in\nonegset$:
    \begin{itemize}
        \item $A^\circledast\geq A$,
        \item if $i,j\in I$ belong to the same maximal connected subgraph of $\graph(A)$, then $(A^\circledast)_{ij}\neq-\infty$,
        \item $A^\circledast\in\Rmax^{I\times I}$ and $\graph(A^\circledast)\in\nonegset$.\qedhere
    \end{itemize}
\end{lemma}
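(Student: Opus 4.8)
The plan is to prove the three items in the stated order: bullet~1 will be essentially immediate, bullet~2 will follow from a finiteness argument based on the good order, and bullet~3 will be obtained by playing the monotonicity of the iteration against the $\nonegset$-invariant furnished by \Cref{le:gallaiaux3}; the last step is where the real work lies. I would begin with two preliminary observations. First, by induction on $k$, $\graph(\Phi^k(A))\in\nonegset$ for all $k\in\No$ -- the base case is the hypothesis and the inductive step is exactly \Cref{le:gallaiaux3}; hence every $\Phi^k(A)$ and $\Phi^k(A)^*$ lies in $\Rmax^{I\times I}$, and $(\Phi^k(A)^*)_{ii}=0$, so $(\Phi^k(A)^*)_{ij}\otimes(\Phi^k(A)^*)_{ji}\leq 0$ for all $i,j$ (a positive-weight circuit through $i$ would contradict membership in $\nonegset$). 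Second, for any $B\in\Rbar^{I\times I}$ one has $\Phi(B)\geq B^*\geq B$, so $k\mapsto\Phi^k(A)$ is pointwise non-decreasing, and since moreover $\Phi^k(A)\leq\Phi^k(A)^*\leq\Phi^{k+1}(A)$, its limit can be written as the pointwise supremum $A^\circledast=\bigoplus_k\Phi^k(A)=\bigoplus_k\Phi^k(A)^*$. Bullet~1 is then immediate, since $A^\circledast\geq\Phi^0(A)=A$. I would also note that $\Phi$ never adds an arc between two distinct maximal connected subgraphs (a bad pair $(i,j)$ of $B$ has $(B^*)_{ji}\neq-\infty$, so $i,j$ already lie in the same connected subgraph, and the new arc only joins $i$ and $j$); hence $\graph(\Phi^k(A))$ has the same maximal connected subgraphs as $\graph(A)$ for all $k$, so $(\Phi^k(A)^*)_{ij}=-\infty$ whenever $i,j$ lie in different ones.

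For bullet~2, fix $i,j$ in the same maximal connected subgraph and choose an undirected path $i=v_0,v_1,\dots,v_\ell=j$ in $\graph(A)$. It suffices to show that, for each $m$, both $(A^\circledast)_{v_mv_{m+1}}$ and $(A^\circledast)_{v_{m+1}v_m}$ differ from $-\infty$: composing these along the path using the transitivity of each Kleene star $\Phi^k(A)^*$ (cf.\ \eqref{eq:a_star_a_star}) then gives $(A^\circledast)_{ij},(A^\circledast)_{ji}\neq-\infty$. By symmetry assume $\graph(A)$ has an arc $v_m\to v_{m+1}$; then $(A^*)_{v_{m+1}v_m}\neq-\infty$, so $(A^\circledast)_{v_{m+1}v_m}\neq-\infty$, and only the reverse entry remains. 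If $(A^*)_{v_mv_{m+1}}\neq-\infty$ we are done. Otherwise, as long as $(\Phi^k(A)^*)_{v_mv_{m+1}}=-\infty$ the pair $(v_m,v_{m+1})$ is a bad pair of $\Phi^k(A)$ (its transpose entry is $\neq-\infty$), so at step $k$ the map $\Phi$ must pick \emph{some} least bad pair $(i_k,j_k)\preceq(v_m,v_{m+1})$ and add a finite-weight arc that makes $(\Phi^{k+1}(A)^*)_{i_kj_k}\neq-\infty$; by monotonicity of $k\mapsto\Phi^k(A)^*$ this status persists forever. The pairs $(i_k,j_k)$ picked at successive steps are pairwise distinct -- a pair bad at step $k$ cannot have been fixed at an earlier step -- and all are $\preceq(v_m,v_{m+1})$; since the good order on $I\times I$ has only finitely many elements $\preceq(v_m,v_{m+1})$, the pair $(v_m,v_{m+1})$ must itself become fixed after finitely many steps, i.e.\ $(\Phi^k(A)^*)_{v_mv_{m+1}}\neq-\infty$ for some $k$, so $(A^\circledast)_{v_mv_{m+1}}\neq-\infty$.

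For bullet~3 I would first show that $A^\circledast$ is itself a Kleene star. Since $A^\circledast=\bigoplus_k\Phi^k(A)^*$ is a pointwise supremum of a non-decreasing sequence of transitive matrices and $\otimes$ commutes with such suprema in $\Rbar$, one gets $A^\circledast\otimes A^\circledast\leq A^\circledast$; together with $A^\circledast\geq A^*\geq E$ this yields $A^\circledast=(A^\circledast)^*$ and hence $(A^\circledast)^+=A^\circledast$. Consequently $\graph(A^\circledast)\in\nonegset$ is equivalent to $A^\circledast\in\Rmax^{I\times I}$, i.e.\ to $A^\circledast$ having no $+\infty$ entry. Suppose, for contradiction, that $(A^\circledast)_{ij}=+\infty$ for some $i,j$. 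By the last observation of the first paragraph, $i$ and $j$ must lie in the same maximal connected subgraph. Now $(\Phi^k(A)^*)_{ij}\nearrow+\infty$, while preliminary fact (i) gives $(\Phi^k(A)^*)_{ji}\leq-(\Phi^k(A)^*)_{ij}$ for every $k$, so $(\Phi^k(A)^*)_{ji}\to-\infty$. But $k\mapsto(\Phi^k(A)^*)_{ji}$ is non-decreasing, so a limit of $-\infty$ forces $(\Phi^k(A)^*)_{ji}=-\infty$ for all $k$, whence $(A^\circledast)_{ji}=-\infty$ -- contradicting bullet~2. Therefore $A^\circledast\in\Rmax^{I\times I}$, so $(A^\circledast)^+=A^\circledast\in\Rmax^{I\times I}$, i.e.\ $\graph(A^\circledast)\in\nonegset$.

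The main obstacle is precisely the ``no $+\infty$ entry'' step: there is no obvious closed form for $A^\circledast$, so one must rule out divergence structurally. What makes it work is the tension between two one-directional facts -- the sequence $\Phi^k(A)$ can only increase, so no entry can drift down to $-\infty$, yet the invariant $(\Phi^k(A)^*)_{ij}+(\Phi^k(A)^*)_{ji}\leq 0$ of \Cref{le:gallaiaux3} means an entry blowing up to $+\infty$ would drag its transpose down to $-\infty$. One should be careful that bullet~2, which supplies the contradiction, genuinely exploits the defining property of a good order (finitely many predecessors), as this is what prevents the argument from failing on non-locally-finite graphs or graphs with infinitely many connected subgraphs; and that the transitivity argument for $A^\circledast=(A^\circledast)^*$ requires the easy but necessary check that $\otimes$ is continuous along monotone sequences in $\Rbar$, the only slightly delicate case being $(+\infty)\otimes(-\infty)=-\infty$.
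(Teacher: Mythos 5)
Your proposal is correct and follows essentially the same route as the paper: induction with \Cref{le:gallaiaux3} to keep every $\graph(\Phi^k(A))$ in $\nonegset$, a finitely-many-predecessors argument on the good order to get bullet~2, and a contradiction combining bullet~2 with the non-positive-circuit invariant for bullet~3. The only cosmetic differences are that you derive the contradiction from the transpose entry being forced to $-\infty$ (the paper instead forces a diagonal entry to $+\infty$), and you verify $A^\circledast=(A^\circledast)^*$ explicitly rather than asserting it.
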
}
\begin{proof}
Clearly, $A^\circledast\geq A$ because $\Phi(A)\geq A^*\geq A$.

\zor{In the rest of the proof, we will assume, without loss of generality, that $\graph(A)$ is connected.
If $\graph(A)$ is not connected, the following reasoning can be repeated for each maximal connected subgraphs of $\graph(A)$.
Under this assumption, the second statement of the lemma becomes "for all $i,j\in I$, $(A^\circledast)_{ij}\neq -\infty$", and the third one "$A^\circledast\in\R^{I\times I}$ and $\graph(A^\circledast)\in\nonegset$".}

\zor{Since $A^\circledast\geq \Phi^k(A)$ for all $k\in\No$, to prove the second statement of the lemma it will be sufficient to show that, for any pair of nodes $i,j\in I$, there exists a $k\in\No$ such that $(\Phi^k(A))_{ji}\neq -\infty$.
Since $\graph(A)$ is connected, there exists a finite sequence of pairs $(k_1,k_2),\,(k_2,k_3),\,\ldots,\,(k_{h-1},k_h)$ such that $k_1=i$, $k_h=j$ and, for all $\ell\in\dint{1,h-1}$, either $(k_\ell,k_{\ell+1})$ or $(k_{\ell+1},k_{\ell})$ is an arc of $\graph(A)$.
Let $f:I\times I\rightarrow \N$ be the injection defining the good order $\preceq$ used to compute $\Phi(A)$, and let
\[
    m = \max\{f((k_{\ell+1},k_{\ell})) \mid \ell\in\dint{1,h-1}\}\in\N.
\]
Because the $k$-th application of $\Phi$ adds an arc $(k_{\ell},k_{\ell+1})$ to $\graph(\Phi^{k-1}(A)^*)$ if $(k_{\ell+1},k_{\ell})$ is an arc of $\graph(\Phi^{k-1}(A)^*)$, $(k_{\ell},k_{\ell+1})$ is not, and $(k_{\ell+1},k_{\ell})$ is minimal according to $\preceq$, in $k\leq m$ applications of $\Phi$ there will be a path from $i$ to $j$ in $\graph(\Phi^k(A))$.
Therefore, the precedence graph of $\Phi^{k+1}(A)$ will contain an arc from $i$ to $j$, since $\Phi^{k+1}(A)\geq (\Phi^{k}(A))^*$.
}

It remains to be proven that \zor{$A^\circledast\in\R^{I\times I}$ and} $\graph(A^\circledast)\in\nonegset$.
\zor{By using \Cref{le:gallaiaux3}, it is easy to prove by induction that $\graph(\Phi^k(A))\in\nonegset$ for all $k\in\No$.
We have already proven that $A^\circledast\in(\R\cup\{+\infty\})^{I\times I}$.
Observe that it is not possible that $A^\circledast\in\R^{I\times I}$ and $\graph(A^\circledast)\not\in\nonegset$, since $(A^\circledast)^*=A^\circledast$.
Therefore, either $A^\circledast\in\R^{I\times I}$ and $\graph(A^\circledast)\in\nonegset$ or $(A^\circledast)_{ij} = +\infty$ for some $i,j\in I$.
If $(A^\circledast)_{ij}=+\infty$ for some $i,j$, then $(A^\circledast)_{ii}=+\infty$, because $(A^\circledast)_{ii} = ((A^\circledast)^*)_{ii}\geq (A^\circledast)_{ji} \otimes (A^\circledast)_{ij}$ and $(A^\circledast)_{ij}\in\R\cup\{+\infty\}$.
However, remember from \Cref{le:gallaiaux3} that, for all $k\in\No$,} all circuits in $\graph(\Phi^k(A))$ have weight at most $0$.
Since the limit of sequence $(\Phi^0(A)^*)_{ii},\,(\Phi^1(A)^*)_{ii},\dots\in[-\infty,0]$ cannot exceed $0$ for all $i$, we conclude that $\graph(A^\circledast)$ cannot contain circuits with positive weight, and, as a consequence, that \zor{$A^\circledast\in\R^{I\times I}$ and} $\graph(A^\circledast)\in\nonegset$.
\end{proof}

We are now ready to prove \Cref{th:gallai}.

\begin{proof}[Proof of \Cref{th:gallai}]
The implication "$\Rightarrow$" has an elementary proof by contrapositive.
Suppose that $\graph(A)\not\in \nonegset$; then, $(A^*)_{ij}=+\infty$ for some $i,j\in I$.
Recall from~\Cref{th:Kleene_expressions} that $x\geq A x$ if and only if $x = A^*  x$. 
The $i$-th equation of $x = A^* x$ reads
\[
        x_i = \bigoplus_{k\in I} (A^*)_{ik} x_k 
        = (A^*)_{ij} x_j \oplus \bigoplus_{k\in I\setminus\{j\}} (A^*)_{ik} x_k.
\]
If $x_j\in\R$, then $x_i \geq (A^*)_{ij} x_j = +\infty$, implying that no real solution of $x\geq A x$ exists.

We prove the direction "$\Leftarrow$" in a constructive way.
Take $x$ as a max-plus linear combination of columns of matrix $A^*$:
\[
    x = \bigoplus_{j\in I} \alpha_j (A^*)_{\cdot j},
\]
where $\alpha_j\in\Rmax$ for all $j\in I$.
Such an $x$ always forms a solution of $x = A^* x$ (even when $\graph(A)\not\in\nonegset$) because, from the distributivity of $\otimes$ over $\oplus$, the commutativity of the scalar-matrix $\otimes$, and property $a^*\otimes a^* = a^*$,
\[
    A^* x = A^* \otimes \left( \bigoplus_{j\in I} \alpha_j (A^*)_{\cdot j} \right)
          = \bigoplus_{j\in I} \alpha_j A^* (A^*)_{\cdot j} 
          = \bigoplus_{j\in I} \alpha_j (A^*)_{\cdot j} = x.
\]

The more technical part of the proof is to find $\alpha_j$, $j\in I$, such that $x$ is \zor{a vector with only real (i.e., finite)} elements when $\graph(A)\in\nonegset$.

\zor{In case matrix $A\in\Rmax^{n\times n}$ has finitely many elements}, then taking any $\alpha_1,\dots,\alpha_n\in\R$ works since, for all $i\in\dint{1,n}$,
\[
    x_i = \bigoplus_{j\in\dint{1,n}} \alpha_j (A^*)_{ij} \geq \alpha_i (A^*)_{ii} \eqtop{\Cref{re:elementary_paths_circuits}} \alpha_i\otimes 0 = \alpha_i\in\R,
\]
\zor{which implies that $x_i>-\infty$, and} $x_i\neq +\infty$ as $\R$ is closed for finitely many $\oplus$'s.
\zor{However,} when $I$ is an infinite set, taking arbitrary real $\alpha_j$ does not always provide a real solution $x$ because $\R$ is not closed for infinite $\oplus$'s.
We thus need to "dampen" the growth of elements in columns of $A^*$ to avoid diverging to $+\infty$.

Our strategy to achieve this is based on the following observation:
if $\graph(A)\in\nonegset$ is strongly connected, then for all $j\in I$, $(A^*)_{\cdot j}\in \R^I$.
Indeed, if $\graph(A)$ is strongly connected, then $\graph(A^*)$ is fully connected, \ie $(A^*)_{ij}\neq -\infty$ for all $i,j\in I$.
Therefore, in the case of a strongly connected precedence graph, a real solution $x$ is given by picking an arbitrary $\overline{j}\in I$ and choosing
\begin{equation}\label{eq:alpha}
    \alpha_j = 
    \begin{dcases}
        0 & \mbox{if }j = \overline{j},\\
        -\infty & \mbox{else.}
    \end{dcases}
\end{equation}
\zor{This choice corresponds to the solution $x = (A^*)_{\cdot\overline{j}}\in\R^I$.}

This technique is easily generalized to the case when $\graph(A)$ consists of the union of disjoint (\ie not connected by any arc) maximal strongly connected subgraphs $\graph_{1},\graph_{2},\dots$
Here, element $(i,j)$ of $A^*$ is real if and only if $i$ and $j$ belong to the same maximal strongly connected subgraph.
Then, to generate a real solution $x$ it is sufficient to select arbitrarily one node $\overline{j}_k$ from each subgraph $\graph_k$ and define
\begin{equation}\label{eq:alpha1}
    \alpha_j =
    \begin{dcases}
        0 & \mbox{if } j \in \{ \overline{j}_1,\,\overline{j}_2,\ldots\},\\
        -\infty & \mbox{else.}
    \end{dcases}
\end{equation}
\zor{In this way, we get the solution $x = \bigoplus_{k=1,2,\ldots} (A^*)_{\cdot \overline{j}_k}\in\R^I$.}

We now consider the case when $\graph(A)$ is connected but not strongly connected.
\zor{As seen in \Cref{le:gallaiaux2}, matrix $A^\circledast$ satisfies $A^\circledast\geq A$, $A^\circledast\in \R^{I\times I}$, and $\graph(A^\circledast)\in\nonegset$.}
\zor{Therefore,} any real solution of $x\geq A^\circledast x$ (which can be obtained \zor{by defining $\alpha_j$ as in \eqref{eq:alpha}}, as any fully connected graph is strongly connected) also solves $x\geq A x$, since $x\geq A^\circledast x = (A^\circledast\oplus A)x = A^\circledast x \oplus Ax\geq Ax$.

If $\graph(A)$ is not even connected, then %
the graph $\graph(A^\circledast)$ will not be fully connected, but will consist of the union of disjoint fully connected subgraphs.
\zor{Therefore, we can use the same strategy defined in \eqref{eq:alpha1} to get a} $x\in\R^I$ that solves $x\geq A^\circledast x\geq A x$.
\end{proof}

\begin{example}\label{ex:potential_inequalities_2}
    \Cref{fi:graph_to_circledast} illustrates the precedence graph of matrix $A$ defined in \Cref{ex:potential_inequalities_1}.
    By inspecting the graph, one can easily be convinced that $\graph(A)\in\nonegset$.
    Therefore, \Cref{th:gallai} guarantees that the system of inequalities~\eqref{eq:example_potential_inequalities} admits a real solution $x\in\R^\N$.
    To compute one, we apply the method shown in the proof.
    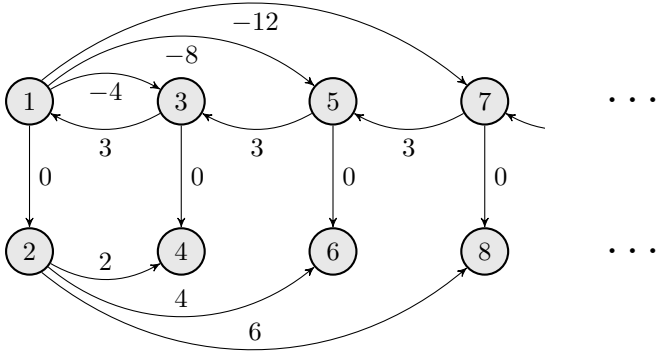
\begin{figure}[t]
        \centering
        \begin{tikzpicture}[on grid,node distance=2cm]

\newcommand{\myangle}{45}

\node [place] (p1) {1};
\node [place,right=of p1] (p2) {3};
\node [place,right=of p2] (p3) {5};
\node [place,right=of p3] (p4) {7};
\node [place,right=of p4] (p5) {8};

\node [place,below=of p1] (p11) {2};
\node [place,right=of p11] (p21) {4};
\node [place,right=of p21] (p31) {6};
\node [place,right=of p31] (p41) {8};
\node [right=of p41] (dots) {\Large$\cdots$};

\draw [-stealth'] (p1) to [bend left] node [below] {$-4$} (p2);
\draw [-stealth'] (p1) to [bend left=40] node [below] {$-8$} (p3);
\draw [-stealth'] (p1.60) to [bend left=40] node [below] {$-12$} (p4);

\draw [-stealth'] (p2) to [bend left] node [auto] {$3$} (p1);
\draw [-stealth'] (p3) to [bend left] node [auto] {$3$} (p2);
\draw [-stealth'] (p4) to [bend left] node [auto] {$3$} (p3);
\draw [-stealth'] (p5) to [bend left] node [auto] {$3$} (p4);

\fill [white] ($(p4)!0.4!(p5)+(0,-1cm)$) rectangle ($(p5)+(.5cm,.5cm)$);

\node [right=of p4] (dots1) {\Large$\cdots$};

\draw [-stealth'] (p1) to node [auto] {$0$} (p11);
\draw [-stealth'] (p2) to node [auto] {$0$} (p21);
\draw [-stealth'] (p3) to node [auto] {$0$} (p31);
\draw [-stealth'] (p4) to node [auto] {$0$} (p41);

\draw [-stealth'] (p11) to [bend right] node [auto] {$2$} (p21);
\draw [-stealth'] (p11) to [bend right=40] node [auto] {$4$} (p31);
\draw [-stealth'] (p11.-60) to [bend right=40] node [auto] {$6$} (p41);
\end{tikzpicture}
        \caption{Precedence graph corresponding to the difference bound matrix of \Cref{ex:potential_inequalities_1}.}\label{fi:graph_to_circledast}
    \end{figure}
    The Kleene star of $A$ is\footnote{In this example, elements of $A^*$ and $A^\circledast$ were obtained by inspection using their graphical interpretation.}
    \[
        A^* = 
        \begin{bNiceMatrix}
            0 &  \cdot &     3 &  \cdot &     6 &  \cdot &     9 &  \cdot & \cdots\\
            0 &     0 &     3 &  \cdot &     6 &  \cdot &     9 &  \cdot & \cdots\\
            -4 &  \cdot &     0 &  \cdot &     3 &  \cdot &     6 &  \cdot & \cdots\\
            2 &     2 &     5 &     0 &     8 &  \cdot &    11 &  \cdot & \cdots\\
            -8 &  \cdot &    -5 &  \cdot &     0 &  \cdot &     3 &  \cdot & \cdots\\
            4 &     4 &     7 &  \cdot &    10 &     0 &    13 &  \cdot & \cdots\\
            -12 &  \cdot &    -9 &  \cdot &    -6 &  \cdot &     0 &  \cdot & \cdots\\
            6 &     6 &     9 &  \cdot &    12 &  \cdot &    15 &     0 & \cdots\\
            \vdots &     \vdots &     \vdots &  \vdots & \vdots &  \vdots & \vdots & \vdots & \ddots
        \end{bNiceMatrix}.
    \]
    \zor{Let us define $\Phi$ according to the good order $\preceq$ derived from the Cantor pairing function $f:\N\times\N\rightarrow\N$.}
    The precedence graph $\graph(A^*)$ consists of infinitely many maximal strongly connected subgraphs and, in this particular example, the sequence $\Phi^k(A)$ does not converge to $A^\circledast$ in finitely many steps, but only for $k\rightarrow \infty$.
    In the first application of $\Phi$, \zor{the least (according to $\preceq$)} pair $(i,j)$ such that $(A^*)_{ij}=-\infty$ and $(A^*)_{ji}\neq -\infty$ is $(1,2)$.
    Matrix $\Phi(A)$ with this choice coincides with $A^*$ except for entry $(1,2)$, which is equal to $0 = -(A^*)_{21}$.
    After infinitely many applications of $\Phi$, we would get
    \[
        A^\circledast = 
        \begin{bNiceMatrix}
            0 &      0 &     3 &    -2 &     6 &    -4 &     9 &    -6 & \cdots\\ 
            0 &     0 &     3 &    -2 &     6 &    -4 &     9 &    -6 & \cdots\\
            -4 &    -4 &     0 &    -6 &     3 &    -8 &     6 &   -10 & \cdots\\
            2 &     2 &     5 &     0 &     8 &    -2 &    11 &    -4 & \cdots\\
            -8 &    -8 &    -5 &   -10 &     0 &   -12 &     3 &   -14 & \cdots\\
            4 &     4 &     7 &     2 &    10 &     0 &    13 &    -2 & \cdots\\
            -12 &   -12 &    -9 &   -14 &    -6 &   -16 &     0 &   -18 & \cdots\\
            6 &     6 &     9 &     4 &    12 &     2 &    15 &     0 & \cdots\\
      \vdots & \vdots & \vdots & \vdots & \vdots & \vdots & \vdots & \vdots & \ddots 
        \end{bNiceMatrix}.
    \]
    We can now extract a solution of~\eqref{eq:example_potential_inequalities} from any column of $A^\circledast$.
    For instance, the first column provides the solution (written in standard algebra)
    \[
        x_k = 
        \begin{dcases}
            -4\cdot\frac{k-1}{2} & \mbox{if $k$ is odd,}\\
            2\cdot\frac{k-2}{2} & \mbox{if $k$ is even.}
        \end{dcases}
    \]
    Note that the first columns of $A^*$ and $A^\circledast$ are identical; the reason is that there is a path from node $1$ to any other node in $\graph(A)$.
    This illustrates a case in which the computation of $A^\circledast$ for finding solutions of precedence constraints can be avoided.
\end{example}

The procedure described in the proof of~\Cref{th:gallai} has significant theoretical value.
However, in the case of infinite difference bound matrices it does not immediately translate into a useful algorithm.\footnote{Recall the difference between an algorithm and a procedure; only algorithms need to terminate in finite time.}
In~\Cref{se:P_TEGs}, we will study interesting classes of precedence constraints with infinitely many variable\zor{s} in infinitely many inequalities where verifying whether a solution exists in finite time is always possible.

\section{$\N$-periodic graphs}\label{se:N_periodic_graphs}

In this section, we study a decision problem related to $\N$-periodic graphs.

\subsection{Definitions}

Let us first define static (or uniform) graphs. %

\begin{definition}[Static graph]
    Given three $n\times n$ matrices $L$, $C$, $R$ (standing, respectively, for "left", "center", and "right") with elements from $\Rmax$, the associated static graph $\graph = \graph(L,C,R) = (\nodes,\arcs,w)$ is the weighted multi-directed graph\footnote{\zor{In multi-directed graphs, there may be multiple arcs connecting two nodes.}} with set of nodes $\nodes = \dint{1,n}$, set of arcs $\arcs\subseteq \nodes\times \{-1,0,+1\}\times \nodes$, and weight function $w:\arcs\rightarrow \R$, defined such that there is an arc $e = (i,s,j)\in \arcs$ from the upstream node $i \eqqcolon \myup(e)$ to the downstream node $j \eqqcolon \mydown(e)$ with shift $s \eqqcolon \myshift(e)$ and weight $(X_{s})_{ji} \eqqcolon w(e)$ if and only if $(X_{s})_{ji} \neq -\infty$, where
\[
    X_s = 
    \begin{dcases}
        L & \mbox{if }s = -1,\\
        C & \mbox{if }s = 0,\\
        R & \mbox{if }s = +1.\qedlineend
    \end{dcases}
\]
\end{definition}

Every static graph induces (or generates) $\Z$-periodic, $\N$-periodic, and -- more generally -- $\S$-periodic graphs as follows.

\begin{definition}[$\S$-periodic graphs]
Let $\S$ be any subset of $\Z$ and let $\graph = \graph(L,C,R) = (\nodes,\arcs,w)$ be a static graph.
The $\S$-periodic graph $\graph_\S = \graph_\S(L,C,R) = (\nodes_{\S},\arcs_{\S},w)$ induced by $\graph$ is the (possibly infinite) weighted directed graph with set of nodes $\nodes_{\S} = \nodes \times \S$, set of arcs 
\[
    \arcs_\S = \{((i,k),(j,k+s))\mid (i,s,j)\in \arcs,\ k,k+s\in\S\},
\] 
and weight function\footnote{With slight abuse of notation, we denote by $w$ the weight functions of both $\graph$ and $\graph_\S$.} $w:\arcs_\S\rightarrow \R$ defined by $w(((i,k),(j,k+s))) = w((i,s,j))$.
The base and shift\footnote{The notation on static and $\S$-periodic graphs differs among publications. For instance, what here is called "shift" is referred as to "transit" in \cite{hoefting1995minimum,orlin1984some} and "height" in \cite{munier2011graph}. Moreover, $\N$-periodic graphs are called uniform graphs in \cite{munier2011graph}.} of a node $v_\S = (i,k)\in \nodes_\S$ are, respectively, $\myheight(v_\S) \coloneqq i$ and $\myshift(v_\S) \coloneqq k$.
For every arc $e_\S = ((i,k),(j,k+s))\in \arcs_\S$, let $\myup(e_\S) \coloneqq (i,k)$, $\mydown(e_\S) \coloneqq (j,k+s)$, $\myshift(e_\S) \coloneqq s$.
\end{definition}

\begin{example}\label{ex:simple_graph}
Figure~\ref{fi:static_periodic_graphs} illustrates the static graph associated with matrices 
\[
L = \begin{bmatrix}
        \alpha & -\infty\\-\infty & -3
    \end{bmatrix}, \ C = \begin{bmatrix}
        -\infty & -\infty\\0 & -\infty
    \end{bmatrix}, \ R = \begin{bmatrix}
        \beta & -\infty\\-\infty & 2
    \end{bmatrix},
\]
where $\alpha,\beta\in\R$, together with (portions of) its induced $\N$-periodic and $\Z$-periodic graphs.
\end{example}

$\Z$-periodic graphs have been studied under the name of periodic (or dynamic) graphs, see, \eg \cite{orlin1984some,hoefting1995minimum}.
By relabeling the nodes of a $\Z$-periodic graph $\graph_\Z$, we can see that it coincides with the precedence graph $\graph(M_\Z)$ of matrix
\[
    M_\Z = 
    \begin{bmatrix}
        \ddots & \vdots & \vdots & \vdots & \vdots & \iddots\\
        \cdots & C & L & \cdot & \cdot & \cdots\\
        \cdots & R & C & L & \cdot & \cdots\\
        \cdots & \cdot & R & C & L & \cdots\\
        \cdots & \cdot & \cdot & R & C & \cdots\\
        \iddots & \vdots & \vdots & \vdots & \vdots & \ddots
    \end{bmatrix}\in\Rmax^{\Z\times \Z},
\]
where each "$\cdot$" stands for \zor{the} $n\times n$ matrix $\mathcal{E}$.
$\N$-periodic graphs have incidence matrix of the form
\begin{equation}\label{eq:M_N}
    M_\N = 
    \begin{bmatrix}
        C & L & \cdot & \cdot & \cdots\\
        R & C & L & \cdot & \cdots\\
        \cdot & R & C& L & \cdots\\
        \cdot & \cdot & R & C& \cdots\\
        \vdots & \vdots & \vdots & \vdots & \ddots
    \end{bmatrix}\in\Rmax^{\N\times \N}.
\end{equation}

\begin{figure}[t]
\centering
\subfloat[Static graph $\graph$.]{\label{fi:static_graph}\resizebox{!}{.18\textheight}{\begin{tikzpicture}[node distance=2cm and 2cm,place/.append style={minimum size=1cm},inner sep=1pt,on grid]

{
\footnotesize
\node [place] (ntop) at (0,0) {$1$};
\node [place] (nbot) at (0,2*-1) {$2$};
}
\draw [arc] (ntop) to node[auto] {$0,0$} (nbot);
\draw [arc] (ntop) to [out=20,in=-20,loop] node[above=.3cm] {$+1,\beta$} (ntop);
\draw [arc] (ntop) to [out=180-20,in=180+20,loop] node[above=.3cm] {$-1,\alpha$} (ntop);
\draw [arc] (nbot) to [out=20,in=-20,loop] node[below=.3cm] {$+1,2$} (nbot);
\draw [arc] (nbot) to [out=180-20,in=180+20,loop] node[below=.3cm] {$-1,-3$} (nbot);

\draw [arc,draw=none] (-.5,-3.5) to (.5,-3.2);

\end{tikzpicture}}}
\hfill
\subfloat[Portion of an $\N$-periodic graph $\graph_\N$.]{\label{fi:N_periodic_graph}\resizebox{!}{.2\textheight}{\begin{tikzpicture}[node distance=2cm and 2cm,place/.append style={minimum size=1cm},on grid,inner sep=1pt]

\clip (2.2*3+1.5,-3.5) rectangle (7*2.2-1.3,1);

\foreach [count=\i, evaluate=\i as \zo using int(\i-3)] \z in {1,...,7}
{
\ifnum\z<4
\else
    {
    \footnotesize
    \node [place] (ntop\z) at (2.2*\z,0) {$(1,\zo)$};
    \node [place] (nbot\z) at (2.2*\z,2*-1) {$(2,\zo)$};
    }
    \draw [arc] (ntop\z) to node[auto] {$0$} (nbot\z);
\fi
}

\foreach[count=\i, evaluate=\i as \zz using int(\i+4)] \z in {4,5,...,6}
{
\draw [arc] (ntop\z) to [bend left=30] node[auto] {$\beta$} (ntop\zz);
\draw [arc] (nbot\z) to [bend left=30] node[auto] {$2$} (nbot\zz);
}

\foreach[count=\i, evaluate=\i as \zz using int(\i+3)] \z in {5,6,...,7}
{
\draw [arc] (ntop\z) to [bend left=30] node[auto] {$\alpha$} (ntop\zz);
\draw [arc] (nbot\z) to [bend left=30] node[auto] {$-3$} (nbot\zz);
}

\draw [arc] (4*2.2,-3.2) to node[pos=.9,above] {$\N$} (7*2.2-1.5,-3.2);

\end{tikzpicture}}}
\hfill
\subfloat[Portion of a $\Z$-periodic graph $\graph_\Z$.]{\label{fi:Z_periodic_graph}
    \resizebox{!}{.2\textheight}{\begin{tikzpicture}[node distance=2cm and 2cm,arc/.style={->,>=stealth'},place/.style={draw,thick,fill=mygrey!15!white,circle,minimum size=1cm},on grid,inner sep=1pt]

\clip (2.2+1.3,-3.5) rectangle (7*2.2-1.3,1);

\foreach [count=\i, evaluate=\i as \zo using int(\i-4)] \z in {1,...,7}
{
{
\footnotesize
\node [place] (ntop\z) at (2.2*\z,0) {$(1,\zo)$};
\node [place] (nbot\z) at (2.2*\z,2*-1) {$(2,\zo)$};
}
\draw [arc] (ntop\z) to node[auto] {$0$} (nbot\z);
}

\foreach[count=\i, evaluate=\i as \zz using int(\i+1)] \z in {1,...,6}
{
\draw [arc] (ntop\z) to [bend left=30] node[auto] {$\beta$} (ntop\zz);
\draw [arc] (nbot\z) to [bend left=30] node[auto] {$2$} (nbot\zz);
}

\foreach[count=\i, evaluate=\i as \zz using int(\i)] \z in {2,...,7}
{
\draw [arc] (ntop\z) to [bend left=30] node[auto] {$\alpha$} (ntop\zz);
\draw [arc] (nbot\z) to [bend left=30] node[auto] {$-3$} (nbot\zz);
}

\draw [stealth'-stealth'] (2.2+1.5,-3.2) to node[pos=.95,above] {$\Z$} (7*2.2-1.5,-3.2);

\end{tikzpicture}}
}
\caption{Static graph and corresponding $\N$- and $\Z$-periodic graphs. Every arc $e$ in the static graph is labeled "$\myshift(e),w(e)$". Every node $v$ in the periodic graphs is labeled "$(\myheight(v),\myshift(v))$".}
\label{fi:static_periodic_graphs}
\end{figure}

\zor{\begin{example}
Observe that the graph in \Cref{fi:infinite_precedence_graph} is $\N$-periodic, while the one in \Cref{fi:graph_to_circledast} is not.
\end{example}}

To simplify the notation in the propositions of this section, it is convenient to provide alternative definitions of paths and circuits for static and $\S$-periodic graphs.
A path $p$ in either a static or an $\S$-periodic graph is an alternating sequence $p = (v_1,e_1,v_2,\ldots,v_m)$ of nodes $v_i$ and arcs $e_i$ such that $\myup(e_i) = v_i$ and $\mydown(e_i) = v_{i+1}$ for all $i\in\dint{1,m-1}$.
The length of $p$ is $\mylen(p) = m-1$.
A path $p$ is called circuit if $v_1\eqqcolon\myup(p)$ and $v_m\eqqcolon\mydown(p)$ coincide. %
A path $p$ in an $\S$-periodic graph is a pseudo-circuit if $\myheight(v_1) = \myheight(v_m)$.
Each node (resp., arc, path, pseudo-circuit) of an $\S$-periodic graph $\graph_\S$ corresponds to a unique node (resp., arc, path, circuit) of the associated static graph $\graph$.
On the other hand, each node (resp., arc, path, circuit) of $\graph$ induces infinitely many nodes (resp., arcs, paths, pseudo-circuits) of $\graph_\N$ and $\graph_\Z$.
The shift, weight, left-shift ($\mylshift$), and right-shift ($\myrshift$) of path $p$ are defined by\footnote{In the definition of $\mylshift$ and $\myrshift$, we assume by convention that the empty sum is equal to $0$.}
\[
    \myshift(p) = \sum_{i=1}^{m-1} \myshift(e_i),\quad
    \myweight(p) = \sum_{i=1}^{m-1} w(e_i),
\]
\[
    \mylshift(p) = \min_{i\in\dint{0,m-1}} \sum_{j=1}^{i} \myshift(e_j)\leq 0,\
    \myrshift(p) = \max_{i\in\dint{0,m-1}} \sum_{j=1}^i \myshift(e_j)\geq 0.
\]
Furthermore, if $p$ is a path in $\graph_\Z$, then it is also a path in $\graph_\N$ if and only if
\begin{equation}\label{eq:path_condition}
    \myshift(\myup(p))+\mylshift(p)\in\N.
\end{equation}
Let $p_1,p_2,\ldots,p_m$ be paths such that $\mydown(p_i) = \myup(p_{i+1})$ for all $i\in\dint{1,m-1}$.
Then we write $p_1p_2\cdots p_m$ to indicate the path obtained by concatenating $p_1,p_2,\ldots,p_m$.
If $p$ is a circuit and $x\in\N$, we define $p^x = p p^{x-1}$, where $p^0$ indicates the empty path, which has zero length.

\subsection{Detecting $\infty$-weight paths}\label{su:bounds}

Consider the following decision problem.\\[1.2mm]
\noindent
$\infty$-WEIGHT $\S$-PATH\\
\textbf{Instance:} Matrices $L,C,R\in\Rmax^{n\times n}$.\\
\textbf{Question:} Does $\graph_\S(L,C,R)$ contain an $\infty$-weight path?

\begin{example}\label{ex:simple_graph_2}
As an illustrative example, take the graphs of Figure~\ref{fi:static_periodic_graphs}.
For values $\alpha=-1,\beta=2$, both $\graph_\N$ and $\graph_\Z$ contain $\infty$-weight paths, as there exists a circuit with positive weight with source node $(1,k)$, for all $k$. %
Unlike finite graphs, however, $\S$-periodic graphs may contain $\infty$-weight paths even when there are no positive-weight circuits.
For example, this is the case for $\graph_\N$ and $\graph_\Z$ when $\alpha=-5,\beta=4$, as both of them contain an $\infty$-weight path from node $(1,k)$ to node $(2,k)$, for all $k$.
On the other hand, when $\alpha=-1,\beta=1$, only $\graph_\Z$ contains $\infty$-weight paths, each corresponding to a sequence of paths $p^k_1,p^k_2,\ldots$ with increasing weight from node $(1,k)$ to node $(2,k)$.
The same sequence cannot be built in $\graph_\N$, since for all $k\in\N$ and for $h$ large enough, $p^k_h$ does not satisfy~\eqref{eq:path_condition}, and thus is not a path in $\graph_\N$.
\end{example}

In \cite[Thorem 4.8]{hoefting1995minimum}, a polynomial-time algorithm that solves $\infty$-WEIGHT $\Z$-PATH was presented.
Munier Kordon solved the problem $\infty$-WEIGHT $\N$-PATH in weakly polynomial time \cite{munier2011graph}.\footnote{\zor{In \cite{munier2011graph}, it is assumed that, in the considered $\N$-periodic graph, there always exists a path with positive weight from any node $(i,k)$ to node $(i,k+h)$ for all $h\in\N$. This condition, however, does not seem to be essential for the algorithm proposed there.}}
In the following, we prove that any instance of $\infty$-WEIGHT $\N$-PATH can be solved in strongly polynomial time.

\begin{restatable}{lemma}{firstlemma}\label{le:length_supremal_path_N_periodic_graph}
Let $\graph$ be a static graph with $n$ nodes and let $i,j\in\dint{1,n}$.
Suppose that $\graph_\N$ does not contain $\infty$-weight paths from node $(i,1)$ to node $(j,1)$.
Then, the maximal weight of all paths from node $(i,1)$ to node $(j,1)$ in $\graph_\N$ is attained by a path with right-shift at most $n^2$.
\end{restatable}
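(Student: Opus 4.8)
The plan is to reduce the question to a \emph{finite} graph. First I would record two preliminaries. By \eqref{eq:path_condition} applied with $\myshift(\myup(p)) = 1$, every path $p$ from $(i,1)$ to $(j,1)$ in $\graph_\N$ satisfies $1 + \mylshift(p) \in \N$, hence $\mylshift(p) = 0$ (recall $\mylshift(p) \le 0$ always); so such a path never visits a level below $1$ and, since it reaches up to level $1 + \myrshift(p)$, it is entirely contained in the finite sub-strip of levels $\dint{1, 1 + \myrshift(p)}$. Second, the absence of an $\infty$-weight path from $(i,1)$ to $(j,1)$ means precisely that the supremum $W^\star$ of the weights of all such paths is finite.

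The core of the proof is the \textbf{reduction step}: \emph{every} path $p$ from $(i,1)$ to $(j,1)$ with $\myrshift(p) > n^2$ can be replaced by a path $p'$ with the same endpoints, $\myweight(p') \ge \myweight(p)$, and $\myrshift(p') < \myrshift(p)$. Granting this, repeated application shows that $W^\star$ equals the supremum of the weights of the paths from $(i,1)$ to $(j,1)$ with right-shift at most $n^2$. All such paths live in the finite strip of levels $\dint{1, 1+n^2}$; in this finite graph no positive-weight circuit can lie on a path from $(i,1)$ to $(j,1)$, as that would force $W^\star = +\infty$, so the supremal such weight is finite and hence attained (delete the non-positive circuits of any near-optimal walk to obtain an elementary maximizer, cf.\ \Cref{re:elementary_paths_circuits}). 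This yields simultaneously the attainment claim and the bound $n^2$.

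To prove the reduction step I would decompose $p$ along its \emph{ascending} and \emph{descending staircases}. Writing $\ell_0 = 1, \ell_1, \dots, \ell_L = 1$ for the level sequence of $p$, for each $m \in \dint{1,\myrshift(p)}$ let $\alpha_m$ (resp.\ $\beta_m$) be the first (resp.\ last) time index at which $\ell_t = 1+m$. One checks that $0 = \alpha_0 < \alpha_1 < \cdots$, that $L = \beta_0 > \beta_1 > \cdots$, that the intervals $[\alpha_m,\beta_m]$ are nested, that the step entering time $\alpha_m$ uses an up-arc $(a_m,+1,b_m)$ and the step leaving time $\beta_m$ a down-arc $(d_m,-1,e_m)$, and that the portions of $p$ strictly between consecutive $\alpha$'s (resp.\ $\beta$'s) are shift-$0$ and confined to levels $\dint{1,1+m}$. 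Since $\myrshift(p) > n^2$, the pigeonhole principle applied to the at most $n^2$ possible values of the pair $(a_m,b_m)$ (and separately of $(d_m,e_m)$) yields indices $m_1 < m_2$ with $(a_{m_1},b_{m_1}) = (a_{m_2},b_{m_2})$ and $k_1 < k_2$ with $(d_{k_1},e_{k_1}) = (d_{k_2},e_{k_2})$. These carve out of $p$ two pseudo-circuits: $\Sigma$, with base node $a_{m_1}$, shift $+\delta$ where $\delta = m_2 - m_1$, and weight $W_\Sigma$; and $\Sigma'$, with base node $d_{k_1}$, shift $-\delta'$ where $\delta' = k_2 - k_1$, and weight $W_{\Sigma'}$. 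Splicing $1 + x\delta'$ copies of $\Sigma$ into the ascending part and $1 + x\delta$ copies of $\Sigma'$ into the descending part, each copy suitably shifted up so that every intermediate level stays $\ge 1$ and the two endpoints are unchanged, produces for every integer $x \ge 1$ a valid path from $(i,1)$ to $(j,1)$ of weight $\myweight(p) + x(\delta' W_\Sigma + \delta W_{\Sigma'})$; if $\delta' W_\Sigma + \delta W_{\Sigma'} > 0$ this contradicts the finiteness of $W^\star$, so we must have $\delta' W_\Sigma + \delta W_{\Sigma'} \le 0$. In this complementary case I would instead \emph{excise} matching pieces of the two staircases to shrink the right-shift without losing weight, first normalizing $p$: when $p$ is a maximum-weight path, every shift-$0$ sub-pseudo-circuit has weight $0$ and may be deleted for free, which bounds the number of visits to each level by $n$ and rigidifies the staircase structure enough to perform the excision.

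I expect this last case, $\delta' W_\Sigma + \delta W_{\Sigma'} \le 0$, to be the main obstacle. The difficulty is that excising an "ascending bump" rigidly shifts the remainder of $p$ downward and can push it below level $1$, so the excision on the ascending staircase must be matched by a compatible excision on the descending staircase; arranging this matching while keeping the pigeonhole bound at $n^2$ (rather than the $n^4$ that a naive double pigeonhole would give) is the delicate point, and it is precisely here that the normalization of $p$ and the hypothesis that no $\infty$-weight path ends at $(j,1)$ are both needed.
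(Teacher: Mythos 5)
There is a genuine gap, and you have named it yourself: the case $\delta' W_\Sigma + \delta W_{\Sigma'}\le 0$ is not a technicality to be smoothed over later, it is the half of the dichotomy that your setup cannot handle. Because you pigeonhole the ascending arcs $(a_m,b_m)$ and the descending arcs $(d_k,e_k)$ \emph{separately}, the repeated ascending segment $\Sigma$ and the repeated descending segment $\Sigma'$ will in general have different shift magnitudes $\delta\neq\delta'$. The path contains only one copy of each, so the only available excision removes one $\Sigma$ and one $\Sigma'$; this changes the net shift of the path by $\delta'-\delta\neq 0$, so after regluing the terminal node is no longer $(j,1)$ (and the tail can even be pushed below level $1$). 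The weight bookkeeping fails as well: $\delta' W_\Sigma+\delta W_{\Sigma'}\le 0$ does not imply $W_\Sigma+W_{\Sigma'}\le 0$ (take $\delta=3$, $\delta'=1$, $W_\Sigma=2.5$, $W_{\Sigma'}=-1$), so excising both pieces can strictly decrease the weight. Your proposed repair (normalizing $p$ and "matching" the excisions) is exactly the unproven step, and it sits uneasily with your reduction step being claimed for \emph{every} path while the normalization presupposes $p$ has maximal weight. The pumping half of your argument, and the preliminary observations (zero left-shift, finiteness of the supremum, attainment once everything lives in the strip of levels $\dint{1,1+n^2}$), are fine.

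The paper closes the problematic case by choosing the pigeonhole differently, which makes the excision automatic. It fixes a path of maximal weight and, among those, minimal right-shift (existence is argued first), and peels it level by level: at level $m$ it records the \emph{pair} $(i_{(m)},j_{(m)})$ of base nodes at which the nested subpath confined to levels at least $m$ begins and ends. If two levels $a<b$ carry the same pair, the ascending connector and the descending connector between them span the same range of levels, hence have shifts $+(b-a)$ and $-(b-a)$ that cancel exactly, copy for copy. The dichotomy is then on the \emph{sum} of the two connector weights: if it is positive, pumping both connectors the same number of times yields an $\infty$-weight path from $(i,1)$ to $(j,1)$, contradicting the hypothesis; if it is nonpositive, one simply translates the inner subpath down by $b-a$ (legal because it never goes below its own starting level), obtaining a path of no smaller weight and strictly smaller right-shift, contradicting minimality. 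Hence all the pairs are distinct and the right-shift is at most $n^2$. If you want to rescue your route, pigeonhole per level on a joint datum of this kind rather than on up-arcs and down-arcs separately; that is what keeps the bound at $n^2$ and what makes the "matched excision" exist at all.
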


The proof of \Cref{le:length_supremal_path_N_periodic_graph} can be found in \Cref{se:prooffirstlemma}.

Let us denote by $\Pi_{ij}(h)$ the supremal weight of all paths from node $(j,1)$ to node $(i,1)$ of right-shift at most $h\in\No$ in $\graph_{\N}$:
\[
    \begin{array}{rcl}
        \Pi_{ij}(h) &=& \sup \{\myweight(p)\mid p \mbox{ is a path in $\graph_{\N}$ from $(j,1)$ to $(i,1)$}\\
                    && \phantom{\sup \ \ \myweight(p)\mid} \mbox{and } \myrshift(p)\leq h\}\\
                    &=& \sup \{\myweight(p)\mid p \mbox{ is a path in $\graph_{\dint{1,h+1}}$ from $(j,1)$ to $(i,1)$}\}.
    \end{array} 
\] 
By definition,
\[
    \Pi_{ij}(+\infty) \coloneqq \lim_{h\rightarrow +\infty} \Pi_{ij}(h)
\] 
is the supremal weight of all paths in $\graph_{\N}$ from $(j,1)$ to $(i,1)$.
For all $h\in\No\cup\{+\infty\}$, we can construct matrix $\Pi(h)\in\Rbar^{n\times n}$ with $(i,j)$-element $(\Pi(h))_{ij} = \Pi_{ij}(h)$.
Note that $\Pi_{ij}(h)=-\infty$ if and only if there is no path from $(j,1)$ to $(i,1)$ with right-shift at most $h$.
Moreover, $\Pi_{ij}(h) = +\infty$ if and only if there is an $\infty$-weight path with right-shift at most $h$; as seen in \Cref{ex:simple_graph_2}, it is possible that $\Pi_{ij}(h)\in\Rmax$ for all $h\in\No$ and $\Pi_{ij}(+\infty) = +\infty$.
Using the max-plus algebra, we can get the following recursive formula for $\Pi(h)$ (for the proof, see \Cref{se:proofsecondlemma}).\footnote{\zor{A proof of \Cref{le:formula_Pi} can also be found in the technical report \cite{declerck2009extremal}.}}

\begin{restatable}{lemma}{secondlemma}\label{le:formula_Pi}
For all $h\in\No$, 
\[
    \begin{array}{rcl}
        \Pi(0) &=& C^+,\\
        \Pi(h+1) &=& (L \Pi(h)^* R \oplus C)^+.\qedlineend
    \end{array} 
\]
\end{restatable}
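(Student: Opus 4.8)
The plan is to establish the two formulas in \Cref{le:formula_Pi} by direct combinatorial reasoning on paths in $\graph_{\dint{1,h+1}}$, translating the path decompositions into max-plus matrix identities using the dictionary set up earlier: entry $(i,j)$ of a product of adjacency matrices corresponds to the supremal weight of paths of the corresponding shape, and the Kleene-star/plus operators collect paths of arbitrary length. First I would treat the base case: $\Pi(0)$ is the supremal weight of all paths from $(j,1)$ to $(i,1)$ with right-shift $0$, equivalently all paths living entirely in the single copy $\graph_{\dint{1,1}}$ of the static graph. Since that copy has only the arcs encoded by $C$, such paths are exactly the paths of $\graph(C)$, and their supremal nonempty-path weight from $j$ to $i$ is by definition $(C^+)_{ij}$; hence $\Pi(0) = C^+$.

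For the recursion I would argue as follows. A path $p$ in $\graph_{\dint{1,h+2}}$ from $(j,1)$ to $(i,1)$ with right-shift at most $h+1$ can be cut at the moments it enters or leaves the "boundary layer" at height $h+2$: write $p$ as a concatenation of maximal sub-paths, alternating between sub-paths that stay in $\graph_{\dint{1,h+1}}$ and sub-paths that stay in $\graph_{\dint{2,h+2}}$, where the latter are sandwiched between an $L$-arc (from layer $h+1$ to $h+2$ is impossible; rather the shift $-1$ arc, i.e. the $L$ block, moves from a higher to a lower index — I need to be careful with the orientation convention, since in this paper arc $(i,s,j)$ has weight $(X_s)_{ji}$ and $L$ corresponds to shift $s=-1$) and its reverse $R$-arc. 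The key observation is that a sub-path of $\graph_{\dint{2,h+2}}$, after shifting all node-indices down by one, is exactly a path of $\graph_{\dint{1,h+1}}$ with right-shift at most $h$, so its supremal weight between the appropriate heights is governed by $\Pi(h)^*$ (allowing the empty path, i.e. the degenerate case where $p$ never leaves layer $h+1$, which is why we need the star rather than the plus there). Collecting one excursion — an $L$-arc, then a $\Pi(h)^*$-path in the upper copy, then an $R$-arc — gives the matrix $L\,\Pi(h)^*\,R$; interleaving such excursions with single arcs of $C$ (arcs that stay within the bottom layer) and allowing arbitrarily many of them yields $(L\,\Pi(h)^* R \oplus C)^+$ for the supremal weight of a nonempty path from $(j,1)$ to $(i,1)$ in $\graph_{\dint{1,h+2}}$ with right-shift at most $h+1$, which is precisely $\Pi(h+1)$.

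To make this rigorous I would set up the two inclusions separately. For "$\leq$": given any path $p$ achieving (or approaching) the supremum, perform the decomposition above and check that its weight is bounded by the corresponding term of $(L\Pi(h)^*R\oplus C)^+$; here one uses that each maximal excursion into the top layer starts with an $L$-arc and ends with an $R$-arc, and that in between it is a legitimate path of a shifted copy of $\graph_{\dint{1,h+1}}$ — this last point needs the right-shift bound to guarantee the excursion does not itself reach beyond layer $h+2$, combined with condition \eqref{eq:path_condition} to ensure the shifted sub-path is a genuine path of $\graph_\N$. For "$\geq$": conversely, any finite concatenation of $C$-arcs and $L\Pi(h)^*R$-excursions, where each excursion's middle piece is realized (up to $\epsilon$) by an actual path in the top copy, can be reassembled into an actual path of $\graph_{\dint{1,h+2}}$ with right-shift at most $h+1$ and the claimed weight; one then takes suprema. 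Passing from exact attainment to $\epsilon$-approximation is needed because $\Pi(h)$ may a priori have $+\infty$ entries, but if any entry involved is $+\infty$ both sides are $+\infty$ and the identity holds trivially, so one may assume finiteness throughout.

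The main obstacle I anticipate is bookkeeping the boundary/orientation conventions cleanly: one has to be precise about which block ($L$ or $R$) corresponds to moving up versus down in the layer index, about the fact that the relevant "window" graph $\graph_{\dint{1,h+1}}$ embeds into $\graph_{\dint{2,h+2}}$ by an index shift, and about why right-shift (as opposed to, say, left-shift or total shift) is the correct quantity to control — namely that a path from $(j,1)$ to $(i,1)$ whose node-shifts stay in $\dint{1,h+1}$ is exactly one with $\myrshift(p)\le h$, since the minimum shift along it is already pinned at $1$ by the endpoints and convention \eqref{eq:path_condition}. Once those conventions are fixed, the argument is a routine induction with the two set-inclusions, and the algebraic manipulations ($\oplus$ distributing over concatenation, $^+$ collecting arbitrary iterates, $^*$ allowing the empty excursion) are exactly the standard Bellman-Ford/Floyd-Warshall-style identities already invoked elsewhere in the paper.
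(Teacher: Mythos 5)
Your argument is correct in substance, but it takes a more elementary, hands-on route than the paper. The paper's proof of \Cref{le:formula_Pi} is essentially algebraic: after the same base case, it identifies $\Pi(h+1)$ with the top-left $n\times n$ block of $(M_{\dint{1,h+2}})^+$ and then simply invokes the known block identity that the top-left block of $\bigl[\begin{smallmatrix}a & b\\ c & d\end{smallmatrix}\bigr]^+$ equals $(b d^* c\oplus a)^+$, applied with $a=C$, $b$ the block row $(L,\mathcal{E},\ldots)$, $c$ the block column $(R,\mathcal{E},\ldots)^\top$, and $d=M_{\dint{1,h+1}}$, so that $b d^* c = L(\Pi(h)\oplus E)R = L\Pi(h)^*R$. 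Your proposal re-derives by hand exactly the combinatorial content of that identity: cutting a path from $(j,1)$ to $(i,1)$ at its visits to the bottom layer into single $C$-arcs and excursions into the shifted copy $\graph_{\dint{2,h+2}}\cong\graph_{\dint{1,h+1}}$, each excursion contributing $L\Pi(h)^*R$, and proving the two inequalities separately. What the paper's route buys is brevity and automatic handling of $+\infty$ entries (the block formula holds over the complete semiring $\Rbar$, so no $\epsilon$-approximation or finiteness case distinction is needed); what your route buys is self-containedness, at the cost of the bookkeeping you anticipate. Two slips you should repair in a write-up: the opening description of the decomposition (cutting at the ``boundary layer'' at height $h+2$, and alternating with maximal sub-paths in $\graph_{\dint{1,h+1}}$, which overlaps $\graph_{\dint{2,h+2}}$ on layers $2,\ldots,h+1$) is not well defined as stated -- the correct cut points are the visits to shift-$1$ nodes, as your later sentence about interleaving single $C$-arcs with excursions already reflects; and in traversal order each excursion must begin with an $R$-arc (shift $+1$, leaving the bottom layer) and end with an $L$-arc (shift $-1$, returning to it), the expression $L\Pi(h)^*R$ being its right-to-left matrix transcription, whereas your prose lists the $L$-arc first -- you flag the convention issue, but the final text should state the traversal order correctly.
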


In algebraic terms, \Cref{le:length_supremal_path_N_periodic_graph} showed that, if $\graph_{\N}$ contains no $\infty$-weight paths from node $(i,1)$ to node $(j,1)$ for all $i,j\in\dint{1,n}$, then
\[
    \Pi(+\infty) = \Pi(n^2)\in\Rmax^{n\times n}.
\]
If $\infty$-weight paths from $(i,1)$ to $(j,1)$ exist for some $i,j$, then either $\Pi(n^2)\not\in\Rmax^{n\times n}$ (\ie there are circuits with positive weight and right-shift at most $n^2$), or we must have $\Pi(n^2+1)\neq\Pi(n^2)$.
If indeed the latter inequation were not true, then from the formula in \Cref{le:formula_Pi} we would have found a fixed point $\Pi(n^2)$ of mapping $F(\Pi) = (L\Pi^*R\oplus C)^+$, implying that $\Pi(n^2) = \Pi(n^2+1) = \Pi(n^2+2) =\dots = \Pi(+\infty)$.
This shows that \Cref{le:length_supremal_path_N_periodic_graph} gives a necessary and sufficient condition for the existence of $\infty$-weight paths between any nodes $(i,1)$ and $(j,1)$, where $i,j\in\dint{1,n}$.
In fact, in \Cref{se:proofthirdlemma} we prove the following, stronger result.

\begin{restatable}{lemma}{thirdlemma}\label{le:inf_weight_paths_N_periodic}
    Let $\graph$ be a static graph with $n$ nodes.
    Then $\graph_\N$ does not contain $\infty$-weight paths if and only if 
    \[
        \Pi(n^2+1) = \Pi(n^2) \mbox{ and }\Pi(n^2)\in\Rmax^{n\times n}.\qedlineend
    \]
\end{restatable}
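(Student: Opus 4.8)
The plan is to combine \Cref{le:length_supremal_path_N_periodic_graph,le:formula_Pi}, essentially along the lines indicated in the paragraph preceding the statement, with a reduction of the general problem to paths running between two nodes at level~$1$.

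\textbf{Necessity.} If $\graph_\N$ contains no $\infty$-weight path, it contains none between $(i,1)$ and $(j,1)$ for any $i,j\in\dint{1,n}$, so \Cref{le:length_supremal_path_N_periodic_graph} shows that the supremal weight of such paths is attained at right-shift at most $n^2$; that is, $\Pi_{ij}(+\infty)=\Pi_{ij}(n^2)\in\Rmax$ for every $i,j$, hence $\Pi(+\infty)=\Pi(n^2)\in\Rmax^{n\times n}$. Since $h\mapsto\Pi(h)$ is nondecreasing (enlarging the bound on the right-shift enlarges the set of admissible paths) and $\Pi(+\infty)=\sup_{h}\Pi(h)$, we obtain $\Pi(n^2)\le\Pi(n^2+1)\le\Pi(+\infty)=\Pi(n^2)$, so $\Pi(n^2+1)=\Pi(n^2)$.

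\textbf{Sufficiency.} Assume $\Pi(n^2+1)=\Pi(n^2)\in\Rmax^{n\times n}$. By \Cref{le:formula_Pi}, $\Pi(n^2)$ is a fixed point of the map $F\colon X\mapsto(LX^{*}R\oplus C)^{+}$, so a short induction yields $\Pi(h)=\Pi(n^2)$ for all $h\ge n^2$, hence $\Pi(+\infty)=\Pi(n^2)\in\Rmax^{n\times n}$. This already shows that the supremal weight of all paths between $(j,1)$ and $(i,1)$ is finite for every $i,j$; the remaining task is to upgrade this to the statement that the supremal weight of all paths between \emph{any} two nodes of $\graph_\N$ is finite, i.e.\ that $\graph_\N\in\nonegset$. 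I would record two facts. First, $\graph_\N$ has no positive-weight circuit: cyclically rotating such a circuit so that it starts at one of its lowest-level nodes, then shifting it down vertically until that node reaches level~$1$, yields a positive-weight circuit based at a level-$1$ node, whose iterated concatenation would force a diagonal entry of $\Pi(+\infty)$ to equal~$+\infty$. Consequently every path can be replaced, with the same endpoints and without decreasing its weight, by an elementary path obtained by excising its (non-positive) circuits. Second, the weight of a path is invariant under vertical shifts, so any path may be shifted down until the minimum level it visits equals~$1$.

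\textbf{The reduction (the step I expect to be the main obstacle).} I would argue by a minimal-counterexample. Suppose $\graph_\N\notin\nonegset$ and choose, among all pairs of nodes $\bigl((j,q),(i,p)\bigr)$ between which paths of arbitrarily large weight exist, one minimizing $\max(q,p)$. If $\max(q,p)=1$ this contradicts the finiteness of $\Pi(+\infty)$, so $\max(q,p)\ge 2$. Fix elementary paths $\pi_1,\pi_2,\dots$ from $(j,q)$ to $(i,p)$ with $\myweight(\pi_h)\to+\infty$. If infinitely many $\pi_h$ never descend below level~$2$, shift each such path down by its minimum level minus one and pass to a subsequence (only finitely many shift amounts occur): this gives paths of unbounded weight between a fixed pair of nodes whose two levels have strictly smaller maximum, contradicting minimality. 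Otherwise cofinitely many $\pi_h$ reach level~$1$; write $\pi_h=\alpha_h\beta_h\gamma_h$, where $\beta_h$ joins the first and last level-$1$ nodes of $\pi_h$. After a pigeonhole fixing the endpoints of $\beta_h$, finiteness of $\Pi(+\infty)$ bounds $\myweight(\beta_h)$, so $\myweight(\alpha_h)\to+\infty$ or $\myweight(\gamma_h)\to+\infty$ along a subsequence; say the former. All nodes of $\alpha_h$ except its last one lie at levels $\ge 2$, and $\alpha_h$ has length at least two (otherwise its weight would be bounded), so deleting the last arc of $\alpha_h$ and shifting the result down by one produces a path from $(j,q-1)$ to a level-$1$ node whose weight still tends to $+\infty$; a final pigeonhole then exhibits unbounded-weight paths between a fixed pair with maximum level $q-1<\max(q,p)$, the desired contradiction. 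The case $\myweight(\gamma_h)\to+\infty$ is symmetric. The only non-structural work is the level bookkeeping and the repeated pigeonholes needed to keep endpoints fixed along subsequences.
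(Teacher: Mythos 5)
Your proposal is correct. The necessity direction and the fixed-point argument yielding $\Pi(+\infty)=\Pi(n^2)\in\Rmax^{n\times n}$ coincide with the paper's, but the core combinatorial step -- reducing the presence of $\infty$-weight paths between arbitrary nodes of $\graph_\N$ to $\infty$-weight paths between nodes of shift $1$ -- is carried out by a genuinely different argument. The paper does it in one shot: it takes a sequence of \emph{elementary} paths of unbounded weight between two fixed nodes with shifts $h\leq k$, factors each path at its (at most $n$, by elementarity) visits to shift $k$, pigeonholes the finitely many possible low segments so that they are constant along a subsequence, concludes that one family of high segments (starting and ending at shift $k$ and never going below it) has unbounded weight, and shifts that family down to shift $1$; positive-weight circuits are treated separately, much as in your first auxiliary fact. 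You instead run a minimal-counterexample induction on the maximum endpoint shift: paths avoiding shift $1$ are shifted down wholesale, while paths reaching shift $1$ are split at their first and last shift-$1$ visits, the middle piece is bounded using the already established finiteness of $\Pi(+\infty)$, and the unbounded outer piece is trimmed by one arc and shifted down by one, strictly decreasing the maximum shift. Both reductions are sound, and your level bookkeeping (the second-to-last node of $\alpha_h$ sits at shift exactly $2$, the shift amounts in your first case range over a finite set, the pigeonholes over base indices) checks out. What each buys: your route does not actually need elementarity of the $\pi_h$ -- nor, therefore, the positive-circuit exclusion you prove first, which only serves to justify it -- whereas the paper's factorization does; on the other hand, your inductive step leans on the finiteness of the shift-$1$ entries of $\Pi(+\infty)$, which the paper's pigeonhole on low segments avoids. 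One shared convention worth making explicit in a write-up: like the paper, you read ``contains no $\infty$-weight paths'' as $\graph_\N\in\nonegset$, i.e., finite supremal path weight between every \emph{fixed} pair of nodes, which is the interpretation under which the lemma is stated, used, and true.
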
 

Because of \Cref{le:inf_weight_paths_N_periodic}, we can now state the main theorem of this section.
Recall that, given two matrices $A,B\in\Rmax^{n\times n}$, computing $A\oplus B$ and $A\otimes B$ using naive algorithms requires, respectively, $O(n^2)$ and $O(n^3)$ operations.\footnote{We assume to be working in the arithmetic model of computation, where standard operations between reals can be performed in constant time. All the time complexities given in this paper apply also in the Turing machine model, if reals are substituted by integers or rational numbers.}
Moreover, using the Floyd-Warshall algorithm, it is possible to verify the existence of circuits with positive weight in a graph $\graph(A)$ with $n$ nodes, and, if no such circuits exist, to compute $A^*$ in time $O(n^3)$ (see, e.g., \cite{cormen2022introduction}).

\begin{theorem}\label{th:polynomial_complexity}
The problem $\infty$-WEIGHT $\N$-PATH is solvable in strongly polynomial time complexity $O(n^5)$.
\end{theorem}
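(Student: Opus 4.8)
The plan is to turn \Cref{le:formula_Pi,le:inf_weight_paths_N_periodic} directly into an algorithm. Given an instance $L,C,R\in\Rmax^{n\times n}$, I would iteratively compute the matrices $\Pi(0),\Pi(1),\ldots,\Pi(n^2+1)$ using the recursion of \Cref{le:formula_Pi}, and then decide the instance by checking the criterion of \Cref{le:inf_weight_paths_N_periodic}: $\graph_\N$ has no $\infty$-weight path if and only if $\Pi(n^2+1)=\Pi(n^2)$ and $\Pi(n^2)\in\Rmax^{n\times n}$.

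Concretely, I would first compute $\Pi(0)=C^+$ with the Floyd--Warshall algorithm in time $O(n^3)$; if it detects a positive-weight circuit in $\graph(C)$, then $\graph(C)\notin\nonegset$, hence $\graph_\N$ contains an $\infty$-weight path and the answer is "yes". Otherwise $\Pi(0)\in\Rmax^{n\times n}$ with $\graph(\Pi(0))\in\nonegset$. Then, for $h=1,2,\ldots,n^2+1$, I would proceed inductively under the hypothesis that $\Pi(h-1)\in\Rmax^{n\times n}$ and $\graph(\Pi(h-1))\in\nonegset$: compute $\Pi(h-1)^*$ (which equals $\Pi(h-1)\oplus E$, since $\Pi(h-1)$ is already of the form $B^+$), then $M\coloneqq L\otimes\Pi(h-1)^*\otimes R\oplus C$, and finally $\Pi(h)=M^+$ via Floyd--Warshall. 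If Floyd--Warshall reports a positive-weight circuit in $\graph(M)$, then $\Pi(h)$ acquires a $+\infty$ entry; since $(\Pi(h))_{ij}$ is the supremal weight of paths from $(j,1)$ to $(i,1)$ of right-shift at most $h$ in $\graph_\N$, this already exhibits an $\infty$-weight path, so the answer is "yes". If the loop runs to completion, both $\Pi(n^2)$ and $\Pi(n^2+1)$ lie in $\Rmax^{n\times n}$, and I would compare them entrywise, outputting "no" precisely when they coincide and "yes" otherwise.

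Correctness follows immediately: \Cref{le:formula_Pi} ensures the matrices produced are indeed $\Pi(0),\ldots,\Pi(n^2+1)$, \Cref{le:inf_weight_paths_N_periodic} supplies the final decision, and the early-exit branches are justified by the observation that a $+\infty$ entry in any $\Pi(h)$ directly witnesses an $\infty$-weight path. For the complexity, each of the $O(n^2)$ iterations performs a bounded number of max-plus matrix sums ($O(n^2)$ each), matrix products ($O(n^3)$ each), and one Floyd--Warshall run ($O(n^3)$), hence $O(n^3)$ per iteration and $O(n^5)$ in total; the terminating comparison costs $O(n^2)$. Since the number of arithmetic operations is polynomial in $n$ and independent of the magnitudes of the entries of $L$, $C$, $R$, and every intermediate value is obtained from earlier ones by a single addition or maximum (so bit-lengths stay polynomially bounded in the input size), the algorithm runs in strongly polynomial time $O(n^5)$.

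I do not expect a genuine obstacle: the theorem is essentially a corollary of the two preceding lemmas. The only points requiring care are the bookkeeping around $+\infty$ entries — one must argue that the iteration may legitimately halt with "yes" as soon as some $\Pi(h)$ acquires such an entry — and the recognition that \Cref{le:inf_weight_paths_N_periodic} is exactly what licenses stopping after the fixed number $n^2+1$ of steps, rather than iterating the map $F(\Pi)=(L\Pi^*R\oplus C)^+$ until it reaches a fixed point, which in general it does not do in finitely many steps (see \Cref{ex:simple_graph_2} for a situation where the sequence $\Pi(h)$ has only finite entries yet diverges entrywise to $+\infty$).
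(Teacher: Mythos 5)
Your proposal is correct and follows essentially the same route as the paper: compute $\Pi(0),\dots,\Pi(n^2+1)$ via the recursion of \Cref{le:formula_Pi}, abort with "yes" as soon as some $\graph(\Pi(h))\not\in\nonegset$ (positive-weight circuit detected by Floyd--Warshall), and otherwise decide by comparing $\Pi(n^2)$ with $\Pi(n^2+1)$ as licensed by \Cref{le:inf_weight_paths_N_periodic}, for $O(n^2)$ iterations of $O(n^3)$ matrix operations each, hence $O(n^5)$. Your extra bookkeeping around $+\infty$ entries and the explicit justification of early termination are the same observations the paper makes, just spelled out in slightly more detail.
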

\begin{proof}
Given matrices $L,C,R\in\Rmax^{n\times n}$, we need to decide whether $\graph_\N = \graph_{\N}(L,C,R)$ contains an $\infty$-weight path.
According to \Cref{le:inf_weight_paths_N_periodic}, this can be done by computing $\Pi(0),\Pi(1),\dots,\Pi(n^2+1)$ and verifying, each time, whether $\Pi(h)\in\Rmax^{n\times n}$.
If for any $h\in\dint{0,n^2+1}$ we get $\graph(\Pi(h))\not \in\nonegset$, then the algorithm can be terminated as $\graph_{\N}$ contains a positive-weight circuit with shift at most $h$.
Otherwise, we need to check if $\Pi(n^2+1) = \Pi(n^2)$; the equality holds if and only if there are no $\infty$-weight paths in $\graph_\N$.

The whole procedure requires to compute at most $O(n^2)$ multiplications, additions, and Kleene stars on $n\times n$ matrices.
This results in an algorithm that terminates after $O(n^2n^3) = O(n^5)$ operations.
\end{proof}

\begin{example}\label{ex:simple_graph_3}
From the discussion carried out in Example~\ref{ex:simple_graph_2}, we know that $\graph_\N$ from \Cref{fi:static_periodic_graphs} contains an $\infty$-weight path only for certain parameters $\alpha,\beta$.
We can now verify this using \Cref{th:polynomial_complexity}.

For values $\alpha=-1,\beta=2$, it can be checked that $\graph(\Pi(0)) = \graph(C^+)\in\nonegset$, but $\graph(\Pi(1)) = \graph((LC^*R\oplus C)^+)\not\in \nonegset$; therefore, there is a positive-weight circuit with right-shift $1$. 
When $\alpha=-5,\beta=4$, observe that
\[
    \begin{bmatrix}
        -1 & -\infty\\5 & -1
    \end{bmatrix} = 
    \Pi(5) = \Pi(n^2+1) \neq \Pi(n^2) = \Pi(4) = 
    \begin{bmatrix}
        -1 & -\infty\\4 & -1
    \end{bmatrix}.
\]
Since $(\Pi(5))_{21} > (\Pi(4))_{21}$, there is an $\infty$-weight path starting from node $(1,1)$ and terminating in node $(2,1)$.
This means either that there is a positive-weight circuit with right-shift at least $6$ or that no positive-weight circuit exists but that there is an infinite sequence of elementary paths with infinite limit-weight.
In the considered case, it can be checked using the techniques presented in \cite{hoefting1995minimum,ZORZENON202219} that no positive-weight circuit exists, confirming that we are in the latter scenario.
On the other hand, when $\alpha=-1,\beta=1$, we have
\[
    \Pi(5) = 
    \Pi(4) = 
    \begin{bmatrix}
        0 & -\infty\\0 & -1
    \end{bmatrix}\in\Rmax^{2\times 2}.
\]
Thus, no $\infty$-weight path is present in this case.
\end{example}

\zor{\begin{remark}
We remark that the sequence of matrices $\{\Pi(h)\}_{h\in\No}$ defined in \Cref{le:formula_Pi} was first studied in \cite{5628259}.
What distinguishes our results from the one in \cite{5628259} is \Cref{le:inf_weight_paths_N_periodic}, which shows that computing the matrices up to $h=n^2+1$ is sufficient to decide the convergence of the sequence. 
\end{remark}}

\section{Ultimately periodic graphs}\label{se:ultimately_periodic_graphs}

In this section, we extend the results from \Cref{se:N_periodic_graphs} to the class of ultimately periodic graphs.

\subsection{Definitions}\label{su:definitions_ultimately_periodic_graphs}

Let $\graph_\wP = \graph(L_\wP,C_\wP,R_\wP) = (\nodes,\arcs_\wP,w_\wP)$ and $\graph_\wN = \graph(L_\wN,C_\wN,R_\wN) = (\nodes,\arcs_\wN,w_\wN)$ be two static graphs with the same set of nodes $\nodes = \dint{1,n}$, and let $C_\wT\in\Rmax^{n\times n}$ be a matrix.
The \emph{ultimately periodic graph} induced by $\graph_\wN$, $C_\wT$, and $\graph_\wP$ is the infinite weighted digraph $\graph_{\textup{U}} = \graph_{\textup{U}}(\graph_\wN,C_\wT,\graph_\wP) = (\nodes_{\textup{U}},\arcs_{\textup{U}},w_{\textup{U}})$ with set of nodes $\nodes_{\textup{U}} = \nodes\times \Z$, set of arcs\footnote{\zor{Note that, if $(i,s,j)\in\arcs_\wN\cup\arcs_\wP$, then $s\in\{-1,0,+1\}$. For this reason, the three conditions in the definitions of $\arcs_{\textup{U}}$ and $w_{\textup{U}}$ are mutually exclusive.}}
\[
    \arcs_{\textup{U}} = \left\{((i,k),(j,k+s))\ \left| \
        \begin{array}{cl}
                    &((k<0 \vee k+s<0) \wedge (i,s,j)\in \arcs_\wN) \\
\vee&((k>0 \vee k+s > 0) \wedge (i,s,j)\in \arcs_\wP) \\
\vee&(k = s = 0  \wedge (C_{\wT})_{ji}\neq -\infty)
        \end{array} 
        \right.
    \right\},
\]
and weight function $w_{\textup{U}}:\arcs_{\textup{U}}\rightarrow \R$ defined by
\[
    w_{\textup{U}}(((i,k),(j,k+s))) = 
    \begin{dcases}
        w_\wN((i,s,j)) & \mbox{if } k<0 \vee k+s < 0,\\
        w_\wP((i,s,j)) & \mbox{if } k>0 \vee k+s > 0,\\
        (C_\wT)_{ji} & \mbox{if } k=s=0.
    \end{dcases}
\]
Ultimately periodic graphs are the union of two periodic graphs -- a $\Z_{<0}$-periodic graph induced by $\graph_\wN$, called \emph{negative periodic part}, for nodes with negative shift and an $\N$-periodic graph induced by $\graph_\wP$, called \emph{positive periodic part}, for nodes with positive shift -- joined together via a finite graph $\graph(C_\wT)$ at zero shift called \emph{transient part}.
They are more general than $\Z$- and $\N$-periodic graphs, as ultimately periodic graphs with $X_\wN=X_\wP$ for all $X\in\{R,L\}$ and $C_\wN=C_\wT=C_\wP$ are $\Z$-periodic, and those with $X_\wN=\mathcal{E}$ for all $X\in\{R,L,C\}$ and $C_\wT=C_\wP$ are $\N$-periodic (except for having infinitely many additional nodes not connected to other nodes).
The definitions of paths and circuits are derived from those of $\S$-periodic graphs.
Each ultimately periodic graph can also be thought of as the precedence graph of matrix
\begin{equation}\label{eq:M_U}
    M_\textup{U} =
    \begin{bmatrix}
        \ddots & \vdots & \vdots & \vdots & \vdots & \vdots & \vdots & \vdots & \iddots\\
        \cdots & C_\wN & L_\wN & \cdot & \cdot &\cdot & \cdot & \cdot & \cdots\\
        \cdots & R_\wN & C_\wN & L_\wN & \cdot &\cdot &\cdot & \cdot & \cdots\\
        \cdots & \cdot & R_\wN & C_\wN & L_\wN & \cdot &\cdot &\cdot &\cdots\\
        \cdots & \cdot & \cdot & R_\wN & C_\wT & L_\wP & \cdot &\cdot &\cdots\\
        \cdots & \cdot & \cdot & \cdot & R_\wP & C_\wP & L_\wP &\cdot &\cdots\\
        \cdots & \cdot&\cdot & \cdot &\cdot & R_\wP & C_\wP & L_\wP &\cdots\\
        \cdots & \cdot &\cdot & \cdot &\cdot & \cdot &R_\wP & C_\wP & \cdots\\
        \iddots & \vdots & \vdots & \vdots & \vdots& \vdots& \vdots & \vdots & \ddots
    \end{bmatrix}\in\Rmax^{\Z\times \Z}.
\end{equation}

\begin{example}\label{ex:ultimately_periodic_graph}
    \begin{figure}[t]
        \centering
        \begin{tikzpicture}[node distance=2cm and 2cm,place/.append style={minimum size=.3cm},on grid]

\foreach \z in {1,...,9}
{
    \coordinate (n1\z) at (1.5*\z,0) {};
    \coordinate (n2\z) at (1.5*\z,-1) {};
    \coordinate (n3\z) at (1.5*\z,-2) {};
    \coordinate (n4\z) at (1.5*\z,-3) {};
}

\clip ($(n41)!.5!(n42)+(-.3,-1.5)$) rectangle ($(n18)!.5!(n19)+(.3,.3)$);

\foreach \z in {1,...,9}
{
    {
    \node [place] (n1\z) at (n1\z) {};
    \node [place] (n2\z) at (n2\z) {};
    \node [place] (n3\z) at (n3\z) {};
    \node [place] (n4\z) at (n4\z) {};
    }
}

\foreach \z in {1,...,4}
{
    \draw [arc] (n4\z) to [bend left=30]  (n1\z);
}
\draw [arc] (n15) to node [left] {$1$} (n25);
\foreach \z in {6,...,9}
{
    \draw [arc] (n3\z) to (n4\z);
}

\foreach[evaluate=\z as \zz using int(\z+1)] \z in {1,...,4}
{
    \draw [arc] (n1\z) to (n1\zz);
}
\foreach[evaluate=\z as \zz using int(\z+1)] \z in {5,...,8}
{
    \draw [arc] (n1\z) to (n2\zz);
    \draw [arc] (n2\z) to (n1\zz);
}

\foreach [evaluate=\z as \zz using int(\z-1)] \z in {2,...,5}
{
\draw [arc] (n4\z) to (n4\zz);
}
\foreach [evaluate=\z as \zz using int(\z-1)] \z in {6,...,9}
{
\draw [arc] (n2\z) to node [pos=.8,above left] {$1$} (n3\zz);
\draw [arc] (n4\z) to node [below] {$-1$} (n4\zz);
}

\draw [stealth'-stealth'] ($(n42)+(-.8,-.7cm)$) to node[pos=.97,below] {$\Z$} ($(n48)+(.8,-.7cm)$);
\foreach [evaluate=\z as \zz using int(\z-5)] \z in {1,...,9}
{
\draw ($(n4\z)+(0,-.8cm)$) to node[pos=0,below] {$\zz$} ($(n4\z)+(0,-.6cm)$);
}

\draw [glow=myred,draw=none] (n15) to (n25) to (n16) to (n27) to (n36) to (n46) to (n44) to [bend left=30] (n14) to (n14) to (n15);

\end{tikzpicture}
        \caption{Ultimately periodic graph for \Cref{ex:ultimately_periodic_graph}. The weight of arcs is $0$ unless indicated otherwise. The highlighted arcs form a positive-weight circuit.}\label{fi:ultimately_periodic_graph}
    \end{figure}
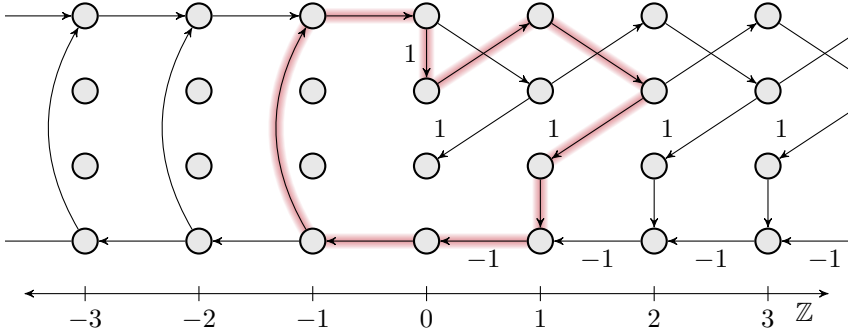
    The ultimately periodic graph corresponding to the $4\times 4$ matrices
    \[L_\wN = 
    \begin{bmatrix}
        \cdot & \cdot &\cdot &\cdot \\ 
        \cdot & \cdot &\cdot &\cdot \\ 
        \cdot & \cdot &\cdot &\cdot \\ 
        \cdot & \cdot &\cdot &0
    \end{bmatrix},\
    R_\wN = 
    \begin{bmatrix}
        0 & \cdot &\cdot &\cdot \\ 
        \cdot & \cdot &\cdot &\cdot \\ 
        \cdot & \cdot &\cdot &\cdot \\ 
        \cdot & \cdot &\cdot &\cdot 
    \end{bmatrix},\
    C_\wN = 
    \begin{bmatrix}
        \cdot & \cdot &\cdot &0 \\ 
        \cdot & \cdot &\cdot &\cdot \\ 
        \cdot & \cdot &\cdot &\cdot \\ 
        \cdot & \cdot &\cdot &\cdot 
    \end{bmatrix},\
    C_\wT = 
    \begin{bmatrix}
        \cdot & \cdot &\cdot &\cdot \\ 
        1 & \cdot &\cdot &\cdot \\ 
        \cdot & \cdot &\cdot &\cdot \\ 
        \cdot & \cdot &\cdot &\cdot 
    \end{bmatrix},
\]
\[    
    L_\wP = 
    \begin{bmatrix}
        \cdot & \cdot &\cdot &\cdot \\ 
        \cdot & \cdot &\cdot &\cdot \\ 
        \cdot & 1 &\cdot &\cdot \\ 
        \cdot & \cdot &\cdot &-1
    \end{bmatrix},\
    R_\wP = 
    \begin{bmatrix}
        \cdot & 0 &\cdot &\cdot \\ 
        0 & \cdot &\cdot &\cdot \\ 
        \cdot & \cdot &\cdot &\cdot \\ 
        \cdot & \cdot &\cdot &\cdot 
    \end{bmatrix},\
    C_\wP = 
    \begin{bmatrix}
        \cdot & \cdot &\cdot &\cdot \\ 
        \cdot & \cdot &\cdot &\cdot \\ 
        \cdot & \cdot &\cdot &\cdot \\ 
        \cdot & \cdot &0 &\cdot 
    \end{bmatrix}
\]
    is represented in \Cref{fi:ultimately_periodic_graph}.
\end{example}

As for $\S$-periodic graphs, we can consider the following decision problem.\\[1.2mm]
\noindent
$\infty$-WEIGHT ULTIMATE-PATH\\
\textbf{Instance:} Matrices $L_\wN,C_\wN,R_\wN,C_\wT,L_\wP,C_\wP,R_\wP\in\Rmax^{n\times n}$.\\
\textbf{Question:} Does $\graph_{\textup{U}}(\graph_\wN,C_\wT,\graph_\wP)$, where $\graph_\wN = \graph(L_\wN,C_\wN,R_\wN)$ and $\graph_\wP = \graph(L_\wP,C_\wP,R_\wP)$, contain an $\infty$-weight path?

\subsection{Detecting $\infty$-weight paths}\label{su:longest_path_ultimately_periodic}

Define the mappings $\Pi_\wN,\Pi_\wP:\No\rightarrow\Rbar^{n\times n}$ recursively by
\[
    \begin{array}{rclrcl}
        \Pi_\wN(0) &=& C_\wN^+, & \Pi_\wP(0) &=& C_\wP^+,\\
         \Pi_\wN(h+1) &=& (R_\wN \Pi_\wN(h)^* L_\wN \oplus C_\wN)^+,& \quad
        \Pi_\wP(h+1) &=& (L_\wP \Pi_\wP(h)^* R_\wP \oplus C_\wP)^+.       
    \end{array} 
\]
The following lemma is proven in \Cref{se:prooffourthlemma}.

\begin{restatable}{lemma}{fourthlemma}\label{pr:infinite_weight_ultimately_periodic}
Let $\graph_{\textup{U}}$ be an ultimately periodic graph defined as above.
Then, $\graph_{\textup{U}}$ does not contains $\infty$-weight paths if and only if the following three conditions are satisfied:
    \begin{itemize}
    \item $\graph_\N(R_\wN,C_\wN,L_\wN)$ does not contain $\infty$-weight paths,
\item $\graph_\N(L_\wP,C_\wP,R_\wP)$ does not contain $\infty$-weight paths,
        \item $\graph(R_\wN \Pi_\wN(n^2)^* L_\wN \oplus C_\wT\oplus L_\wP \Pi_\wP(n^2)^* R_\wP)\in\nonegset$.\qedhere
    \end{itemize}
\end{restatable}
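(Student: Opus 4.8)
The plan is to prove the contrapositive of each direction by carefully classifying $\infty$-weight paths in $\graph_{\textup{U}}$ according to how they interact with the transient part (nodes of shift $0$). First I would establish the "only if" direction. If $\graph_\N(R_\wN,C_\wN,L_\wN)$ contains an $\infty$-weight path, then — by relabeling shifts $k\mapsto -k$, which turns the $\Z_{<0}$-periodic part with data $(L_\wN,C_\wN,R_\wN)$ into an $\N$-periodic graph with data $(R_\wN,C_\wN,L_\wN)$ — this $\infty$-weight path lives inside the negative periodic part of $\graph_{\textup{U}}$, hence in $\graph_{\textup{U}}$ itself; symmetrically for the second condition. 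For the third condition, observe that $R_\wN\Pi_\wN(n^2)^*L_\wN$ is (by the proof of \Cref{le:length_supremal_path_N_periodic_graph}, applied to the relabeled negative part) exactly the matrix of supremal weights of paths that dip into the negative periodic part and return to shift $0$; likewise $L_\wP\Pi_\wP(n^2)^*R_\wP$ for the positive part; adding $C_\wT$ accounts for arcs that stay at shift $0$. So a positive-weight circuit in $\graph(R_\wN \Pi_\wN(n^2)^* L_\wN \oplus C_\wT\oplus L_\wP \Pi_\wP(n^2)^* R_\wP)$ lifts to a positive-weight circuit in $\graph_{\textup{U}}$ through the shift-$0$ nodes, giving an $\infty$-weight path.

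For the "if" direction, suppose all three conditions hold; I want to show $\graph_{\textup{U}}\in\nonegset$. Take any sequence of paths $\rho_1,\rho_2,\dots$ in $\graph_{\textup{U}}$ and bound $\myweight(\rho_k)$ uniformly. Each $\rho_k$ decomposes at its visits to shift-$0$ nodes into: (i) a prefix and suffix lying entirely in one periodic part (possibly touching shift $0$ only at one end), (ii) maximal subpaths that stay strictly inside the negative (resp. positive) periodic part between two consecutive shift-$0$ visits, and (iii) the arcs at shift $0$. The weights of type-(ii) subpaths that return to shift $0$ are bounded by entries of $R_\wN\Pi_\wN(+\infty)^*L_\wN$ and $L_\wP\Pi_\wP(+\infty)^*R_\wP$, which by the first two conditions together with \Cref{le:inf_weight_paths_N_periodic} equal $R_\wN\Pi_\wN(n^2)^*L_\wN$ and $L_\wP\Pi_\wP(n^2)^*R_\wP$ and are finite. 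Hence the "contracted" path — replacing each type-(ii) return excursion by the single corresponding arc in the graph of $R_\wN \Pi_\wN(n^2)^* L_\wN \oplus C_\wT\oplus L_\wP \Pi_\wP(n^2)^* R_\wP$ — has weight at least $\myweight(\rho_k)$ minus the contribution of the at most two type-(i) one-way excursions into the periodic parts, each of which is bounded by a $\Pi_\wN(+\infty)$ or $\Pi_\wP(+\infty)$ entry (finite, by the first two conditions and \Cref{le:length_supremal_path_N_periodic_graph}). By the third condition, the contracted path's weight is bounded above by a $\Phi$-type closure of a finite matrix (applying \Cref{th:gallai}/\Cref{re:elementary_paths_circuits} to the finite graph $\graph(R_\wN \Pi_\wN(n^2)^* L_\wN \oplus C_\wT\oplus L_\wP \Pi_\wP(n^2)^* R_\wP)$), hence by a constant. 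Combining, $\myweight(\rho_k)$ is bounded uniformly in $k$, so no $\infty$-weight path exists.

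The main obstacle I anticipate is making the decomposition-and-contraction argument fully rigorous when a path oscillates between shift $0$, the positive part, and the negative part many times, and in particular ensuring that the bound on type-(ii) excursions genuinely uses only $\Pi_\wN(n^2)$ and $\Pi_\wP(n^2)$ rather than arbitrarily deep excursions. The point is that an excursion into the positive periodic part that starts and ends at shift $0$ is itself a path in an $\N$-periodic graph from a level-$1$ node to a level-$1$ node, so \Cref{le:length_supremal_path_N_periodic_graph} applies verbatim to bound its weight by a $\Pi_\wP(n^2)^*$ entry; the analogous statement for the negative part needs the shift-reversal relabeling to be spelled out once. A secondary subtlety is the boundary behaviour of arcs incident to shift-$0$ nodes (the three mutually exclusive cases in the definition of $\arcs_{\textup{U}}$), which must be handled so that the factors $R_\wN,L_\wN$ on the negative side and $L_\wP,R_\wP$ on the positive side appear in the correct order — exactly as they do in the statement — when an excursion enters and leaves the transient part.
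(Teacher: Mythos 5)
Your plan is sound and, for the ``only if'' direction, essentially coincides with the paper's argument: both identify the negative and positive periodic parts of $\graph_{\textup{U}}$ with $\graph_\N(R_\wN,C_\wN,L_\wN)$ and $\graph_\N(L_\wP,C_\wP,R_\wP)$, and read the entries of $(R_\wN \Pi_\wN(n^2)^* L_\wN \oplus C_\wT\oplus L_\wP \Pi_\wP(n^2)^* R_\wP)^+$ as supremal weights of paths between shift-$0$ nodes with bounded excursions. Where you genuinely diverge is the ``if'' direction: you bound the weight of an \emph{arbitrary} path directly, cutting it at its shift-$0$ visits, replacing each returning excursion by an arc of the contracted graph, and using the third condition to bound the contracted path uniformly. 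The paper instead argues by contrapositive: assuming the first two conditions hold while $\graph_{\textup{U}}$ has $\infty$-weight paths, it invokes the dichotomy of \Cref{re:elementary_paths_circuits} and an infinite-pigeonhole extraction on \emph{elementary} paths (which visit shift-$0$ nodes at most $n$ times, so a subsequence shares the same bounded-shift factors $w_1,\dots,w_n$) to force a positive-weight circuit through a shift-$0$ node, and then lets $k\to\infty$ in $((R_\wN \Pi_\wN(k)^* L_\wN \oplus C_\wT\oplus L_\wP\Pi_\wP(k)^* R_\wP)^+)_{ii}$, using $\Pi_\wN(+\infty)=\Pi_\wN(n^2)$ and $\Pi_\wP(+\infty)=\Pi_\wP(n^2)$, to contradict the third condition. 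Your direct route gives a self-contained uniform bound without the reduction to elementary paths, at the price of controlling the one-way prefix/suffix excursions; note these are paths between nodes of arbitrary shifts, so they are not literally bounded by entries of $\Pi_\wN(+\infty)$ or $\Pi_\wP(+\infty)$, which only cover shift-$1$-to-shift-$1$ paths. The fix is immediate: since the definition of $\infty$-weight path lets the endpoints of the path sequence vary, the first two conditions say exactly that the supremum of \emph{all} path weights in each periodic part is finite, which is the uniform bound you need; likewise the contracted paths are bounded simply because the third condition puts the $^+$ of the contracted matrix in $\Rmax^{n\times n}$ (no $\Phi$-type closure is involved, $\Phi$ being reserved for the infinite Gallai construction). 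With these two touch-ups your argument is complete, and the boundary-factor ordering ($L_\wN$ entering and $R_\wN$ leaving the negative part, $R_\wP$ entering and $L_\wP$ leaving the positive part) together with the shift-reversal relabeling that you flag work out exactly as you anticipate.
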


From the latter lemma, we can obtain an algorithm that checks the presence of $\infty$-weight paths and compute the supremal weight of paths in ultimately periodic graphs.
The algorithm has the same asymptotic time-complexity \zor{as} the one discussed in \Cref{su:bounds}.
This implies the following theorem.

\begin{theorem}\label{th:ultimately_periodic}
The problem $\infty$-WEIGHT ULTIMATE-PATH is solvable in strongly polynomial time complexity $O(n^5)$.
\end{theorem}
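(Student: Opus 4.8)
The plan is to read the theorem off Lemma~\ref{pr:infinite_weight_ultimately_periodic}, which already reduces the question \emph{``does $\graph_{\textup{U}}$ contain an $\infty$-weight path?''} to the verification of three conditions; what remains is to bound the cost of checking each of them, and to argue that the third one is well posed once the first two hold.

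First I would observe that the two sequences $\{\Pi_\wN(h)\}_{h\in\No}$ and $\{\Pi_\wP(h)\}_{h\in\No}$ are nothing but the sequence $\{\Pi(h)\}_{h\in\No}$ of Lemma~\ref{le:formula_Pi} instantiated, respectively, for the static graphs $\graph(R_\wN,C_\wN,L_\wN)$ and $\graph(L_\wP,C_\wP,R_\wP)$: one only needs to match the recursions, replacing $(L,C,R)$ by $(R_\wN,C_\wN,L_\wN)$ in the first case and keeping $(L_\wP,C_\wP,R_\wP)$ in the second. Consequently, by Lemma~\ref{le:inf_weight_paths_N_periodic} (equivalently, by running the procedure behind Theorem~\ref{th:polynomial_complexity}), deciding the first two bullets of Lemma~\ref{pr:infinite_weight_ultimately_periodic} costs $O(n^5)$ each: one computes $\Pi_\wN(0),\dots,\Pi_\wN(n^2+1)$, and likewise the $\Pi_\wP$'s, every step being a constant number of max-plus multiplications, additions and Kleene stars on $n\times n$ matrices, hence $O(n^3)$, so $O(n^2\cdot n^3)$ in total, aborting as soon as some intermediate precedence graph leaves $\nonegset$ (a positive-weight circuit, detected by the Floyd-Warshall algorithm). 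If either of the first two conditions fails, $\graph_{\textup{U}}$ has an $\infty$-weight path and we are done.

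Otherwise, the third condition must be checked. The mild subtlety here, and the only step that leans on the earlier development rather than on a one-line estimate, is that $\Pi_\wN(n^2)^*$ and $\Pi_\wP(n^2)^*$ must be finite matrices for that condition to make sense. This I would justify by noting that, when the first two conditions hold, Lemma~\ref{le:inf_weight_paths_N_periodic} gives $\Pi_\wN(n^2),\Pi_\wP(n^2)\in\Rmax^{n\times n}$, and that any circuit of $\graph(\Pi_\wN(n^2))$ (respectively $\graph(\Pi_\wP(n^2))$) unfolds to a circuit of the corresponding $\N$-periodic graph, which carries non-positive weight because that graph has no $\infty$-weight path; by Remark~\ref{re:elementary_paths_circuits} this yields $\graph(\Pi_\wN(n^2)),\graph(\Pi_\wP(n^2))\in\nonegset$, so the two Kleene stars exist and are computable by Floyd-Warshall in $O(n^3)$. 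The third bullet is then just the formation of $R_\wN\Pi_\wN(n^2)^*L_\wN\oplus C_\wT\oplus L_\wP\Pi_\wP(n^2)^*R_\wP$, a fixed number of $n\times n$ max-plus products and sums ($O(n^3)$), followed by a positive-weight-circuit test on the resulting precedence graph, again $O(n^3)$ via Floyd-Warshall. Summing up, the cost is dominated by the two $\Pi$-sequence computations, giving total running time $O(n^5)$; since all operations are additions and comparisons of the input entries, the bound is strongly polynomial.
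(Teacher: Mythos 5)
Your proposal is correct and follows essentially the same route as the paper, which proves this theorem only by remarking that Lemma~\ref{pr:infinite_weight_ultimately_periodic} yields an algorithm with the same asymptotic cost as the $\N$-periodic procedure of \Cref{su:bounds}: two runs of the $\Pi$-sequence computation (with the $(L,C,R)$ roles swapped for the negative part) at $O(n^5)$ each, plus $O(n^3)$ work for the third condition. Your extra observation that $\Pi_\wN(n^2)^*$ and $\Pi_\wP(n^2)^*$ are well defined (no positive-weight circuits in $\graph(\Pi_\wN(n^2))$, $\graph(\Pi_\wP(n^2))$ when the first two conditions hold) is a detail the paper leaves implicit, not a different argument.
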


\begin{example}
    Take again the ultimately periodic graph from \Cref{ex:ultimately_periodic_graph}.
    It is possible to verify that $\Pi_\wN(16) = \Pi_\wN(17)\in\Rmax^{4\times 4}$ and $\Pi_\wP(16) = \Pi_\wP(17)\in\Rmax^{4\times 4}$.
    Therefore, neither the negative nor the positive periodic part of the graph contain $\infty$-weight paths.
    However, we get
    \[
        \graph(R_\wN \Pi_\wN(16)^* L_\wN \oplus C_\wT\oplus L_\wP \Pi_\wP(16)^* R_\wP ) = 
        \graph\left(\begin{bmatrix} 
            \cdot &\cdot &\cdot &0 \\
            1 &\cdot &\cdot &\cdot \\
            1 &\cdot &\cdot &\cdot \\
            -1 &0 &\cdot &\cdot 
    \end{bmatrix} \right)
        \not\in \nonegset.
    \]
    Thus, the graph contains a positive-weight circuit (highlighted in \Cref{fi:ultimately_periodic_graph}) visiting at least one node with zero shift.
\end{example}

\section{Consistency of P-time event graphs}\label{se:P_TEGs}

In this section we use \Cref{th:gallai}, \Cref{th:polynomial_complexity}, and \Cref{th:ultimately_periodic} for the analysis of P-time event graphs.
We start by recalling their definition and dynamics.

\subsection{P-time event graphs}

\begin{definition}[\cite{khansa1996p}]\label{de:PTPN}
An ordinary \textit{P-time Petri net} is a 5-tuple $(\places,\transitions,\arcs,\marking,\iota)$, in which $\places$ is a finite set of places, $\transitions$ is a finite set of transitions, $\arcs\subseteq (\places\times \transitions)\cup (\transitions \times \places)$ is the set of arcs connecting places to transitions and transitions to places, and $\marking:\places\rightarrow\No$ and $\iota:\places\rightarrow\{[\tau^-,\tau^+]\cap \R \mid \tau^-\in \R_{\geq 0},\tau^+\in\R_{\geq 0}\cup\{\infty\}\}$ are two maps that associate to each place $p\in\places$, respectively, its initial number of tokens (or marking) $\marking(p)$, and a time interval $\iota(p)=[\tau_p^-,\tau_p^+]\cap \R$. 
\end{definition}

The dynamics of ordinary P-time Petri nets evolves as follows.
A transition $t\in\transitions$ is said to be enabled if either it has no upstream places (i.e., $\forall p\in\places$, $(p,t)\notin\arcs$) or each upstream place $p\in \places$ contains at least one token that has resided in $p$ for a time included in interval $[\tau_p^-,\tau_p^+]\cap\R$.
Note that this time interval is always closed, unless $\tau_p^+=+\infty$, in which case it is of the form $[\tau_p^-,+\infty)$.
When transition $t$ is enabled, it can fire, causing one token to be instantaneously removed from each upstream place and one token to be instantaneously added to each downstream place.
If a token resides for too long in $p$, violating the constraint imposed by interval $[\tau_p^-,\tau_p^+]\cap \R$, then the token is said to be \textit{dead}.

In this paper, we focus on a subclass of P-time Petri nets called \textit{P-time event graphs} (P-TEGs).
A P-TEG is a P-time Petri net where each place has exactly one upstream and one downstream transition (\ie $\forall p\in \places$, $\exists ! (t_{\textup{up}},t_{\textup{down}})\in\transitions\times \transitions$ such that $(t_{\textup{up}},p)\in\arcs$ and $(p,t_{\textup{down}})\in\arcs$).

\begin{example}[Heat treatment unit]\label{ex:heat_treatment}
\begin{figure}[h]
    \centering
    \resizebox{.8\textwidth}{!}{
    \begin{tikzpicture}[place/.append style={minimum size=6mm},node distance=.2cm and 1.5cm]

\node [transition,label=below:$t_1$] (x1) {};
\zor{
\node [place,tokens=1,label=below:{$p_{21}: [2,3]$}] (p21) [right= of x1] {};
}
\node [place,tokens=0,label=above:{$p_{12}: [0,+\infty)$}] (p12) [above= of p21] {};
\node [transition,label=below:$t_2$] (x2) [right=of p21] {};
\node (p22) [below= of x2] {};
\node [place,tokens=0,label=below:{$p_{32}: [0.5,+\infty)$}] (p32) [right= of x2] {};
\node [place,tokens=1,label=above:{$p_{23}: [0.5,+\infty)$}] (p23) [above= of p32] {};
\node [transition,label=below:$t_3$] (x3) [right=of p32] {};
\node [place,tokens=1,label=below:{$p_{33}: [0,4]$}] (p33) [right= of x3] {};
\zor{
\node [place,tokens=1,label=below:{$p_{31}: [6,+\infty)$}] (p31) [below= of p22] {};
}

\draw [arc] (x1) to (p21);
\draw [arc] (p21) to (x2);
\draw [arc] (x2) to (p32);
\draw [arc] (p32) to (x3);
\draw [arc] (x1.-90+20) to [out=-60,in=180] (p31);
\draw [arc] (p31) to [out=0,in=-180+60] (x3.-90-20);
\draw [arc] (x2.90+20) to [out=90+45,in=0] (p12);
\draw [arc] (p12) to [out=180,in=45] (x1.90-20);
\draw [arc] (x3.90+20) to [out=90+45,in=0] (p23);
\draw [arc] (p23) to [out=180,in=45] (x2.90-20);
\draw [arc] (x3) to [bend right=30] (p33);
\draw [arc] (p33) to [bend right=30] (x3);
\end{tikzpicture}
    }
    \caption{P-TEG representing the heat treatment unit of \Cref{ex:heat_treatment}.}\label{fi:TEG_furnace}
\end{figure}
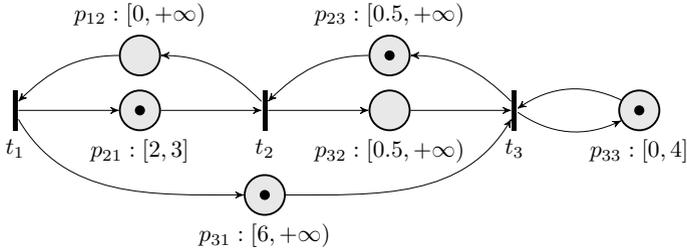
An example of P-TEG representing a heat treatment unit\zor{, consisting in a furnace and an autonomous guided vehicle (AGV),} is shown in \Cref{fi:TEG_furnace}.
Each firing of transition $t_1$ represents the arrival of a piece in \zor{the furnace.}
Every piece needs to be heated in \zor{the furnace} between $2$ and $3$ time units (see place $p_{21}$) to acquire the desired material properties.
\zor{The furnace has} capacity one (see place $p_{12}$), \ie \zor{it is} able to process one piece at a time.
The firing of transition $t_2$ indicates that one of the pieces has left \zor{the} furnace and has been loaded into \zor{the AGV}.
\zor{The AGV has} unitary capacity, and its purpose is to transport processed pieces to the next processing stage (not represented by the P-TEG) and then come back to the furnace.
A one-way journey of the AGV takes $0.5$ time units (see places $p_{32}$ and $p_{23}$).
The unloading of a piece from the AGV is indicated by the firing of transition $t_3$.
Customers require that one processed piece arrives at the next processing stage at most every $4$ time units (see place $p_{33}$).
Moreover, it is additionally required that pieces remain in the heat treatment unit (\ie furnace + AGV) at least for $6$ time units, in order to synchronize with other stages (see place $p_{31}$).
It is supposed that initially \zor{one part is} already being processed in the furnace, and that the AGV is ready to collect pieces from the furnace; this is reflected in the initial marking of the P-TEG.
\end{example}

We say that a P-TEG is \emph{consistent} if there exists an infinite sequence of firings of its transitions that does not cause any token death.
In order to study this property, it is convenient to state the dynamics of P-TEGs as a system of inequalities.
Before doing this, we need to clarify the role of initial conditions.

\subsection{Initial conditions}

Depending on the P-TEG application, different types of initial conditions may be considered.
In the following, we discuss two of them, called respectively loose and strict initial conditions.
The difference between them is in the time when initial tokens are allowed to contribute to the first firing of transitions.

In \emph{loose initial conditions}, initial tokens can contribute to the firing of transitions at any time, independently from the time interval associated to their initial place.
In particular, no time window constraint is imposed to initial tokens and, therefore, no initial token can become dead in its initial place.
Loose initial conditions are rather permissive, but they can be useful in some specific situations, for example to model the evolution of a system in stationary regime.

\begin{example}[Heat treatment unit, cont.]
If the P-TEG of \Cref{fi:TEG_furnace} works under loose initial conditions, then the first firing of transition $t_2$ can occur at any time $\tau\in\R$.
The first firing of $t_3$ may then not occur before time $\tau+0.5$; for instance, firing at time $\tau + 7$ would not cause token deaths.
\end{example}

\zor{In \emph{strict initial conditions}, instead, initial tokens are assumed to arrive in their initial places at an arbitrary \emph{initial time} $t_0\in\R$.
From this moment, a timer starts keeping track of the time the initial tokens spend in their respective initial places.
In particular, this means that, if the downstream transition of a place with an initial token and temporal upper bound $\tau^+\in\R_{\geq 0}$ does not fire before or at time $t_0+\tau^+$, then the initial token becomes dead.\footnote{\zor{In \cite{zorzenon2023switched}, strict initial conditions have been defined in another way using the concept of \emph{time tags}. It is easy to show that the two definitions are equivalent.}}}

\begin{example}[Heat treatment unit, cont.]\label{ex:heat_treatment_strict}
    \zor{Suppose that the P-TEG in \Cref{fi:TEG_furnace} is subject to strict initial conditions.
    Then, in order to avoid the death of the initial token in $p_{21}$, $t_2$ needs to fire for the first time between time $t_0+2$ and $t_0+3$.
Moreover, observe that the death of the initial token in $p_{33}$ is bound to occur, since the first firing of $t_3$ can only occur after $t_0+6$, because of the constraint associated to $p_{31}$, but should occur before $t_0+4$ to avoid the death of the initial token in $p_{33}$.
    This reasoning shows that the P-TEG, under strict initial conditions, is not consistent.}
\end{example}

\zor{Observe that, since P-TEGs are time-invariant systems, the choice of the value $t_0\in\R$ does not affect their dynamics.}

\subsection{Initial marking transformation}\label{su:marking_transformation}

\zor{We briefly recall that it} is always possible to transform a P-TEG into one in which the number of initial tokens in each place is at most $1$.
The transformation for P-TEGs with loose initial condition is described, \eg in \cite{paek2020analysis}\zor{, and it can be easily extended to the case of strict initial conditions.}

The transformation requires to add new places and transitions, and has the property of not modifying the allowed firing times of the transitions from the original P-TEG.
\zor{In particular, it} increases the number of transitions in the net from $|\transitions|$ to
\begin{equation}\label{eq:additional_transitions_transformation}
    |\transitions| + \sum_{p\in\places} \max\{ 0,\ \marking(p)-1\},
\end{equation}
where $\transitions$ and $\places$ are the set of transitions and places of the original P-TEG.

From now on, we will consider only P-TEGs with at most one \zor{initial} token per place, since any P-TEG can be transformed into a new one with this property.

\subsection{Dynamics as systems of inequalities}

We can now formulate the (nondeterministic) dynamics of a P-TEG $(\places,\transitions,\arcs,\marking,\iota)$ with $|\transitions| = n$ transitions and at most one initial token per place as the precedence constraints presented in the following.
Let us define matrices $A^0,A^1\in\Rmax^{n\times n}$ and $B^0,B^1\in\Rmin^{n\times n}$ such that, if there exists a place $p$ with initial marking $\mu\in\{0,1\}$, upstream transition $t_j$ and downstream transition $t_i$, then $A^\mu_{ij} = \tau_p^-$ and $B^\mu_{ij} = \tau_p^+$, otherwise $A^\mu_{ij} = -\infty$ and $B^\mu_{ij}=+\infty$.
Let $x_i(k)\in\R$ denote the time when transition $t_i\in\transitions$ fires for the $k$-th time, where $i\in\dint{1,n}$ and $k\in\N$.
Since the $(k+1)$-st firing of any transition $t_i$ cannot occur before the $k$-th one, it is natural to assume that $x_i$ is nondecreasing in $k$, i.e., $x_i(k+1) \geq x_i(k)$ for all $i\in\dint{1,n}$ and $k\in\N$.
The dynamics of a P-TEG under loose initial conditions can be described by the following system of infinitely many inequalities in infinitely many variables $x_i(k)$: for all $i,j\in\dint{1,n}$, $\mu\in\{0,1\}$, $k\in\N$,
\begin{equation}\label{eq:dynamics_PTEGs}
	\left\{
	\begin{array}{rcl}
	A^\mu_{ij} + x_j(k) \leq & x_i(k+\mu) & \leq B^\mu_{ij} + x_j(k),\\
	x_i(k) \leq & x_i(k+1). & 
	\end{array}
	\right.
\end{equation}
The meaning of the inequalities in the first line of~\eqref{eq:dynamics_PTEGs} is that, in order to satisfy the constraints imposed by the time interval $[\tau_p^-,\tau_p^+]$ associated to place $p$ with $\marking(p) = \mu$ initial tokens, the downstream transition $t_i$ of $p$ needs to fire for the $(k+\mu)$-th time at least $A_{ij}^\mu = \tau_p^-$ time units and at most $B_{ij}^\mu = \tau_p^+$ time units after the $k$-th firing of the upstream transition $t_j$ of $p$.
The second line of~\eqref{eq:dynamics_PTEGs} simply imposes the nondecreasingness condition on $x_i$.

In the case of strict initial conditions, we need to add inequalities to limit the first firing of transitions with upstream places containing initial tokens.
\zor{In particular, in addition to \eqref{eq:dynamics_PTEGs}, a P-TEG with strict initial conditions must satisfy}: for all $i,j\in\dint{1,n}$,
\begin{equation}\label{eq:dynamics_PTEGs_strict}
	\left\{
	\begin{array}{rcl}
    \zor{A^1_{ij}} + t_0 \leq & x_i(1) & \leq \zor{B^1_{ij}} + t_0,\\
	t_0 \leq & x_i(1). & 
	\end{array}
	\right.
\end{equation}

Note that matrices $A^0,A^1,B^0,B^1$ uniquely define a P-TEG with at most one initial token per place.
For this reason, they are called \emph{characteristic matrices} of the associated P-TEG.
We can now give a more formal definition of consistency: a P-TEG with loose (resp., strict) initial conditions is consistent if there exists an infinite trajectory $\{x_i(k)\in\R\mid i\in\dint{1,n},\ k\in\N\}$ that satisfies~\eqref{eq:dynamics_PTEGs} (resp., and \eqref{eq:dynamics_PTEGs_strict}).
Such a trajectory is then called consistent for the P-TEG, as it corresponds to an evolution of the marking in the P-TEG for which no token death occurs.

\subsection{Consistency with loose initial conditions}

In this section, we consider the following decision problem.

\noindent
P-TEG CONSISTENCY LOOSE\\
\textbf{Instance:} Matrices $A^0,A^1\in\Rmax^{n\times n}$, $B^0,B^1\in\Rmin^{n\times n}$.\\
\textbf{Question:} Is the P-TEG under loose initial conditions with characteristic matrices $A^0$, $A^1$, $B^0$, $B^1$ consistent?

To reduce the number of matrices involved, it is worth stating~\eqref{eq:dynamics_PTEGs} in the following, equivalent form
\begin{equation}\label{eq:simpler_dynamics}
	\begin{array}{l}
		\forall k\in\N,\\
		\forall i,j\in\dint{1,n},\\
	\end{array}
	\quad
	\left\{
	\begin{array}{rl}
		x_i(k) \geq & (L)_{ij} + x_j(k+1),\\
		x_i(k) \geq & (C)_{ij} + x_j(k),\\
		x_i(k+1) \geq & (R)_{ij} + x_j(k),
	\end{array}
	\right.
\end{equation}
where we used matrices $L,C,R\in\Rmax^{n\times n}$ defined by 
\[
 (L)_{ij} = -B^1_{ji}\quad   (C)_{ij} = \max\{A^0_{ij},\,-B^0_{ji}\},\quad 
	(R)_{ij} = 
    \begin{dcases}
        A^1_{ij} & \mbox{if } i\neq j,\\
        \max\{0,\, A^1_{ii}\} & \mbox{if }i = j,
    \end{dcases}.
\]
The equivalence between~\eqref{eq:dynamics_PTEGs} and~\eqref{eq:simpler_dynamics} can be easily verified using the fact that, for any $x,y,a,b\in\R$, the inequalities $x \geq a + y$ and $x\geq b + y$ hold if and only if $x \geq \max\{a + y,b + y\} = \max\{a,b\}+y$.
Even more compactly, we can write,
using the max-plus algebra,
\begin{equation}\label{eq:simple_dynamics_PTEGs}
	\forall k\in\N,
	\quad
	\left\{
	\begin{array}{rl}
		x(k) \geq & L \otimes x(k+1),\\
		x(k) \geq & C \otimes x(k),\\
		x(k+1) \geq & R \otimes x(k).
	\end{array}
	\right.
\end{equation}
Define $x_\N = [x(1)^\top\ x(2)^\top \cdots]^\top\in\R^\N$ and $M_{\N}\in\Rmax^{\N\times \N}$ as in~\eqref{eq:M_N}.
Then, \eqref{eq:simple_dynamics_PTEGs} can be rewritten as the precedence constraints $x_\N\geq M_\N\otimes x_\N$.
Therefore, the P-TEG corresponding to matrices $L,C,R$ is consistent if and only if $x_\N\geq M_\N\otimes x_\N$ admits a real solution $x_\N\in\R^\N$.
From \Cref{th:gallai}, the existence of a real solution is equivalent to $\graph(M_\N)$ not containing $\infty$-weight paths.
Recalling that $\graph(M_\N)$ is an $\N$-periodic graph, \Cref{th:polynomial_complexity} implies that it is possible to solve the problem P-TEG CONSISTENCY LOOSE in strongly polynomial time $O(n^5)$.

\begin{remark}
    According to the definition of P-TEG CONSISTENCY LOOSE, the input size depends on the number of transitions in the considered P-TEG after the marking transformation detailed in \Cref{su:marking_transformation}.
    Observe that the number of transitions added in the P-TEG by the transformation is, in the worst case, linear with respect to the number of initial tokens per place.
    Therefore, our discussion shows that consistency can be verified, in a P-TEG with arbitrary initial marking, in pseudo-polynomial time $O(\tilde{n}^5)$, where  $\tilde{n}$ is the number of transitions in the transformed P-TEG, evaluated in \eqref{eq:additional_transitions_transformation}.
\end{remark}

\begin{example}
\begin{figure}
	\centering
	\begin{tikzpicture}[node distance=.5cm and 1.5cm,>=stealth',bend angle=30]
\footnotesize

\node[transition,label=above:{$t_1$}] (t1) {};
\node[place,right=of t1,label=above:{$[0,\infty)$}] (p12) {};
\node[transition,right=of p12,label=above:{$t_2$}] (t2) {};
\node[place,tokens=1,left= 1cm of t1,label=above:{$[\beta,-\alpha]$}] (p11) {};
\node[place,tokens=1,right= 1cm of t2,label=above:{$[2,3]$}] (p22) {};

\draw (t1) edge[->] (p12);
\draw (p12) edge[->] (t2);
\draw (t1) edge[bend right,->] (p11);
\draw (p11) edge[bend right,->] (t1);
\draw (t2) edge[bend right,->] (p22);
\draw (p22) edge[bend right,->] (t2);

\end{tikzpicture}
	\caption{Example of P-TEG.}
	\label{fi:P-TEG_example}
\end{figure}
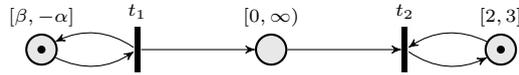

Consider the P-TEG represented in Figure~\ref{fi:P-TEG_example}.
Its characteristic matrices are
	\[A^0 = \begin{bmatrix}
	-\infty & -\infty \\
	0 & -\infty
\end{bmatrix},\quad
	A^1 = \begin{bmatrix}
	\beta & -\infty \\
	-\infty & 2
\end{bmatrix},\quad
	B^0 = \begin{bmatrix}\infty & \infty\\\infty & \infty\end{bmatrix},\quad
	B^1 = \begin{bmatrix}
	-\alpha & \infty \\
	\infty & 3
\end{bmatrix},\] 
where $\alpha\leq 0$, $\beta\geq 0$ are some parameters.
By applying the formulas for $L,C,R$, we get 
	\[L=-B^{1\top}=\begin{bmatrix}
	\alpha & -\infty\\-\infty & -3
	\end{bmatrix},\quad C = A^0\oplus (-B^{0\top})=
	\begin{bmatrix}
		-\infty & -\infty \\ 0 & -\infty
    \end{bmatrix},
	\]\[R = A^1\oplus E = \begin{bmatrix}
	\beta & -\infty\\-\infty & 2
\end{bmatrix}.\]
Note that these matrices coincide with the ones from \Cref{ex:simple_graph}.
Therefore, from the discussion of Example~\ref{ex:simple_graph_3}, we can conclude that the P-TEGs obtained by setting $\alpha=-1,\beta=2$ and $\alpha=-5,\beta=4$ are not consistent, and the one corresponding to values $\alpha=-1,\beta=1$ is consistent.
It can indeed be verified (\eg using the method discussed in the proof of \Cref{th:gallai}) that, for the latter choice of parameters, the trajectory
\[
    x(1) = \begin{bmatrix}
    0\\0
\end{bmatrix},\quad(\forall k\in\N)\
x(k+1) = \begin{bmatrix}
    1\\2
\end{bmatrix} + x(k)
\]
satisfies~\eqref{eq:simpler_dynamics}.
\end{example}

\zor{
    \begin{example}[Heat treatment unit, cont.]\label{ex:heat_treatment_loose}
    Consider the P-TEG \Cref{fi:TEG_furnace} under loose initial conditions.
    To verify whether it is consistent, we first obtain matrices $L,C,R$:
	\[
	    L = \begin{bmatrix}
            \cdot & -3 & \cdot\\
            \cdot & \cdot & \cdot\\
            \cdot & \cdot & -4
\end{bmatrix},\quad
        C = \begin{bmatrix}
            \cdot & 0 & \cdot\\
            \cdot & \cdot & \cdot\\
            \cdot & 0.5 & \cdot
\end{bmatrix},\quad
        R = \begin{bmatrix}
            0 & \cdot & \cdot\\
            2 & 0 & 0.5\\
            6 & \cdot & 0
\end{bmatrix} .%
\]
By applying the algorithm from \cref{se:N_periodic_graphs}, we can observe that the P-TEG is consistent, since the $\N$-periodic graph corresponding to the P-TEG (shown in \Cref{fi:graph_furnace_loose}) does not contain $\infty$-weight paths.
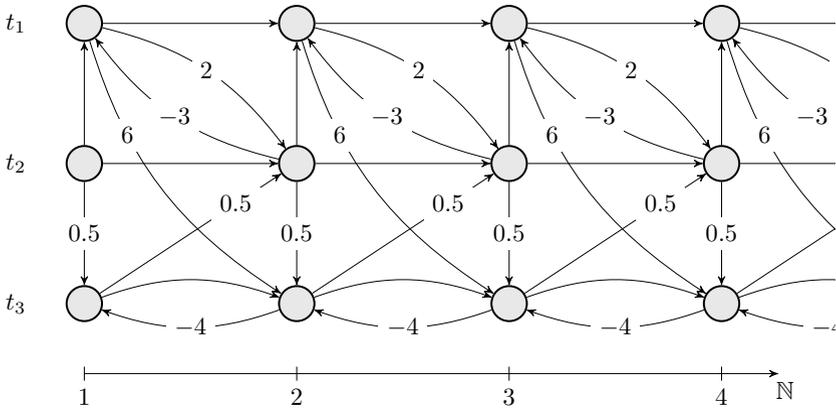
\begin{figure}[h]
    \centering
    \resizebox{\textwidth}{!}{
    \begin{tikzpicture}[node distance=2cm and 1cm,place/.append style={minimum size=.5cm,inner sep=0pt},on grid]

\foreach [count=\i, evaluate=\i as \zo using int(\i-3)] \z in {1,...,10}
{
\ifnum\z<3
\else
    {
    \coordinate (n1\z) at (3*\z,0) {};
    \coordinate (n2\z) at (3*\z,-2) {};
    \coordinate (n3\z) at (3*\z,-4) {};
    }
\fi
}
\clip ($(n33)!.5!(n34)-(.3,1.6)$) rectangle ($(n17)!.5!(n18)+(.1,.5)$);

\foreach [count=\i, evaluate=\i as \zo using int(\i-3)] \z in {1,...,10}
{
\ifnum\z<3
\else
    {
    \footnotesize
    \node [place] (n1\z) at (n1\z) {};
    \node [place] (n2\z) at (n2\z) {};
    \node [place] (n3\z) at (n3\z) {};
    }
    \ifnum\z>3
    {
        \draw [arc] (n2\z) to (n1\z);
        \draw [arc] (n2\z) to node [fill=white] {$0.5$} (n3\z);
    }
    \fi
\fi
}

\foreach[count=\i, evaluate=\i as \zz using int(\i+3)] \z in {3,...,9}
{
    \ifnum\z>3
        {
        \draw [arc] (n1\z) to (n1\zz);
        \draw [arc] (n2\z) to (n2\zz);
        \draw [arc] (n3\z) to [bend left=20] (n3\zz);

        \draw [arc] (n1\z) to [bend left=20] node [fill=white] {$2$} (n2\zz);
        \draw [arc] (n1\z) to [bend right=20] node [pos=.3,fill=white] {$6$} (n3\zz);
        \draw [arc] (n3\z) to node [pos=.75,fill=white] {$0.5$} (n2\zz);
        }
    \fi
}

\foreach[count=\i, evaluate=\i as \zz using int(\i+2)] \z in {4,...,10}
{
    \ifnum\z>4
        {
            \draw [arc] (n3\z) to [bend left=20] node [fill=white] {$-4$} (n3\zz);
            \draw [arc] (n2\z) to [bend left=20] node [fill=white] {$-3$} (n1\zz);
        }
    \fi
}

\draw [-stealth'] ($(n34)+(0,-1cm)$) to node[pos=1.01,below] {$\N$} ($(n37)+(.8,-1cm)$);
\draw ($(n34)+(0,-1.1cm)$) to node[pos=0,below] {$1$} ($(n34)+(0,-.9cm)$);
\draw ($(n35)+(0,-1.1cm)$) to node[pos=0,below] {$2$} ($(n35)+(0,-.9cm)$);
\draw ($(n36)+(0,-1.1cm)$) to node[pos=0,below] {$3$} ($(n36)+(0,-.9cm)$);
\draw ($(n37)+(0,-1.1cm)$) to node[pos=0,below] {$4$} ($(n37)+(0,-.9cm)$);

\node [left=.7cm of n14,anchor=east] {$t_1$};
\node [left=.7cm of n24,anchor=east] {$t_2$};
\node [left=.7cm of n34,anchor=east] {$t_3$};

\end{tikzpicture}
    }
    \caption{$\N$-periodic graph corresponding to the P-TEG in \Cref{fi:TEG_furnace} with loose initial conditions. When not otherwise specified, arcs have weight $0$.}\label{fi:graph_furnace_loose}
\end{figure}
\end{example}
}

\subsection{Consistency with strict initial conditions}

Let us consider now the following decision problem.

\noindent
P-TEG CONSISTENCY STRICT\\
\textbf{Instance:} Matrices $A^0,A^1\in\Rmax^{n\times n}$, $B^0,B^1\in\Rmin^{n\times n}$.\\
\textbf{Question:} Is the P-TEG under strict initial conditions with characteristic matrices $A^0$, $A^1$, $B^0$, $B^1$ consistent?

Proceeding as in the previous section, we determine that any consistent trajectory must satisfy \eqref{eq:simple_dynamics_PTEGs}.
In addition, in order to consider strict initial conditions we now restate \eqref{eq:dynamics_PTEGs_strict} as precedence constraints.
\zor{Define $C_\wT$ as the $n\times n$ matrix such that, for all $i,j\in\dint{1,n}$, $(C_\wT)_{ij} = 0$.}
Consider the inequalities
\begin{equation}\label{eq:simple_dynamics_PTEGs_strict}
	\left\{
	\begin{array}{rl}
        x(0) \geq & \zor{L} \otimes x(1),\\
        x(0) \geq & \zor{C_\wT} \otimes x(0),\\
            x(1) \geq & \zor{R} \otimes x(0).
	\end{array}
	\right.
\end{equation}
Since $(\zor{C_\wT})_{ij} = 0$ for all $i,j$, the second inequality imposes that $x_{i}(0) = x_j(0)$ for all $i,j$.
Given the arbitrariness of the initial time, we can interpret $x_i(0)$ as $t_0$ for all $i\in\dint{1,n}$.
Then, the first and third inequalities of \eqref{eq:simple_dynamics_PTEGs_strict} coincide with \eqref{eq:dynamics_PTEGs_strict}. 

Define $x_{\textup{U}} = [x(0)^\top\ x(1)^\top\ x(2)^\top \cdots]^\top\in\R^\N$ and $M_{\textup{U}}\in\Rmax^{\N\times \N}$ as
\[
    M_\textup{U} = 
    \begin{bmatrix}
        C_\wT & L & \cdot & \cdot & \cdots\\
        R & C & L & \cdot & \cdots\\
        \cdot & R & C& L & \cdots\\
        \cdot & \cdot & R & C& \cdots\\
        \vdots & \vdots & \vdots & \vdots & \ddots
    \end{bmatrix}\in\Rmax^{\N\times \N}.
\]
Then, the precedence constraints \eqref{eq:simple_dynamics_PTEGs} and \eqref{eq:simple_dynamics_PTEGs_strict} can be rewritten as $x_{\textup{U}}\geq M_{\textup{U}}\otimes x_{\textup{U}}$.
Observe that matrix \zor{$M_{\textup{U}}$} can be padded by adding infinitely many rows of $-\infty$'s on the top and infinitely many columns of $-\infty$'s on the left, so that the resulting matrix has the form of \eqref{eq:M_U}.
In particular, the corresponding precedence graph is an ultimately periodic graph of the form $\graph_\textup{U}(\graph_\wN,C_\wT,\graph_\wP)$, where $\graph_\wN = \graph(\mathcal{E},\mathcal{E},\mathcal{E})$,
and \zor{$\graph_\wP = \graph(L,C,R)$}. %
Using \Cref{th:ultimately_periodic}, we conclude that the problem P-TEG CONSISTENCY STRICT can be solved in strongly polynomial time $O(n^5)$.

\zor{
For the sake of convenience, the algorithm that verifies consistency in P-TEGs under strict initial conditions is summarized in \Cref{al:consistency_strict}.
\Cref{al:consistency_strict} can be adjusted so that it verifies consistency in P-TEGs under loose initial conditions by removing lines 2, 11, 13, and 14.

\begin{algorithm2e}
  \small
  \DontPrintSemicolon
  \caption{Verify consistency under strict initial conditions}
  \label{al:consistency_strict}
  \SetKwInOut{Input}{Input}
  \SetKwInOut{Output}{Output}
  \Indmm
  \Input{$A^0,A^1,B^0,B^1\in\Rmax^{n\times n}$}
  
  \Output{{\tt true} iff the P-TEG characterized by $A^0,A^1,B^0,B^1$ is consistent under strict initial conditions}
  \Indpp
  \BlankLine
  $L = -B^{1\top},\ C = A^0\oplus (-B^{0\top}),\ R = A^1\oplus E$\;
  Define $C_\wT$ as the $n\times n$ matrix with only $0$'s\;
  \If{$\graph(C)\not\in\nonegset$}{
      \Return {\tt false}
  }
  $\Pi_\wP(0) = C^*$\;
  \For{$h = 0$ to $n^2$}{
      \If{$\graph(L\Pi_\wP(h) R \oplus C)\not\in\nonegset$}{
          \Return {\tt false}
      }
      $\Pi_\wP(h+1) = (L\Pi_\wP(h) R \oplus C)^*$\;
      \If{$\Pi_\wP(h)=\Pi_\wP(h+1)$}{
            \If{$\graph(C_\wT\oplus L\Pi_\wP(h+1)R)\in\nonegset$}{
                \Return {\tt true}
            }\Else{
                \Return {\tt false}
            }
      }
  }
  \Return {\tt false}
\end{algorithm2e}
}

\begin{example}[Heat treatment unit, cont.]
    Consider the P-TEG \Cref{fi:TEG_furnace} \zor{under} strict initial conditions.
    \zor{As seen in \Cref{ex:heat_treatment_strict}, this P-TEG is not consistent; we can now formally verify this.}
    \zor{Matrices $L,C,R$ for this P-TEG have already been computed in \Cref{ex:heat_treatment_loose}; additionally, we define $C_\wT\in\R^{4\times 4}$ such that $(C_\wT)_{ij}=0$ for all $i,j$.}
The ultimately periodic graph associated to these matrices is represented in \Cref{fi:graph_furnace}.
As highlighted, the graph contains an elementary circuit with positive weight (equal to $2$); therefore, no consistent trajectory exists for the heat treatment unit.
\begin{figure}[h]
    \centering
    \resizebox{\textwidth}{!}{
    \begin{tikzpicture}[node distance=2cm and 1cm,place/.append style={minimum size=.5cm,inner sep=0pt},on grid]

\foreach [count=\i, evaluate=\i as \zo using int(\i-3)] \z in {1,...,10}
{
\ifnum\z<3
\else
    {
    \coordinate (n1\z) at (3*\z,0) {};
    \coordinate (n2\z) at (3*\z,-2) {};
    \coordinate (n3\z) at (3*\z,-4) {};
    }
\fi
}
\clip ($(n33)!.5!(n34)-(.3,1.6)$) rectangle ($(n17)!.5!(n18)+(.1,.5)$);

\foreach [count=\i, evaluate=\i as \zo using int(\i-3)] \z in {1,...,10}
{
\ifnum\z<3
\else
    {
    \footnotesize
    \node [place] (n1\z) at (n1\z) {};
    \node [place] (n2\z) at (n2\z) {};
    \node [place] (n3\z) at (n3\z) {};
    }
    \ifnum\z=4
        {
            \draw [-stealth'] (n1\z) to [bend left=10] (n2\z);
            \draw [-stealth'] (n2\z) to [bend left=10] (n3\z);
            \draw [-stealth'] (n2\z) to [bend left=10] (n1\z);
            \draw [-stealth'] (n3\z) to [bend left=10] (n2\z);
            \draw [-stealth'] (n1\z) to [bend right=20] (n3\z);
            \draw [-stealth'] (n3\z) to [bend left=30] (n1\z);
        }
    \fi
    \ifnum\z>4
    {
        \draw [arc] (n2\z) to (n1\z);
        \draw [arc] (n2\z) to node [fill=white] {$0.5$} (n3\z);
    }
    \fi
\fi
}

\foreach[count=\i, evaluate=\i as \zz using int(\i+3)] \z in {3,...,9}
{
    \ifnum\z>3
        {
        \draw [arc] (n1\z) to (n1\zz);
        \draw [arc] (n2\z) to (n2\zz);
        \draw [arc] (n3\z) to [bend left=20] (n3\zz);

        \draw [arc] (n1\z) to [bend left=20] node [fill=white] {$2$} (n2\zz);
        \draw [arc] (n1\z) to [bend right=20] node [pos=.3,fill=white] {$6$} (n3\zz);
        \draw [arc] (n3\z) to node [pos=.75,fill=white] {$0.5$} (n2\zz);
        }
    \fi
}

\foreach[count=\i, evaluate=\i as \zz using int(\i+2)] \z in {4,...,10}
{
    \ifnum\z>4
        {
            \draw [arc] (n3\z) to [bend left=20] node [fill=white] {$-4$} (n3\zz);
            \draw [arc] (n2\z) to [bend left=20] node [fill=white] {$-3$} (n1\zz);
        }
    \fi
}

\draw [-stealth'] ($(n34)+(0,-1cm)$) to node[pos=1.01,below] {$\No$} ($(n37)+(.8,-1cm)$);
\draw ($(n34)+(0,-1.1cm)$) to node[pos=0,below] {$0$} ($(n34)+(0,-.9cm)$);
\draw ($(n35)+(0,-1.1cm)$) to node[pos=0,below] {$1$} ($(n35)+(0,-.9cm)$);
\draw ($(n36)+(0,-1.1cm)$) to node[pos=0,below] {$2$} ($(n36)+(0,-.9cm)$);
\draw ($(n37)+(0,-1.1cm)$) to node[pos=0,below] {$3$} ($(n37)+(0,-.9cm)$);

\node [left=.7cm of n14,anchor=east] {$t_1$};
\node [left=.7cm of n24,anchor=east] {$t_2$};
\node [left=.7cm of n34,anchor=east] {$t_3$};

\draw [glow=myred,draw=none] (n14) to [bend right=20] (n35);
\draw [glow=myred,draw=none] (n35) to [bend left=20] (n34);
\draw [glow=myred,draw=none] (n34) to [bend left=30] (n14);

\end{tikzpicture}
    }
    \caption{Ultimately periodic graph corresponding to the P-TEG in \Cref{fi:TEG_furnace} with strict initial conditions. When not otherwise specified, arcs have weight $0$.}\label{fi:graph_furnace}
\end{figure}
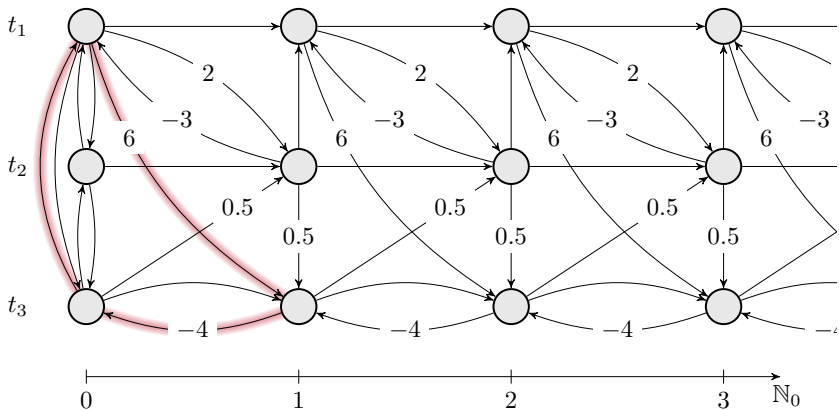
\end{example}

\zor{\begin{example}\label{ex:pteg_nonperiodic}
\Cref{fi:pteg_nonperiodic} shows an example of consistent P-TEG under strict initial conditions that does not admit any periodic trajectory in the sense of \cite{lee2005extended,munier2011graph}, \ie such that, for all $i\in\dint{1,4}$ there is a $\lambda_i\in\R$ such that, for all $k\in\N$, $x_i(k+1) = \lambda_i + x_i(k)$.
Indeed, it is easy to see that, assuming the initial time $t_0=0$, the only consistent trajectory has the form
\[
    x(1) = \begin{bmatrix}
        0\\1\\1\\3
    \end{bmatrix},\ 
    x(2) = \begin{bmatrix}
        3\\4\\2\\4
    \end{bmatrix},\ 
    \forall k\geq 1,\ x(k+2) = 4\otimes x(k).
\]
The consistency of the P-TEG can also be verified by following the algorithm described in \Cref{su:longest_path_ultimately_periodic}.
The matrices $L,C,R$ and $C_\wT$ associated with the P-TEG are
\[
    L = \begin{bmatrix}
        \cdot &\cdot &\cdot &\cdot \\
        \cdot &\cdot &-1 &\cdot \\
        \cdot &\cdot &\cdot &\cdot \\
        0 &\cdot &\cdot &\cdot \\
    \end{bmatrix},\quad
    C = \begin{bmatrix}
        \cdot &-1 &\cdot &\cdot \\
        1 &\cdot &\cdot &\cdot \\
        \cdot &\cdot &\cdot &-2 \\
        \cdot &\cdot &2 &\cdot \\
    \end{bmatrix},\quad
    R = \begin{bmatrix}
        0 &\cdot &\cdot &0 \\
        \cdot &0 &\cdot &\cdot \\
        \cdot &1 &0 &\cdot \\
        \cdot &\cdot &\cdot &0 \\
    \end{bmatrix}
\]
and $(C_\wT)_{ij} = 0$ for all $i,j$.
Consistency is then implied by the following facts: the sequence
\[
    \forall h\in\No\quad \Pi_\wP(0) = C^+,\quad 
    \Pi_\wP(h+1) = (L\Pi_\wP(h)^*R\oplus C)^+
\]
converges after $h=1$ step, and $\graph(C_\wT\oplus L\Pi_\wP(1)^*R) = \graph(
            C_\wT
    )\in\nonegset$.
\begin{figure}[t]
    \centering
    \begin{tikzpicture}[node distance=.2cm and 1cm,>=stealth',bend angle=30]
\footnotesize

\node[transition,label=above:{$t_1$}] (t1) {};
\node[place,above right=of t1,label=above:{$[1,1]$}] (p12) {};
\node[transition,right=of p12,label=above:{$t_2$}] (t2) {};
\node[place,tokens=1,right=of t2,label=above:{$[1,1]$}] (p23) {};
\node[transition,below right=of p23,label=above:{$t_3$}] (t3) {};
\node[place,below left=of t3,label=above:{$[2,2]$}] (p34) {};
\node[transition,left=of p34,label=above:{$t_4$}] (t4) {};
\node[place,tokens=1,left=of t4,label=above:{$[0,0]$}] (p41) {};

\draw [->] (t1.90-15) to [bend left] (p12);
\draw [->] (p12) to (t2);
\draw [->] (t2) to (p23);
\draw [->] (p23) to [bend left] (t3.90+15);

\draw [->] (t3.-90-15) to [bend left] (p34);
\draw [->] (p34) to (t4);
\draw [->] (t4) to (p41);
\draw [->] (p41) to [bend left] (t1.-90+15);

\end{tikzpicture}
    \caption{Example of P-TEG that, under strict initial conditions, admits only non-periodic trajectories.}\label{fi:pteg_nonperiodic}
\end{figure}
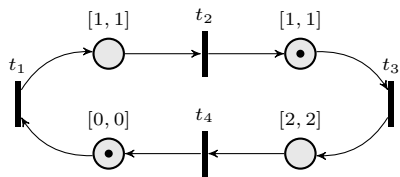
\end{example}}

\section{Conclusions}\label{se:conclusions}

In \Cref{th:gallai}, we showed that a system of infinitely many precedence constraints admits solutions if and only if the corresponding precedence graph does not contain an $\infty$-weight path.
This result was then applied to $\N$-periodic and ultimately periodic graphs.
We remark that there are much larger classes of graphs with interesting applications where \Cref{th:gallai} could be used.
Examples are linear graphs \cite{munier1996basic}, equivalent to weighted timed event graphs and synchronous data-flows, and multi-dimensional periodic graphs (such as the ones studied in \cite{hoefting1995minimum}), which model very large scale integration circuits.
Moreover, note that the fields of application of ultimately periodic graphs is not limited to P-TEGs with strict initial conditions.
For example, ultimately periodic graphs can be used for the analysis of switched max-plus linear-dual inequalities under ultimately periodic schedules \cite{zorzenon2023switched}.

To conclude, we point out an interesting connection between $\N$-periodic graphs (and therefore P-TEGs) and vector addition systems with states (VASSes).
It was first noticed by Gaubert \cite{gaubert2023personal}, and it inspired the technical lemmas in \Cref{se:N_periodic_graphs}.
An $\X$-VASS, where the set $\X\subseteq \Z^d$ is typically either $\Z^d$ or $\N^d$ for some $d\in\N$, is a dynamical systems represented by a finite directed graph $V = (\nodes,\arcs)$, in which the set of arcs $\arcs$ is a subset of $\nodes\times \Z^d \times \nodes$.
Given a configuration $(q,x)\in\nodes\times \X$ of an $\X$-VASS, we say that configuration $(q',x')$ is reachable in one step if $(q,x'-x,q')\in\arcs$ and $x'\in\X$.
In $\N^d$-VASSes the latter condition imposes components of vector $x'$, called counters, to be positive.
Reachability is then extended to multiple steps in an obvious way.
$\N^d$-VASSes are equivalent, up to a polynomial-time transformation, to untimed Petri nets.
The restriction on the positivity of counters in $\N^d$-VASSes translates into the impossibility for places in Petri nets to have a negative number of tokens.

There exists a vast theory on $\Z^d$- and, especially, $\N^d$-VASSes.
The central algorithmic problem for these systems is the reachability problem, asking whether it is possible to reach a configuration from an initial one.
While it is folklore that this problem is NP-complete in $\Z^d$-VASSes \cite{haase2014integer}, only recently it has been proven that the reachability problem in $\N^d$-VASSes is Ackermann-complete \cite{czerwinski2022reachability}.
The restriction to positive counters thus makes $\N^d$-VASSes much harder to deal with.

Let $V$ be a $\Z^2$-VASS in which $\arcs = \nodes\times (\Z\times \{-1,0,+1\})\times \nodes$, \ie each arc from node $q\in\nodes$ to $q'\in\nodes$ is associated to a pair $(w\ s)\in\Z\times\{-1,0,+1\}$.
It is possible to construct a $\Z$-periodic graph $\graph_\Z$ with integer arcs weight, such that the question "is configuration $(q',(w'\ s'))$ in $V$ reachable from $(q,(w\ s))$?" translates into "is there a path in $\graph_\Z$ from node $(q,s)$ to node $(q',s')$ of weight $w'-w$?".
Therefore, reachability problems (among others) can be equivalently stated in $\Z^2$-VASSes and $\Z$-periodic graphs.
In the same way, one can show that $\N$-periodic graphs are equivalent to $\X$-VASSes where $\X = \Z\times\N$, \ie where only one of the two counters is restricted to the set of positive integers.
This class of VASSes, which represents a sort of hybrid version between $\N^2$- and $\Z^2$-VASSes, has not been studied in the literature.

Despite the equivalence between these models, the research on infinite precedence graphs seems to have evolved independently from that on VASSes.
In particular, the VASS equivalent of the problem discussed in \Cref{se:N_periodic_graphs,se:ultimately_periodic_graphs} -- which asks, "given a configuration $(q,(w\ s))$, a state $q'$, and number $s'$, is the supremal value of $w'$, such that configuration $(q',(w',\ s'))$ is reachable from $(q,(w\ s))$, finite?" -- has not been considered.
Ultimately periodic graphs are another model with no corresponding in the VASS research.
Given the highlighted connection between VASSes and infinite precedence graphs, we expect many insights from one field to be transferable to the other.

\appendix

\section{Proof of \Cref{le:length_supremal_path_N_periodic_graph}}\label{se:prooffirstlemma}

\firstlemma*
\begin{proof}
Let $\hat{p}_{(1)}$ be a path in $\graph_\N$ from $(i_{(1)},1)$ to $(j_{(1)},1)$ with maximal weight and minimal right-shift, where $i_{(1)}=i$ and $j_{(1)}=j$.
Note that, because this weight is finite and there are finitely many different weights of arcs in $\graph_\N$, such \zor{a} path must exist, in the sense that the supremal weight of all paths from $(i_{(1)},1)$ to $(j_{(1)},1)$ must be attained by a (finite) path.
\zor{Since $\myshift(\myup(\hat{p}_{(1)}))=1$,} clearly $\hat{p}_{(1)}$ has zero left-shift.

If $\myrshift(\hat{p}_{(1)}) > 1$, this path can be factored into $\hat{p}_{(1)} = \hat{s}_{(1)}\hat{p}_{(2)}\hat{t}_{(1)}$\footnote{There may be several ways to factorize $\hat{p}_{(1)}$ in this way. The following arguments can be applied to all of them.}, where $\hat{p}_{(2)}$ is a nonempty path from $(i_{(2)},2)$ to $(j_{(2)},2)$ with zero left-shift and right-shift $\myrshift(\hat{p}_{(2)}) = \myrshift(\hat{p}_{(1)})-1$, $\hat{s}_{(1)}$ is a path from $(i_{(1)},1)$ to $(i_{(2)},2)$, and $\hat{t}_{(1)}$ is a path from $(j_{(2)},2)$ to $(j_{(1)},1)$. 

We now prove by contradiction that $(i_{(1)},j_{(1)})\neq (i_{(2)},j_{(2)})$.
If $(i_{(1)},j_{(1)}) = (i_{(2)},j_{(2)})$, then $\hat{s}_{(1)}$ and $\hat{t}_{(1)}$ are two pseudo-circuits, and their corresponding paths $s_{(1)}$ and $t_{(1)}$ in the static graph $\graph$ are circuits.
Let $p_{(1)}$ (resp., $p_{(2)}$) be the path in $\graph$ corresponding to $\hat{p}_{(1)}$ (resp., $\hat{p}_{(2)}$), and suppose that $\myweight(s_{(1)})+\myweight(t_{(1)})>0$. 
Then, we can construct a sequence of paths $s_{(1)}^{x}p_{(2)}t_{(1)}^{x}$ in $\graph$ from node $i_{(1)}$ to node $j_{(1)}$ with weight increasing in $x\in\N$.
Observe that all such paths have zero left-shift; thus, they induce paths in $\graph_\N$ from $(i_{(1)},1)$ to $(j_{(1)},1)$ with increasing weight.
But this implies that there exists an $\infty$-weight path in $\graph_\N$ from $(i_{(1)},1)$ to $(j_{(1)},1)$, which is in contradiction with our hypotheses.
    On the other hand, if $(i_{(1)},j_{(1)}) = (i_{(2)},j_{(2)})$ and $\myweight(s_{(1)})+\myweight(t_{(1)})\leq 0$, then $\myweight(p_{(2)})\geq \myweight(p_{(1)})$.
As $p_{(2)}$ has zero left-shift, it induces a path in $\graph_{\N}$ from $(i_{(1)},1)$ to $(j_{(1)},1)$ with larger or equal weight and smaller right-shift compared to $\hat{p}_{(1)}$.
    This, once again, contradicts our hypotheses, as $\hat{p}_{(1)}$ is supposed to be the path with maximal weight and minimal right-shift from $(i_{(1)},1)$ to $(j_{(1)},1)$.
    Therefore, $(i_{(1)},j_{(1)}) \neq (i_{(2)},j_{(2)})$.

    It should be clear, from the definition of $\hat{p}_{(1)}$, that $\hat{p}_{(2)}$ is the path with maximal weight and minimal right-shift among those with zero left-shift from $(i_{(2)},2)$ to $(j_{(2)},2)$.
    Therefore, like $\hat{p}_{(1)}$, if $\myrshift(\hat{p}_{(2)}) = \myrshift(\hat{p}_{(1)}) - 1> 1$, it can be factored into $\hat{p}_{(2)} = \hat{s}_{(2)}\hat{p}_{(3)}\hat{t}_{(2)}$, where $\hat{p}_{(3)}$ is a path from $(i_{(3)},3)$ to $(j_{(3)},3)$ with zero left-shift and right-shift $\myrshift(\hat{p}_{(3)}) = \myrshift(\hat{p}_{(2)})-1$, $\hat{s}_{(2)}$ is a path from $(i_{(2)},2)$ to $(i_{(3)},3)$, and $\hat{t}_{(2)}$ is a path from $(j_{(3)},3)$ to $(j_{(2)},2)$.
    We therefore can repeat the same reasoning as before and show that $(i_{(3)},j_{(3)})\neq (i_{(2)},j_{(2)})$.
    The same reasoning can also be used to show that $(i_{(1)},j_{(1)})\neq (i_{(3)},j_{(3)})$.

Suppose, by means of contradiction, that $\myrshift(\hat{p}_{(1)}) > n^2$.
At the $(n^2+1)$-st application of the above procedure, we have found $n^2+1$ different pairs $(i_{(1)},j_{(1)}),\dots,(i_{(n^2+1)},j_{(n^2+1)})$ of numbers from $\dint{1,n}$.
But this is impossible, since there are exactly $n^2$ different pairs of numbers from $\dint{1,n}$.
Therefore, $\myrshift(\hat{p}_{(1)})\leq n^2$.
\end{proof}

\section{Proof of \Cref{le:formula_Pi}}\label{se:proofsecondlemma}

\secondlemma*
\begin{proof}
For $h = 0$, $\Pi_{ij}(0)$ coincides with the largest weight of paths in $\graph_{\dint{1,1}}$ from node $(j,1)$ to node $(i,1)$ or, equivalently, with the largest weight of paths in $\graph(C)$ from node $j$ to $i$; therefore $\Pi(0) = C^+$.

Observe that $\Pi(h+1)$ is the top-left $n\times n$ block of matrix
\begin{equation*}\label{eq:A_1h_plus}
    (M_{\dint{1,h+2}})^+ = 
    \begin{bNiceArray}{c|ccccc}
    C & L & \cdot & \cdots & \cdot & \cdot\\\hline
    R & C & L & \cdots& \cdot & \cdot\\
    \cdot & R & C& \cdots& \cdot & \cdot\\
    \vdots & \vdots & \vdots & \ddots& \vdots & \vdots\\
    \cdot & \cdot & \cdot & \cdots & C & L\\
    \cdot & \cdot & \cdot & \cdots & R & C
\end{bNiceArray}^+\in\Rbar^{((h+2)n)\times ((h+2)n)}.
\end{equation*}
Recall (\eg from \cite{baccelli1992synchronization}) the following formula for the Kleene plus in block matrices:
\[
    \begin{bmatrix}
        a & b\\c & d
    \end{bmatrix}^+ = 
    \begin{bmatrix}
        (\zor{bd^*c}\oplus a)^+ & \times\\ \times& \times
    \end{bmatrix},
\]
where each "$\times$" indicates an element not of interest for our discussion.
By direct application of the formula to compute the top-left $n\times n$ block of matrix $(M_{\dint{1,h+2}})^+$, we get
\[
    \Pi(h+1) = \left(
    \begin{bmatrix}
        L & \cdot & \cdots & \cdot & \cdot
    \end{bmatrix}\otimes (M_{\dint{1,h+1}})^* \otimes
    \begin{bmatrix}
        R\\\cdot\\\vdots\\\cdot\\\cdot
    \end{bmatrix} \oplus C
    \right)^+ = (L (\Pi(h)\oplus E) R \oplus C)^+.
\]
The proof is concluded by observing that $\Pi(h)\oplus E = \Pi(h)^*$.
\end{proof}

\section{Proof of \Cref{le:inf_weight_paths_N_periodic}}\label{se:proofthirdlemma}

\thirdlemma*
\begin{proof}
We have already seen that, if either $\Pi(n^2+1)\neq \Pi(n^2)$ or $\Pi(n^2)\not\in \Rmax^{n\times n}$, then there are $\infty$-weight paths in $\graph_\N$.
    Therefore, we just need to prove direction "$\Leftarrow$".
    As discussed above, if $\Pi(n^2+1) = \Pi(n^2)\in\Rmax^{n\times n}$, then there are no $\infty$-weight paths from node $(i,1)$ to node $(j,1)$, for all $i,j\in\dint{1,n}$.
We now show, by contrapositive, that this is enough to conclude that $\graph_\N$ does not contain $\infty$-weight paths from any node to any other node.
    In other words, we prove that if $\infty$-weight paths exist, then there is an $\infty$-weight path between nodes with unitary shift. 

Assume that $\graph_\N$ contains $\infty$-weight paths.
    According to \Cref{re:elementary_paths_circuits}, there are two possibilities: either there exists a positive-weight circuit or an infinite sequence of elementary paths with infinite limit-weight (or both).
    Suppose that a positive-weight circuit exists.
This circuit corresponds to a circuit in $\graph$ with positive weight and zero shift which, in turn, induces a circuit with positive weight visiting at least one node of the form $(i,1)$.
    Therefore, there exists a circuit with positive weight starting from node $(i,1)$.
    Then, as discussed above, either $\Pi(n^2)\not\in \Rmax^{n\times n}$ if the circuit has right-shift at most $n^2$ or $\Pi(n^2+1)\neq \Pi(n^2)$ otherwise.

    Suppose now that no positive-weight circuit exists.
    Then, there must be a sequence of elementary paths $p^{(1)},p^{(2)},\dots$ with infinite limit-weight from node $(i,h)$ to node $(j,k)$, for some $i,j\in\dint{1,n}$ and $h,k\in\N$.
    Let $h\leq k$; the proof for the case $h>k$ is analogous and therefore will not be explicitly addressed.
    Note that, a path from the sequence cannot visit nodes with shift $k$ for more than $n$ times, otherwise it would not be elementary.
    Therefore, any path $p^{(\ell)}$ from the sequence can be factored into
    \[
        p^{(\ell)} = w_1^{(\ell)} t_1^{(\ell)} w_2^{(\ell)} t_2^{(\ell)}\cdots w_{n-1}^{(\ell)} t_{n-1}^{(\ell)} w_n^{(\ell)},
    \]
    where $w_1^{(\ell)},\ldots,w_n^{(\ell)}$ are (possibly empty) paths visiting only nodes with shifts at most $k$, and $t_1^{(\ell)},\ldots,t_{n-1}^{(\ell)}$ are paths visiting only nodes with shift at least $k$.
    Moreover, since the number of nodes with shift at most $k$ is finite and each path $w_1^{(\ell)},\ldots,w_n^{(\ell)}$ is elementary, by the infinite pigeonhole principle there must be an infinite subsequence of paths $\tilde{p}^{(1)},\tilde{p}^{(2)},\dots$ with infinite limit-weight and with factorization
    \[
        \tilde{p}^{(\ell)} = w_1 t_1^{(\ell)} w_2 t_2^{(\ell)}\cdots w_{n-1} t_{n-1}^{(\ell)} w_n,
    \]
    where $w_1,\dots,w_n$ are independent on $\ell$.
    Since the weights of $w_1,\dots,w_n$ are constant in $\ell$, then there must exists at least one index $h\in\dint{1,n-1}$ such that the weight of $t^{(\ell)}_h$ has infinite weight for $\ell\rightarrow +\infty$.
    For the same reason, also the source and target nodes of $t^{(\ell)}_h$ are independent on $\ell$; we denote them by $\myup(t_h^{(\ell)})=(i',k)$ and $\mydown(t_h^{(\ell)})=(j',k)$.
    Thus, the sequence $t_h^{(1)},t_h^{(2)},\dots$ forms an $\infty$-weight path from $(i',k)$ to $(j',k)$.
    By definition, each path $t^{(\ell)}_h$ has zero left-shift.
Therefore, it can be translated so that its source node has shift $1$; in this way, we have obtained an $\infty$-weight path from $(i',1)$ to $(j',1)$ in $\graph_\N$.
\end{proof}

\section{Proof of \Cref{pr:infinite_weight_ultimately_periodic}}\label{se:prooffourthlemma}

\fourthlemma*
\begin{proof}
    Both directions of the proof are done by contrapositive.

    "$\Rightarrow$":
The graph $\graph_\N(R_\wN,C_\wN,L_\wN)$ is equivalent (up to renaming its nodes) to the negative periodic part of $\graph_\textup{U}$. 
The graph $\graph_\N(L_\wP,C_\wP,R_\wP)$, instead, coincides with the positive periodic part of $\graph_\textup{U}$.
Clearly, if one of these two graphs contains $\infty$-weight paths, then there are $\infty$-weight paths also in $\graph_{\textup{U}}$.
    From the discussion in \Cref{su:bounds}, we can see that 
    \[
    ((R_\wN \Pi_\wN(n^2)^* L_\wN \oplus C_\wT\oplus L_\wP \Pi_\wP(n^2)^* R_\wP )^+)_{ji}
    \]
    is the supremal weight of all paths starting from node $(i,0)$ and ending in $(j,0)$ with left-shift at least $-n^2$ and right-shift at most $n^2$.
    Therefore, if this value is $+\infty$, there exist\zor{s} an $\infty$-weight path in $\graph_\textup{U}$.

"$\Leftarrow$": we start by showing that, if $\graph_\textup{U}$ contains $\infty$-weight paths but $\graph_\N(R_\wN,C_\wN,L_\wN)$ and $\graph_\N(L_\wP,C_\wP,R_\wP)$ do not contain $\infty$-weight paths, then there must be a positive-weight circuit in $\graph_\textup{U}$ passing through node $(i,0)$ for some $i\in\dint{1,n}$.
Observe that, with these conditions, there cannot be a positive-weight circuit that does not visit any node with zero shift, otherwise either $\graph_\N(R_\wN,C_\wN,L_\wN)$ or $\graph_\N(L_\wP,C_\wP,R_\wP)$ contains $\infty$-weight paths.
Therefore, according to \Cref{re:elementary_paths_circuits}, we want to show that with these hypotheses there cannot be a sequence of elementary paths in $\graph_\textup{U}$ with infinite limit-weight.

    Suppose, by means of contradiction, that such a sequence exists.
Then, infinitely many paths of the sequence must visit at least one node with zero shift, otherwise either $\graph_\N(R_\wN,C_\wN,L_\wN)$ or $\graph_\N(L_\wP,C_\wP,R_\wP)$ contains $\infty$-weight paths.
    On the other hand, no path can visit a node with zero shift for more than $n$ times, otherwise it is not elementary.
    By the infinite pigeonhole principle, we can therefore find a subsequence of elementary paths $p^{(1)},p^{(2)},\dots$ that can be factored as 
    \[
        p^{(k)} = t^{(k)}_1w_1t^{(k)}_2w_2\cdots t^{(k)}_nw_nt^{(k)}_{n+1},
    \]
    where each $t^{(k)}_j$ is a (possibly empty) path visiting only nodes with non-zero shift, and each $w_j$ is a path (independent on $k$) visiting only nodes with shift $s\in\{-1,0,+1\}$.
    Since the sum of the weight of paths $w_1,\dots,w_n$ is independent on $k$, if the weight of $p^{(k)}$ tends to $+\infty$ for $k\rightarrow +\infty$, then there must exist at least one index $j\in\dint{1,n+1}$ such that the weight of $t^{(k)}_j$ tends to $+\infty$ for $k\rightarrow +\infty$.
But this implies that either $\graph_\N(R_\wN,C_\wN,L_\wN)$ or $\graph_\N(L_\wP,C_\wP,R_\wP)$ contains $\infty$-weight paths, which contradicts our hypotheses.
Therefore, there must exist a positive-weight circuit in $\graph_\textup{U}$ visiting at least one node with zero shift.
    
    Note that, using the formulas from \Cref{su:bounds}, the supremal weight of all circuits visiting node $(i,0)$ with left-shift at least $-k$ and right-shift at most $k$ (where $k\in\N$) is
    \[
        ((R_\wN \Pi_\wN(k)^* L_\wN \oplus C_\wT\oplus L_\wP\Pi_\wP(k)^* R_\wP )^+)_{ii}.
    \]
    Therefore, the supremal weight of all circuits visiting node $(i,0)$ (without conditions on their left- and right-shift) is
    \[
        \lim_{k\rightarrow +\infty}
        ((R_\wN \Pi_\wN(k)^* L_\wN \oplus C_\wT\oplus L_\wP\Pi_\wP(k)^* R_\wP )^+)_{ii}.
    \]
We have already seen in \Cref{su:bounds} that if $\graph_\N(L_\wP,C_\wP,R_\wP)$ does not contain $\infty$-weight paths, then $\lim_{k\rightarrow \infty} \Pi_\wP(k)= \Pi_\wP(n^2)$.
A similar reasoning shows that, if $\graph_\N(R_\wN,C_\wN,L_\wN)$ does not contain $\infty$-weight paths, then $\lim_{k\rightarrow \infty} \Pi_\wN(k)= \Pi_\wN(n^2)$.
We conclude that, if $\graph_\textup{U}$ contains $\infty$-weight paths, then either one of the graphs $\graph_\N(L_\wP,C_\wP,R_\wP)$ and $\graph_\N(R_\wN,C_\wN,L_\wN)$ contains $\infty$-weight paths, or there exists an entry in the diagonal of
    \[
        (R_\wN \Pi_\wN(n^2)^* L_\wN \oplus C_\wT\oplus L_\wP\Pi_\wP(n^2)^* R_\wP )^+
    \]
    that is equal to $+\infty$, \ie $\graph(R_\wN \Pi_\wN(n^2)^* L_\wN \oplus C_\wT\oplus L_\wP\Pi_\wP(n^2)^* R_\wP )\not \in \nonegset$.
\end{proof}

\section*{Acknowledgment}

We thank St{\'e}phane Gaubert and Laurent Hardouin for the fruitful discussions on the topics covered in this paper.
\zor{We also thank the anonymous reviewers for their suggestions, which helped to improve an earlier version of the manuscript.}

This work was funded by the Deutsche Forschungsgemeinschaft (DFG, German Research Foundation), Projektnummer RA 516/14-1.
Partially supported by Deutsche Forschungsgemeinschaft (DFG, German Research Foundation) under Germany's Excellence Strategy -- EXC 2002/1 "Science of Intelligence" -- project number 390523135.

\section*{Conflict of interest}

The authors have no conflicts of interest to declare that are relevant to this article.

\bibliography{references.bib}

\end{document}